\documentclass[aps,prl,twocolumn,superscriptaddress,floatfix,nofootinbib,showpacs,longbibliography]{revtex4-1}

\usepackage[utf8]{inputenc}  
\usepackage[T1]{fontenc}     
\usepackage[british]{babel}  
\usepackage[sc,osf]{mathpazo}\linespread{1.05}  
\usepackage[table]{xcolor}
\usepackage[scaled=0.86]{berasans}  
\usepackage[colorlinks=true, citecolor=blue, urlcolor=blue]{hyperref}
\makeatletter
\newcommand{\setword}[2]{%
  \phantomsection
  #1\def\@currentlabel{\unexpanded{#1}}\label{#2}%
}
\makeatother
\usepackage{graphicx} 
\usepackage[babel]{microtype}  
\usepackage{amsmath,amssymb,amsthm,bm,amsfonts,mathrsfs,bbm} 

\usepackage{xspace}  
\usepackage{pgf,tikz}
\usepackage{xcolor}
\usepackage{multirow}
\usepackage{array}
\usepackage{bigstrut}
\usepackage{braket}
\usepackage{color}
\usepackage{natbib}
\usepackage{multirow}
\usepackage{mathtools}
\usepackage{float}
\usepackage[caption = false]{subfig}
\usepackage{xcolor,colortbl}
\usepackage{color}
\newcommand{\Tr}{\operatorname{Tr}}

\newcommand{\be}{\begin{equation}}
\newcommand{\ee}{\end{equation}}
\newcommand{\ba}{\begin{eqnarray}}
\newcommand{\ea}{\end{eqnarray}}
\newcommand{\ketbra}[2]{|#1\rangle \langle #2|}

\newtheorem{theorem}{Theorem}

\newtheorem{observation}{Observation}

\newtheorem{remark}{Remark}






\def\>{\rangle}
\def\<{\langle}







\usepackage{centernot}
\usepackage{subfig}

\begin{document}

\title{Thermodynamic Signatures of Genuinely Multipartite Entanglement}

\author{Samgeeth Puliyil}
\affiliation{School of Physics, IISER Thiruvananthapuram, Vithura, Kerala 695551, India.}

\author{Manik Banik}
\affiliation{Department of Theoretical Sciences, S.N. Bose National Center for Basic Sciences, Block JD, Sector III, Salt Lake, Kolkata 700106, India.}

\author{Mir Alimuddin}
\affiliation{School of Physics, IISER Thiruvananthapuram, Vithura, Kerala 695551, India.}
\affiliation{Department of Theoretical Sciences, S.N. Bose National Center for Basic Sciences, Block JD, Sector III, Salt Lake, Kolkata 700106, India.}

\begin{abstract}
Theory of bipartite entanglement shares profound similarities with thermodynamics. In this letter we extend this connection to multipartite quantum systems where entanglement appears in different forms with genuine entanglement being the most exotic one. We propose thermodynamic quantities that capture signature of genuineness in multipartite entangled states. Instead of entropy, these quantities are defined in terms of energy -- particularly the difference between global and local extractable works (ergotropies) that can be stored in quantum batteries. Some of these quantities suffice as faithful measures of genuineness and to some extent distinguish different classes of genuinely entangled states. Along with scrutinizing properties of these measures we compare them with the other existing genuine measures, and argue that they can serve the purpose in a better sense. Furthermore, generality of our approach allows to define suitable functions of ergotropies capturing the signature of $k$-nonseparability that characterizes qualitatively different manifestations of entanglement in multipartite systems.    
\end{abstract}


\maketitle	
{\it Introduction.--} Thermodynamics is a framework that deals with the ordering of abstract states connected by abstract processes. Due to their meta-theoretic character, thermodynamic laws have withstood several paradigm shifting scientific revolutions and evolved to encompass general relativity and quantum mechanics. Although it started as a phenomenological theory of heat engines, a rigorous axiomatic framework, motivated by the seminal work of Carathéodory \cite{Caratheodory1907}, has been formulated initially by Giles \cite{Giles1964}, and more recently by Lieb and Yngvason \cite{Lieb1999,Lieb2013,Lieb2014}. As thermodynamics has deep-rooted connection with information theory \cite{Landauer1961,Bennett1973,Landauer1988,Toyabe2010,Brut2012,Cottet2017}, its axiomatic formulation finds profound similarities with the theory of quantum entanglement \cite{Schrdinger1935,Horodecki2009,Ghne2009}. Like the second law of thermodynamics that prohibits complete conversion of heat (disordered form of energy) to work (ordered form of energy) in a cyclic process, the theory of entanglement is also governed by a no-go that forbids creation of entanglement among spatially separated quantum systems under local operations and classical communication (LOCC). This qualitative analogy goes even deeper -- in accordance with the thermodynamic reversibility, the inter-conversion among pure bipartite entangled states is reversible under LOCC in asymptotic limit \cite{Bennett1996(1),Bennett1996(2),Vedral2002(2)}. Furthermore, the rate of inter-conversion is quantitatively determined by von Neumann entropy, which has direct relation with the thermodynamic entropy \cite{Popescu1997,Horodecki1998,Vedral2002,Weilenmann2016}. For such states, the von Neumann entropy of the reduced marginal, in fact, serves as the unique quantifier (measure) of entanglement \cite{Vedral2002(2)}. Although the reversibility of entanglement theory breaks down for mixed states \cite{Horodecki1998(2),Horodecki2002,Vidal2002,Lami2021}, it does not cancel the analogy between entanglement theory and thermodynamics; rather, it acts as a constitutive element \cite{Horodecki2002}. 

In this letter we ask the question how far the analogy between thermodynamics and entanglement theory can go when multipartite systems are considered. This question is quite pertinent, since for such systems classification of quantum states becomes much richer as compared to the separable vs entangled dichotomy of bipartite scenario. Depending on how different subsystems are correlated with each other, qualitatively different classes of entangled states are possible when more than two subsystems are involved. Among these, the most exotic one is the genuinely entangled state that first appears in the seminal Greenberger–Horne–Zeilinger (GHZ) Version of the Bell test \cite{Greenberger1989,Greenberger1990}. Subsequently, it has been shown that genuinely entangled states can also be of different types \cite{Dur2000,Acin2000,Verstraete2002}. Identification, characterization, and quantification of genuine entanglement are of practical relevance, as they find several applications \cite{Hillery1999,Leibfried2004,Giovannetti2004,Zhao2004,Gottesman1999,Agrawal2019,Rout2019,Rout2021,Bhattacharya2021}, and accordingly different quantifiers have been suggested  \cite{Carvalho2004,Coffman2000,Miyake2003,Osterloh2005,Sen(De)2010,Jungnitsch2011,Xie2021}. 

In this letter, we propose thermodynamic quantities that capture signature of genuineness in multipartite states. Unlike the bipartite pure states, where entanglement is captured through entropic quantity, our proposed measures are defined in terms of internal energy of the system. In particular, the {\it ergotropic gap} -- difference between the extractable works from a composite system under global and local unitary operations, respectively -- plays a crucial role to define these measures. We show that, suitably defined functions of this quantity -- minimum ergotropic gap, average ergotropic gap, ergotropic fill, and ergotropic volume -- can serve as good measures of genuineness for multipartite systems. In fact, one can come up with measures that capture the notion of $k$-separability for arbitrary multipartite systems \cite{Horodecki2009}. Apart from theoretical curiosity these measures are of special interest as there   are several proposals for quantum batteries to store ergotropic work \cite{Andolina2019,Rossini2020,Monsel2020,Strambini2020,Opatrny2021,Joshi2021,Cruz2022}. By comparing strengths and weaknesses of these newly proposed measures with the other existing genuine measures, we show that the ergotropic measures show superiority. 

{\it Preliminaries.--} A pure state of a multipartite system consisting $n$ subsystems is described by a vector $\ket{\psi}_{A_1\cdots A_n}\in\bigotimes_{i=1}^n\mathcal{H}_{A_i}$, where $\mathcal{H}_{A_i}$ be the Hilbert space associated with $i^{th}$ subsystem, and for finite dimensional cases they are isomorphic to complex Euclidean space $\mathbb{C}^{d_i}$. Such a state is called $k$-separable if it can be expressed as $|\psi\rangle^{[k]}= |\psi\rangle_{X_1}|\psi\rangle_{X_2}\cdots|\psi\rangle_{X_k}$, where $X_j$s' are nonzero disjoint partitioning of $n$ parties, {\it i.e.}, $X_j\cap X_{j^\prime}=\emptyset~\&~\cup_{j=1}^kX_j=\{A_1,\cdots, A_n\}$. A mixed state $\rho\in\mathcal{D}(\otimes_{i=1}^n\mathcal{H}_{A_i})$ is called $k$-separable if it can be expressed as the convex mixture of $k$-separable pure states, {\it i.e.} $\rho^{[k]}=\sum_lp_l\ket{\psi_l}^{[k]}\bra{\psi_l}$; $\mathcal{D}(\star)$ denotes the set of density operators acting on the Hilbert space. Note that, partitionings of the pure states appearing in the convex decomposition of $\rho^{[k]}$ need not to be fixed. Denoting the convex set of $k$-separable states as $\mathcal{S}^{[k]}$, we have the set inclusion relations $\mathcal{S}^{[n]} \subsetneq \mathcal{S}^{[n-1]} \subsetneq \cdots \subsetneq \mathcal{S}^{[2]} \subsetneq \mathcal{D}$. The set of states $\mathcal{D}\setminus \mathcal{S}^{[k]}$ are called $k$-nonseparable, and $2$-nonseparable states, {\it i.e.} states in $\mathcal{D}\backslash \mathcal{S}^{[2]}$, are also called $n$-partite genuinely entangled. To be a good quantifier or measure of $k$-nonseparability, a function $E^{[k]}:\mathcal{D}\to\mathbb{R}_{\geq0}$ is supposed to satisfy the following properties: (i) $E^{[k]}(\rho)=0,~\forall~\rho\in\mathcal{S}^{[k]}$; (ii) $E^{[k]}(\rho)>0,~\forall~\rho\in\mathcal{S}\setminus\mathcal{S}^{[k]}$; (iii) $E^{[k]}(\sum_ip_i\rho_i)\le\sum_ip_iE^{[k]}(\rho_i),$ where $\{p_i\}$ be a probability distribution and $\rho_i\in\mathcal{S}^{[k]}$; (iv) $E^{[k]}(\rho)\geq E^{[k]}(\sigma)$ whenever the state $\sigma$ can be obtained from the state $\rho$ under LOCC operation with all the subsystem being spatially separated. Condition (ii) can be relaxed as $E^{[k]}(\rho)\geq0$, and in such a case the measure is not faithful. For $k=2$ we will use the notation $E^{[k]}\equiv E^G$, an it captures genuine entanglement. Two such measures $E$ and $E^\prime$ are said to be equivalent whenever $E(\rho)\ge E(\sigma)\Leftrightarrow E^\prime(\rho)\ge E^\prime(\sigma)$ for all pairs of $\rho,\sigma$. For a detailed review of such measures we refer to the works \cite{Vedral1997,Ma2011,Hong2012,Guo2020,Guo2022}. In the following we rather briefly recall the concept of ergotropy that will be relevant for us to define thermodynamic measures of multipartite entanglement.

The study of work extraction from an isolated quantum system under a cyclic Hamiltonian process dates back to late 1970's \cite{Pusz1978,Lenard1978}. The aim is to transform a quantum system from a higher to a lower internal energy state, extracting the difference in internal energy as work. The optimal work, termed as ergotropy, is obtained when the system evolves to the passive state \cite{Allahverdyan2004}. Given the system Hamiltonian $H=\sum^d_{i=1}e_i\ket{\epsilon_i}\bra{\epsilon_i}$, with $\ket{\epsilon_i}$ being energy eigenstate having energy eigenvalue $e_i$, and given the initial state $\rho\in\mathcal{D}(\mathbb{C}^d)$ of the  system, ergotropic work extraction is given by,
\begin{align*}
W_e(\rho)= \Tr(\rho H)-\min_{U}\Tr(U\rho U^{\dagger}H)=\Tr[(\rho -\rho^p) H]
\end{align*}
where passive state $\rho^p$, being the minimum energetic state, takes the form $\rho^p=\sum^d_{i=1}\lambda_i|\epsilon_i\rangle\langle \epsilon_i|$, with $\lambda_i\geq\lambda_{i+1}$ where $e_i \leq e_{i+1}~\forall~i \in \{1,\cdots,d\}$. During the recent past study of ergotropy receives renewed interest for multipartite quantum systems \cite{Aberg2013,Skrzypczyk2014,Perarnau2015,Skrzypczyk2015,Mukherjee16,Francica2017,Alimuddin2019,Bernards2019,Alimuddin2020,Alimuddin2020(2),Touil2021,Francica2022}. For such systems different kind of ergotropic works can be extracted. For instance, from a bipartite state $\rho_{AB}\in\mathcal{D}(\mathbb{C}^{d_A}\otimes\mathbb{C}^{d_B})$ one can extract global and local ergotropic works $W_e^{g}(\rho_{AB})$ and $W_e^{l}(\rho_{AB})$ respectively by applying joint unitaries and product unitaries on the system. The difference of these two ergotropic works is termed as ergotropic gap $\Delta_{A|B}(\rho_{AB})$, which for pure bipartite states has been established as an {\it independent LOCC monotone} than von Neumann entropy, and furthermore it has been shown to satisfy the criteria of a bipartite entanglement measure \cite{Alimuddin2019,Alimuddin2020,Alimuddin2020(2)}.
\begin{figure}
\centering
\includegraphics[width=0.45\textwidth]{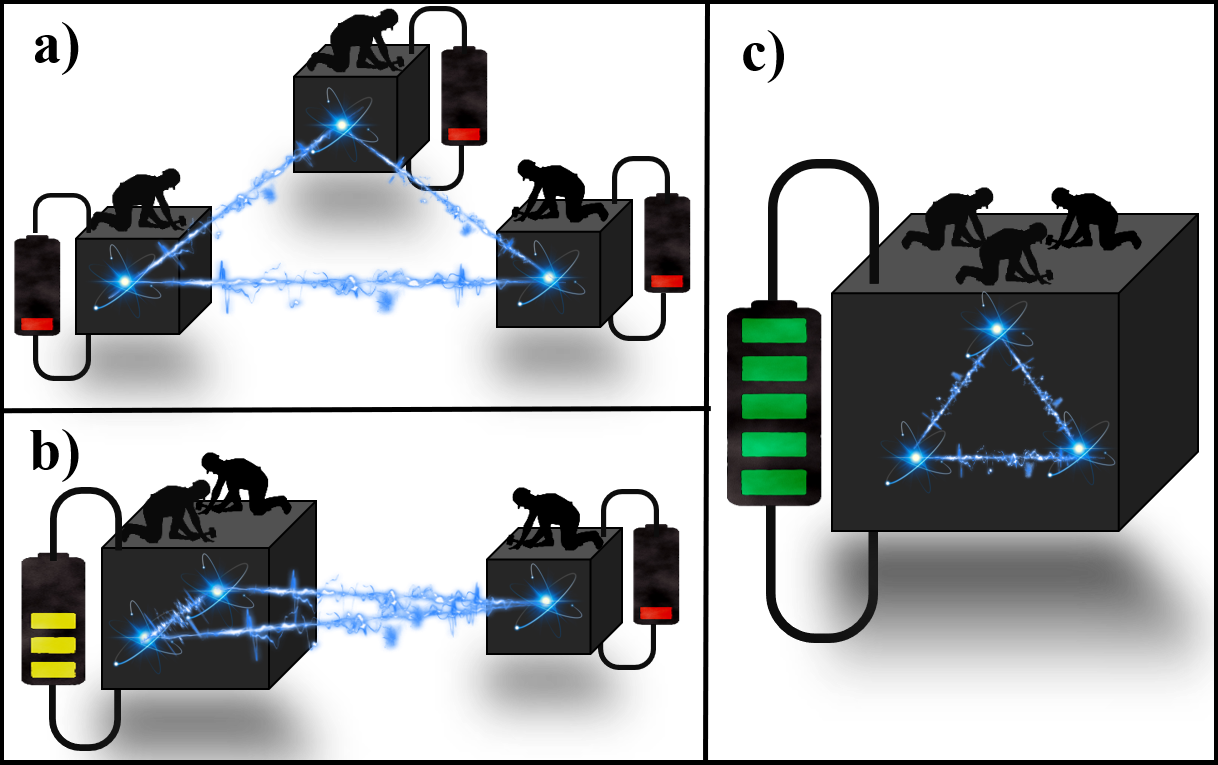}
\caption{Different amount of ergotropic works can be extracted from a multipartite entangled quantum state: (a) local ergotropic work $W_e^{A|B|C}\equiv W_e^l$, (b) biseparable ergotropic work  $W_e^{X|X^\mathsf{C}}$, with $X\in\{A,B,C\}$, and (c) global ergotropic work $W_e^{g}$. In general, $W_e^l\le W_e^{X|X^\mathsf{C}}\le W_e^{g}$, where strict inequalities hold for genuinely entangled states.}
\label{fig1}
\end{figure}

{\it Ergotropy and multiparty entanglement.--} For multipartite systems different subgroups of the parties can come together and accordingly different type of ergotropic works can be extracted from the system (see Fig.\ref{fig1}). For an $n$-party system we can define fully separable ergotropic gap $\Delta^{(n)}_{A_1|\cdots|A_n}$ which is the difference between global ergotropy $W_e^g$ obtained when the parties can apply joint unitary all together and fully local ergotropy $W_e^{A_1|\cdots|A_n}$ obtained through local unitaries on the respective subsystems. For a system governed by the Hamiltonian $H_{A_1 \cdots A_n}=\sum_{i=1}^n \tilde{H}_{A_i}$ ($\tilde{H}_{A_i}\equiv\mathbb{I}_{d_1\cdots d_{i-1}}\otimes H_{A_i}\otimes\mathbb{I}_{d_{i+1}\cdots d_n}$) and prepared in a pure state $|\psi\rangle_{A_1\cdots A_n}\in \bigotimes_{i=1}^n \mathbb{C}^{d_i}$, it turns out that
\begin{align}
\Delta^{(n)}_{A_1|\cdots|A_n}(\ket{\psi})= \sum_{i=1}^n \Tr(\rho^p_{A_i}{H_{A_i}}),\label{fseg}
\end{align}
where $\rho^p_{A_i}$ is the passive state of the corresponding subsystem. Here and throughout the paper, without loss of any generality, we associate $zero$ energy to the lowest energetic state $\ket{\epsilon_0}$. Passive state energy being the LOCC monotone makes the quantity $\Delta^{(n)}_{A_1|\cdots|A_n}$ a LOCC monotone (see the Supplemental \cite{Supple}). Furthermore, it can be defined as a measure of multipartite entanglement by generalizing it for the mixed state through convex roof extension. For instance, expressing a three-qubit pure state  in generalized Schmidt form \cite{Acin2000}, $\ket{\psi}_{ABC}=\lambda_0\ket{000}+\lambda_1e^{\iota\varphi}\ket{100}+\lambda_2\ket{101}+\lambda_3\ket{110}+\lambda_4\ket{111}$, with $\lambda_i\geq0~\&~\sum_i \lambda_i^2=1;~~0\leq \varphi\leq \pi$, we obtain 
\begin{align*}
\Delta^{(3)}_{A|B|C}(\ket{\psi})=& \frac{1}{2}\left(\Delta^{(2)}_{A|BC}+\Delta^{(2)}_{B|CA}+\Delta^{(2)}_{C|AB}\right).
\end{align*}
Here $\Delta^{(2)}_{A|BC}$ denotes the bi-separable ergotropic gap across $A|BC$ cut, {\it i.e.} $\Delta_{A|BC}:=W_e^g-W_e^{A|BC}$  ({\it mutatis mutandis} for the other terms). The explicit expressions read as
\begin{subequations}\label{bi-sep3}
\begin{align}
\Delta^{(2)}_{A|BC}&=\ 1-\sqrt{1-4\lambda_0^2\left(1-\left(\lambda_0^2+\lambda_1^2\right)\right)},\\ 
\Delta^{(2)}_{B|CA}&=\ 1-\sqrt{1-4\left(\lambda_0^2\left(\lambda_3^2+\lambda_4^2\right)+\alpha\right)},\\
\Delta^{(2)}_{C|AB}&=\ 1-\sqrt{1-4\left(\lambda_0^2\left(\lambda_2^2+\lambda_4^2\right)+\alpha\right)},
\end{align}
\end{subequations}
where $\alpha:=\ \big|(\lambda_1\lambda_4e^{\iota\varphi}-\lambda_2\lambda_3)\big|^2$. It is also immediate that $\Delta^{(3)}_{A|B|C}$ is zero for fully product states.    

{\it Ergotropy and genuine entanglement.--} The quantity $\Delta^{(3)}$ for tripartite systems and more generally $\Delta^{(n)}$ for $n$-partite systems, although captures the signature of multipartite entanglement, do not capture genuineness. In fact, $\Delta^{(n)}$ can take maximum value for some non-genuine entangled states (see the Supplemental). At this point the bi-separable ergotropic gap $\Delta^{(2)}$ becomes crucial which for an arbitrary $n$-partite systems can be defined as $\Delta^{(2)}_{X|X^\mathsf{C}}:=W_e^g-W_e^{X|X^\mathsf{C}}$, where $W_e^{X|X^\mathsf{C}}$ denotes the ergotropic work obtained across ${X|X^\mathsf{C}}$ cut with $X$ being a nonzero subset of the parties and $X^\mathsf{C}$ denoting the complement set of parties. For a pure state $\ket{\psi}_{A_1\cdots A_n}\in \bigotimes^n_{i=1}\mathbb{C}^{d_i}$ a straightforward calculation yields
\begin{align}
\Delta^{(2)}_{X|X^\mathsf{C}}(\ket{\psi}) =\Tr(\rho^p_{X}H_{X})+ \Tr(\rho^p_{X^\mathsf{C}}H_{X^\mathsf{C}}),\label{bseg}
\end{align}      
where $H_{\star}$ and $\rho^p_{\star}$ denote the Hamiltonian and the passive state of the corresponding partition. Although the bi-separable ergotropic gap turns out to be a LOCC monotone, it does not capture genuineness, as $\Delta^{(2)}_{X_1|X_1^\mathsf{C}}$ can take nonzero value even when the state is separable across some $X_2|X_2^\mathsf{C}$ partition, where $X_1\neq X_2$. However, this quantity leads us to define several genuine entanglement measures as listed below.  

 (i) {\it Minimum ergotropic gap} $\left(\Delta^G_{\min}\right)$:- It is defined as the minimum among all possible bi-separable ergotropic gaps, {\it i.e.,} for $\ket{\psi}_{A_1\cdots A_n}\in\bigotimes_{i=1}^n\mathbb{C}^{d_i}$
\begin{align*}
\Delta^G_{\min}(\ket{\psi}):= \min\left\{\Delta^{(2)}_{X|X^\mathsf{C}}(\ket{\psi})\right\},   
\end{align*}
where minimization is over all possible bipartitions $\{X|X^\mathsf{C}\}$ of the parties. Clearly, for any pure biseparable state $\Delta^G_{\min}=0$, whereas for pure genuine entangled states it takes non-zero values. For three qubit system, analyzing the expressions in Eqs.(\ref{bi-sep3}) it turns out that  $\Delta^G_{\min}$ yields maximum value for the maximally entangled GHZ (in short $\mathtt{ME}$-GHZ) state and distinguishes it from the W class states \cite{Dur2000}. Therefore according to the criterion imposed in \cite{Xie2021} this measure can be called a ``proper" measure of genuineness. In fact this result is quite generic. For any $n$-qubit system the canonical GHZ state $\ket{GHZ_n}=(\ket{0}^{\otimes n}+\ket{1}^{\otimes n})/\sqrt{2}$ gives maximum value for $\Delta^G_{\min}$ (see the Supplemental) indicating superiority of this state over the other classes of genuine entangled states. Furthermore, for $(\mathbb{C}^2)^{\otimes3}$ system, it can be shown that $\mathcal{C}_{X|X^\mathsf{C}}=\Delta^{(2)}_{X|X^\mathsf{C}}\left(2-\Delta^{(2)}_{X|X^\mathsf{C}}\right)$, where $\mathcal{C}_{X|X^\mathsf{C}}$ is the concurrence across $X|X^\mathsf{C}$ cut for $X\in\{A,B,C\}$. In this case $\Delta^G_{\min}$ is equivalent to another genuine measure called `genuinely multipartite concurrence' (GMC) defined as the minimum of $\mathcal{C}^2_{X|X^\mathsf{C}}$ \cite{Ma2011}.

Importantly, the minimum ergotropic gap carries a physical meaning as it quantifies the least collaborative advantage in ergotropic work extraction when all the three parties instead of any two of them come together. A drawback of this measure is that the ordering imposed by it is not ideal, as two states with equal minimum value can have different ergotropies in other bipartitions which evidently tells that the genuine entanglement of the states must be different.

The measure $\Delta^G_{\min}(\ket{\psi})$ can be extended to mixed states via convex roof extension and accordingly for a state $\rho_{A_1\cdots A_n}\in\mathcal{D}(\bigotimes_{i=1}^n\mathbb{C}^{d_i})$, the expression for minimum ergotropic gap becomes
\begin{align*}
\Delta^G_{\min}(\rho_{A_1\cdots A_n}):=\min_{\{p_j,\rho_j\}}\left\{\sum_{j}p_j\Delta^G_{\min}(\rho_j)\right\},  
\end{align*}
where each $\{p_j, \rho_j\}$ is a decomposition of the state $\rho_{A_1\cdots A_n}$ and the minimisation is over all possible
decompositions. A similar convex roof extension applies for all the measures introduced hereafter. 

(ii) {\it Genuine average ergotropic gap} $\left(\Delta^G_{\mbox{\footnotesize{avg}}}\right)$:- Instead of minimum, one can consider average of all bi-separable ergotropic gaps. However it is not a genuine measure as biseparable states can yield a nonzero value. To define a genuine measure for $n$-party pure state $|\psi\rangle_{A_1\cdots A_n}\in \bigotimes_{i=1}^n \mathbb{C}^{d_i}$ we consider the following quantity,
\begin{align*}
\Delta^G_{\mbox{\footnotesize{avg}}}(\ket{\psi}):= \frac{\Theta\left(\prod_{X} \Delta^{(2)}_{X|X^\mathsf{C}}(\ket{\psi})\right)}{2^{(n-1)}-1}\sum_X \Delta^{(2)}_{X|X^\mathsf{C}}(\ket{\psi}),
\end{align*}
where $X$ ranges over all possible bipartitions ($2^{(n-1)}-1$ in number for $n$-party system) and $\Theta(Z)=0$ for $Z=0$ else $\Theta(Z)=1$. Once again, for three-qubit system this measure is ``proper" as it distinguishes $\mathtt{ME}$-GHZ from the W class. Importantly, this measure is inequivalent from $\Delta^G_{\min}$ (see the Supplemental). In fact, two genuine states having same value for one measure can have different values for the other one. This is a quite important observation, as for such pair of states the measure yielding different values puts nontrivial restriction on their inter-convertibility under LOCC, while the other remains silent. 

(iii) {\it Ergotropic fill} $\left(\Delta^G_F\right)$:- Motivated by the genuine measure of `concurrence fill' recently introduced for three-qubit systems \cite{Xie2021}, we can define ergotropic fill for such systems as follows,
\small
\begin{align*}
\Delta^G_F(\ket{\psi}):=\frac{1}{\sqrt{3}}\left[\left(\sum_X\Delta^{(2)}_{X|X^\mathsf{C}}\right)^2-2\left(\sum_X\Big(\Delta^{(2)}_{X|X^\mathsf{C}}\Big)^2\right)\right]^{\frac{1}{2}},
\end{align*}
\normalsize
where $X\in\{A,B,C\}$. Ergotropic fill turns out to be independent from concurrence fill \cite{Xie2021}, GMC \cite{Ma2011}, and genuine average ergotropic gap (see the Supplemental). However, there is ambiguity regarding the monotonicity of this measure under LOCC \cite{Guo2022}. Although this measure might be generalized for four-qubit system, presently we have no idea regarding its generalization for arbitrary multipartite systems.  

(iv) {\it Ergotropic volume} $\left(\Delta^G_V\right)$:- For an $n$-party state $\ket{\psi}_{A_1\cdots A_n}\in\otimes_{i=1}^n\mathbb{C}^{d_i}$ we can define the normalized volume $\Delta^G_V$ of $N$-edged hyper-cuboid with sides $\Delta^{(2)}_{X|X^\mathsf{C}}(\ket{\psi})$ as a genuine measure of entanglement, {\it i.e.},
\begin{align*}
\Delta^G_V(\ket{\psi}):= \left(\prod_{X=1}^N\Delta^{(2)}_{X|X^\mathsf{C}}(\ket{\psi})\right)^{\frac{1}{N}};~~N=2^{(n-1)}-1.   
\end{align*}
Each edge of the hyper-cuboid being a LOCC monotone makes the normalized volume the same. While for any pure genuine entangled states the volume is non vanishing, it is zero for non-genuine state as at least one of the edges is zero. 

Although ergotropic volume has no direct physical meaning, it turns out to be the lower bound of average ergotropic gap {\it i.e.}, $\Delta^G_V\leq \Delta^G_{\mbox{\footnotesize{avg}}}$ for an arbitrary multipartite state (see the Supplemental). Interestingly, for an arbitrary multipartite system among the different states having same average ergotropic gap the state with equal entanglement across all possible bipartite cut yields maximum ergotropic volume and rate it as the most entangled one. This follows from the fact that geometric mean of a set of numbers with constant arithmetic mean is maximum when each number is equal to the arithmetic mean. 

For three-qubit system $\Delta^G_V$ takes maximum value $1$ for the $\mathtt{ME}$-GHZ, whereas it takes value $\frac{2}{3}$ for maximally W state. In fact, we have obtained the expression of $\Delta^G_V$ for generic three qubit pure state. It turns out that for some particular ranges of Schmidt coefficients generalized GHZ is more genuinely entangled than maximal $W$ states and hence all the W class states. On the other hand, for a certain range of Schmidt coefficients $\Delta^G_V$ of extended GHZ states becomes less than that of maximal $W$ state. Furthermore, with some examples we also find that $\Delta^G_V$ is an independent genuine measure than GMC, minimum ergotropic gap, genuine average ergotropic gap, ergotropic fill, and concurrence fill (see the Supplemental). More interestingly, we discuss examples of three-qubit entangled states where concurrence fill cannot order their genuineness but ergotropic volume can. Although for three-qubit system ergotropic volume is not a smooth function with respect to the generalised Schmidt coefficients, it turns out to be  smooth with respect to the bi-separable ergotropic gaps.

{\it Ergotropy and $k$-nonseparability.--} So far we have seen that fully ergotropic gap captures the signature of multipartite entanglement, whereas suitably defined functions of bi-separable ergotropic gap turn out to be good measure of genuine entanglement. As already mentioned, for a multipartite systems manifestations of quantum entanglement can be most generally described by $k$-nonseparability with multipartite entanglement and genuine entanglement being the two extreme cases. To capture the notion of $k$-nonseparability here we propose the concept of $k$-separable ergotropic gap $\Delta^{(k)}_{X_1|\cdots|X_k}:=W_e^g-W_e^{X_1|\cdots|X_k}$, where $W_e^{X_1|\cdots|X_k}$ denotes the ergotropic works when $n$ different parties are partitioned as $X_1|\cdots|X_k$. For an arbitrary state $\ket{\psi}_{A_1\cdots A_n}$ the expression  for $k$-separable ergotropic gap reads as
\begin{align}
\Delta^{(k)}_{X_1|\cdots|X_k}(\ket{\psi})=\sum^k_{j=1} \Tr(\rho^p_{X_j}H_{X_j}),
\end{align}
where $\rho^p_{X_j}$ and $H_{X_j}$ are the passive state and the Hamiltonian of the associated partition, respectively. Clearly $k=n$ and $k=2$ correspond to the Eq.(\ref{fseg}) and Eq.(\ref{bseg}) respectively. Note that, if the $k$-separable ergotropic gap for $\ket{\psi}$ is zero for a given $k$-partition, say the partition $X_1|\cdots|X_i|X_{i+1}|\cdots|X_k$, then the $(k-1)$-separable ergotropic gap for the state will be zero if any two subgroups of the parties are together, {\it i.e.} $\Delta^{(k)}_{X_1|\cdots|X_i|X_{i+1}|\cdots|X_k}(\ket{\psi})=0$ implies $\Delta^{(k-1)}_{X_1|\cdots|X_iX_{i+1}|\cdots|X_k}(\ket{\psi})=0$. However, zero $k$-separable ergotropic gap for a given $k$-partition needs not to imply zero $k$-separable ergotropic gap for a different $k$-partition. In other words, $\Delta^{(k)}_{X^\prime_1|\cdots|X^\prime_k}(\ket{\psi})$ can take nonzero value even when $\Delta^{(k)}_{X_1|\cdots|X_k}(\ket{\psi})=0$. This fact restrains the quantity $\Delta^{(k)}_{X_1|\cdots|X_k}$ to be a good measure of $k$-nonseparability. However, following the same technique as done for genuineness it is possible to define suitable functions of $k$-separable ergotropic gap that turns out to be good measures of $k$-nonseparability.  In  Supplemental file we discuss few such quantities -- $\Delta^{[k]}_{\min},~\Delta^{[k]}_{\mbox{\footnotesize{avg}}},$ and $\Delta^{[k]}_V$ that for $k=2$ boil down to $\Delta^G_{\min},~\Delta^G_{\mbox{\footnotesize{avg}}},$ and $\Delta^G_V$, respectively.  

{\it Discussion.--} Genuine entanglement represents prototypical features of multipartite quantum systems. Apart from their foundational importance \cite{Greenberger1989} they find several applications \cite{Hillery1999,Leibfried2004,Giovannetti2004,Zhao2004,Gottesman1999,Agrawal2019,Rout2019,Rout2021,Bhattacharya2021} and also they are crucial for the emerging technology of quantum internet \cite{Kimble2008,Wehner2018}. Here we have proposed several measures of genuine entanglement based on thermodynamic quantities. The correspondence between thermodynamics and entanglement theory is not new as information theory makes a link between bipartite entanglementment and thermodynamics through the abstract concept of entropy. Importantly, the connection established between genuine entanglement and thermodynamics in this work is much direct as it does not invoke entropy, rather it is based on internal energies or ergotropic works of the system. Ergotropic work being an experimentally measurable quantity, even under ambient conditions, makes this connection more interesting. In particular, we have introduced four different measures for genuine entanglement among which ergotropic volume has been inferred to perform better than other three as well as the previously existing measures. Importantly, ergotropic volume also captures a physical meaning up-to some degree while still maintaining the integrity as a genuine multipartite entanglement measure without any ad-hoc conditions. Furthermore, we have shown that based on ergotropic quantities it is also possible to define measures of $k$-nonseparability that signifies qualitatively different manifestations of entanglement for multipartite systems. 

As for future, possible experimental realisations of the proposed measures would be quite interesting. It would be instructive to explore the multi-qubit systems, more particularly three-qubit system, first. Another possible study would be to see how the ergotropic gap decreases when more and more restrictions are imposed on the collaboration among the parties, as this would give an idea whether or not the cost of coming together pays off significant increment in work extraction. It would also be interesting to capture the signature of {\it entanglement depth} \cite{Sorensen01} of an multipartite state through the ergotropic approach explored in this letter. Finally, it would also be interesting to see how our approach can be generalized to study other forms of correlation in mutipartite systems, a closely related study recently made for bipartite systems in Ref.\cite{Francica2022}.  
\begin{acknowledgements}
MA and MB acknowledge funding from the National Mission in Interdisciplinary Cyber-Physical systems from the Department of Science and Technology through the I-HUB Quantum Technology Foundation (Grant no: I-HUB/PDF/2021-22/008). MB acknowledges support through the research grant of INSPIRE Faculty fellowship from the Department of Science and Technology, Government of India and the start-up research grant from SERB, Department of Science and Technology (Grant no: SRG/2021/000267).
\end{acknowledgements}

\onecolumngrid
\section{Supplemental}
\section{Fully Separable Ergotropic Gap}\label{appendixA}
Consider an $n$-partite pure quantum state $\ket{\psi}_{A_1\cdots A_n}\in\mathcal{H}=\bigotimes_{i=1}^n\mathcal{H}_{A_i}$, where $\mathcal{H}_{A_i}=\mathbb{C}^{d_i}$ is the Hilbert space of the $i^{\text{th}}$ party $A_i$; $i\in\{1,\cdots n\}$. The Hamiltonian for the $i^{\text{th}}$ subsystem is given by $H_{A_i}=\sum_{j=0}^{d_i-1}e^i_j\ket{\epsilon^i_j}\bra{\epsilon^i_j}$, where $e^i_j$ is the energy of the state $\ket{\epsilon^i_j}$ with $e^i_{j}\leq e^i_{j+1}$ for all $j\in\{1,2\cdots (d_i-2)\}$ \& $i\in\{1,2\cdots, n\}$ . However, only requirement is non degeneracy in the ground state for the subsystems {\it i.e.,} $e_0<e_1$, at least for the $(n-1)$ subsystems, otherwise some of the pure entanglement states would have the same status with the pure product states \footnote{Let, consider a two qutrit systems where marginals Hamiltonians are diagonal in the computational basis having the energy $e_0=e_1<e_2$, {\it i.e.}, the ground state is degenerate. Then the states \{$\ket{00}, \ket{01}, \ket{10}, \Ket{11}$\} have the lowest energy and they simultaneously satisfy the condition of passive states. Any linear combination of these states has the same energy and no work can be extracted via some unitary operation which makes them passive too. For example, $\ket{\psi}= \alpha\ket{00} + \beta\ket{11}$ is an entangled state that have the energy same as the ground state and no ergotropic work can be extracted. In such a scenario this entangled state and the passive product state would provide the same value for all the ergotrpic based measure. However, in case of two qutrit systems if the ground state is non degenerate at least for the one subsystems, then no entangled system can be the passive state any more. }. In-fact, if the subsystems have the complete degeneracy, that means all the states have same energy the concept of ergotropy would fall down immediately and there wouldn't be possible to extract work from any systems under the unitary operation \footnote{ For the above qutrit scenario if $e_0=e_1=e_2$ then all the states have the same energy and no ergotropic work can be extracted from neither a subsystems nor from the bipartite systems.}. Note that, to remove the cumbersome in notation we use same Hamiltonian for the subsystems {\it i.e.,} $i^{\text{th}}$ subsystem is given by $H_{A_i}=\sum_{j=0}^{d_i-1}e_j\ket{\epsilon_j}\bra{\epsilon_j}$, where $e_j$ is the energy of the state $\ket{\epsilon_j}$ with $e_{j}\leq e_{j+1}$ for all $j\in\{0,1,2\cdots (d_i-2)\}$. In this case more degeneracy would occur in the global systems but as long as the ground state is unentangled the results would hold equally. Without loss of any generality we can associate $zero$ energy to the lowest energetic state $\ket{\epsilon_0}$. The total interaction free global Hamiltonian of the $n$-partite system is given by $H=\sum_{i=1}^{n}\tilde{H}_{A_i}$, where $\tilde{H}_{A_i}=\mathbb{I}_{d_1\cdots d_{i-1}}\otimes H_{A_i}\otimes\mathbb{I}_{d_{i+1}\cdots d_n}$ \footnote{In case of interactive systems, the results would hold as long as the ground energy state remains unentangled. Because under the local unitary $U_{A_1}\otimes \cdots \otimes U_{A_n}$ we can always reach to the lowest energetic pure product passive state from any product state and extract the same ergotropic work as global.
Whereas for the entangled state marginals are mixed and cannot reach to lowest energetic
passive state although globally we can and that makes the gap non zero. For instance, Jaynes-Cummings Hamiltonian [\href{https://doi.org/10.1080/09500349314551321}{Journal of Modern Optics, {\bf 40}, 1195 (1993)}], ground state remains unentangled, and for that all of our results would hold.}. The energy of a global state $\rho\in\mathcal{D}(\mathcal{H})$ is given by $\Tr(\rho H)$, where $\mathcal{D}(\mathcal{H})$ is the set of density operators on $\mathcal{H}$; and the energy of a subsystem $\rho_{A_i}=\Tr_{A_1\cdots A_{i-1}A_{i+1}\cdots A_n}(\rho)$ is given by $\Tr(\rho_{A_i}H_{A_i})$.

From a pure state $\ket{\psi}_{A_1\cdots A_n}$ maximum work extraction is possible through a global unitary $U_{A_1\cdots A_n}$ making a transformation $\ket{\psi}_{A_1\cdots A_n} \rightarrow \ket{\epsilon_0}^{\otimes n}$ . By definition $\ket{\epsilon_0}^{\otimes n}$ is the passive state of the corresponding system $\ket{\psi}_{A_1\cdots A_n}$. Since state $\ket{\epsilon_0}$ has zero energy, the amount of maximum extractable work is $W_e^g(\psi)=\Tr\left(\ket{\psi}_{A_1\cdots A_n}\bra{\psi}H\right)$. However, this much work extraction is not always possible with some fully local unitaries, {\it i.e,} unitaries of the form $U=\bigotimes_{i=1}^nU_{A_i}$. In particular, when $\ket{\psi}_{A_1\cdots A_n}$ is entangled, some of the marginals $\rho_{A_i}$ will be mixed states and thus cannot be transformed into $\ket{\epsilon_0}$ under any unitary operation on it. Rather the state $\rho_{A_i}$ would transformed to its corresponding passive state (lowest energetic state), denoted by $\rho_{A_i}^p$, having the same spectral. Therefore, using fully local unitaries on the all individual subsystems, the maximum extractable work is $W_e^{A_1|...|A_n}(\psi)=\sum_{i=1}^n\Tr(\rho_{A_i}H_{A_i})-\Tr(\rho_{A_i}^pH_{A_i})$. The difference between the globally extractable work and fully locally extractable work, named as fully separable ergotropic gap is, therefore, given by,
\begin{align}\label{e001}
    \Delta_{A_1|\cdots |A_n}^{(n)}(\psi)&= W_e^g(\psi)-W_e^{A_1|...|A_n}(\psi)\nonumber\\
    &=\Tr\left(\ket{\psi}_{A_1\cdots A_n}\bra{\psi}H\right)-\left(\sum_{i=1}^n\Tr(\rho_{A_i}H_{A_i})-\Tr(\rho_{A_i}^pH_{A_i})\right)\nonumber\\
    &=\sum_{i=1}^n\Tr(\rho_{A_i}^pH_{A_i}).
\end{align}
Here, $\Tr\left(\ket{\psi}_{A_1\cdots A_n}\bra{\psi}H\right)=\sum_{i=1}^n\Tr(\rho_{A_i}H_{A_i})$, as the global Hamiltonian is interaction free. It can be observed that for pure fully product states, fully separable ergotropic gap always becomes zero, as all subsystems are pure and thus can be transformed into the lowest energetic passive state $\ket{\epsilon_0}$ by some local unitary. But for pure $m$-separable but not fully separable states -- states of the form $\ket{\psi_1}_{X_1}\cdots\ket{\psi_m}_{X_m}$, where $X_j\cap X_{j'}=\emptyset$ for all $j\in\{1,2\cdots m\}$ and  $\bigcup_{j=1}^mX_j=A_1\cdots A_n$-- when $m<n$, the fully separable ergotropic gap is always non-zero, as there are marginals that are not pure and thus cannot reach $\ket{\epsilon_0}$ using fully local unitaries. For instance, a tripartite bi-separable pure state $\ket{0}_A\otimes \ket{\phi}_{BC}^+$, where $\ket{\phi}_{BC}^+$ is an EPR pair shared between $B$ and $C$, fully ergotropic gap is $\Delta^{(3)}_{A|B|C}=1$. Moreover, this quantity can be shown to be non-increasing under local operations and classical communications (LOCC). In other words,
\begin{theorem}\label{th1}
The fully separable ergotropic gap $\Delta_{A_1|\cdots |A_n}^{(n)}$ of a pure state $\ket{\psi}_{A_1\cdots A_n}$, is non-increasing under local operations and classical communications (LOCC).
\end{theorem}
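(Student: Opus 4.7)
My plan is to reduce the multipartite LOCC monotonicity claim to a parallel application of the bipartite result. The paper has already established the additive identity
\begin{align*}
\Delta^{(n)}_{A_1|\cdots|A_n}(\psi) \;=\; \sum_{i=1}^n \Tr(\rho^p_{A_i} H_{A_i}),
\end{align*}
so it suffices to prove that every single-party term $\Tr(\rho^p_{A_i} H_{A_i})$ is non-increasing on average under $n$-partite LOCC. Summing these per-party inequalities over $i = 1,\ldots,n$ and using linearity over outcome probabilities then delivers the theorem.

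Fix an index $i$ and consider any $n$-partite LOCC protocol $\ket{\psi} \mapsto \{p_k, \ket{\phi_k}\}$. Viewed across the bipartition $A_i\,|\,A_i^{\mathsf{C}}$, which only enlarges the class of allowed operations (the $n-1$ parties in $A_i^{\mathsf{C}}$ are free to pool their resources into a single lab, and any classical communication among the $n$ parties descends to classical communication between the two super-labs), Vidal's probabilistic generalization of Nielsen's theorem yields the majorization
\begin{align*}
\mathrm{spec}(\rho^\psi_{A_i}) \;\prec\; \sum_k p_k\, \mathrm{spec}(\rho^{\phi_k}_{A_i}).
\end{align*}
The functional $f(\rho) := \Tr(\rho^p H_{A_i}) = \min_U \Tr(U \rho U^\dagger H_{A_i})$, whose equality was recalled in the preliminaries, is a pointwise minimum of linear functionals of $\rho$ and is therefore concave in $\rho$; since it depends only on the eigenvalues of $\rho$, it is symmetric and hence Schur-concave. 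Applying Schur-concavity against the majorization, and then concavity against the convex mixture, gives the per-party bound
\begin{align*}
\Tr(\rho^{\psi,p}_{A_i} H_{A_i}) \;\geq\; \sum_k p_k\, \Tr(\rho^{\phi_k,p}_{A_i} H_{A_i}).
\end{align*}

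The main delicate step, I expect, will be the explicit verification that passive-state energy is simultaneously concave in the state and Schur-concave in its spectrum. The min-over-unitaries formula makes concavity automatic, while the rearrangement-inequality description $\Tr(\rho^p H_{A_i}) = \sum_j \lambda^\downarrow_j e^\uparrow_j$ -- pairing the decreasingly ordered eigenvalues of $\rho$ with the increasingly ordered energies of $H_{A_i}$ -- makes Schur-concavity transparent; both deserve to be recorded as stand-alone lemmas before the main chain of inequalities. Summing the per-party estimate over $i$ then yields $\Delta^{(n)}(\ket{\psi}) \geq \sum_k p_k \Delta^{(n)}(\ket{\phi_k})$ as required. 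A minor secondary subtlety worth noting is that the transformed outcomes $\ket{\phi_k}$ need not respect any product structure across the remaining cuts, but this is immaterial since the argument treats each bipartition $A_i\,|\,A_i^{\mathsf{C}}$ independently.
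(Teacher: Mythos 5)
Your proof is correct, and its skeleton is the same as the paper's: reduce to the additive identity $\Delta^{(n)}_{A_1|\cdots|A_n}=\sum_i\Tr(\rho^p_{A_i}H_{A_i})$, treat each cut $A_i|A_i^{\mathsf C}$ as a bipartite problem (coarse-graining the other $n-1$ parties only enlarges the operation class), import a majorization criterion for LOCC convertibility, and conclude via monotonicity of passive-state energy under majorization. Where you genuinely diverge is in the two ingredients you plug into this skeleton. First, the paper only proves the deterministic pure-to-pure statement, invoking Nielsen's criterion $\mathrm{spec}(\rho^\psi_{A_i})\prec\mathrm{spec}(\rho^\phi_{A_i})$; you instead invoke the Jonathan--Plenio/Vidal probabilistic criterion and obtain the stronger ensemble-average (strong monotonicity) bound $\Delta^{(n)}(\psi)\geq\sum_k p_k\,\Delta^{(n)}(\phi_k)$, which is the form actually needed to make the convex-roof extension to mixed states well behaved, so your version buys more. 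Second, the paper establishes the key inequality by hand: it rewrites the majorization partial sums as tail sums, multiplies by the nonnegative energy gaps $e_{k+1}-e_k$, and telescopes to $\sum_j e_j\lambda_j^\psi\geq\sum_j e_j\lambda_j^\phi$ --- which is nothing but an explicit verification of the Schur-concavity you assert abstractly from the rearrangement formula $\Tr(\rho^pH)=\sum_j\lambda_j^\downarrow e_j^\uparrow$ and the min-over-unitaries representation. Your abstract route is cleaner and makes the concavity needed for the probabilistic step immediate (it is essentially Vidal's ``unitarily invariant concave function of the marginal'' monotone criterion), at the cost of leaning on the stated lemmas; the paper's route is self-contained and elementary but covers only the deterministic case. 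One small point to tighten when writing it up: in the averaged majorization each $\mathrm{spec}(\rho^{\phi_k}_{A_i})$ must be sorted in decreasing order before averaging, and then the second step of your chain is in fact an equality (linearity on sorted vectors), with concavity only needed if you skip the sorting.
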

\begin{proof}
It is sufficient to prove that $\Delta_{A_1|\cdots |A_n}^{(n)}(\ket{\psi})\geq\Delta_{A_1|\cdots |A_n}^{(n)}(\ket{\phi})$ whenever the deterministic transformation $\ket{\psi}_{A_1\cdots A_n}\xrightarrow[\text{LOCC}]{}\ket{\phi}_{A_1\cdots A_n}$ is possible.

Suppose the deterministic transformation from $\ket{\psi}$ to $\ket{\phi}$ is possible under LOCC. Therefore the following majorization criterion \cite{Nielsen1999} is satisfied across the bipartite cut $A_i|A_i^{\mathsf{C}}$,
\begin{equation}
    \sum_{j=0}^k\lambda_j^\psi\leq\sum_{j=0}^k\lambda_j^\phi\qquad\forall \;k\in\{0,1,...,(d_i-2)\}\quad\text{and}\quad\sum_{j=0}^{d_i-1}\lambda_j^\psi=\sum_{j=0}^{d_i-1}\lambda_j^\psi=1\label{e002}.
\end{equation}
Here $d_i$ is the dimension of $\mathcal{H}_{A_i}$, $\lambda_j^\psi$ and $\lambda_j^\phi$ are the $j^{\text{th}}$ eigenvalues of the $\rho_{A_i}$ and $\sigma_{A_i}$, that are the $A_i^{th}$ marginals of $\ket{\psi}$ and $\ket{\phi}$ respectively. Both the spectrals $\{\lambda_j^\psi\}$ and $\{\lambda_j^\phi\}$ are written in a decreasing order. It is also important to note that if $d_i>\prod_{j\neq i}d_j$, then according to Schmidt decomposition, the first $\prod_{j\neq i}d_j$ number of eigenvalues can be non-zero for both $\rho_{A_i}$ and $\sigma_{A_i}$, although it hardly ever affects the rigour of the proof.

Since $\Tr(\rho_{A_i})=\sum_{j=0}^{d_i-1}\lambda_j^\psi=\Tr(\sigma_{A_i})=\sum_{j=0}^{d_i-1}\lambda_j^\phi=1$, the inequalities in (\ref{e002}) can be rewritten as,
\begin{align}
    \sum_{j=0}^{d_i-1}\lambda_j^\psi-\sum_{j=0}^k\lambda_j^\psi&\geq\sum_{j=0}^{d_i-1}\lambda_j^\phi-\sum_{j=0}^k\lambda_j^\phi,\qquad\forall \;k\in\{0,1,...,(d_i-2)\};\quad \text{or},\nonumber\\
    \sum_{j=k+1}^{d_i-1}\lambda_j^\psi&\geq\sum_{j=k+1}^{d_i-1}\lambda_j^\phi,\qquad\forall \;k\in\{0,1,...,(d_i-2)\};\quad \text{or},\nonumber\\
    (e_{k+1}-e_k)\sum_{j=k+1}^{d_i-1}\lambda_j^\psi&\geq(e_{k+1}-e_k)\sum_{j=k+1}^{d_i-1}\lambda_j^\phi,\qquad\forall \;k\in\{0,1,...,(d_i-2)\},\label{e003}
\end{align}
where $e_k$ is the energy of the state $\ket{\epsilon_k}$, such that $e_{k+1}\ge e_k$ for all $k\in\{0,1\cdots(d_i-2)\}$. Adding up the last set of $(d_i-1)$ inequalities in (\ref{e003}), we get
\begin{equation}
    \sum_{j=1}^{d_i-1}e_j\lambda_j^\psi-e_0\sum_{j=k+1}^{d_i-1}\lambda_j^\psi\geq\sum_{j=1}^{d_i-1}e_j\lambda_j^\phi-e_0\sum_{j=k+1}^{d_i-1}\lambda_j^\phi.\label{e004}
\end{equation}
Since $e_0=0$, the latter terms on both sides can be omitted and the first sum on both sides can be extended to $j=0$ as well. Making these changes, the inequality becomes,
\begin{equation}
    \sum_{j=0}^{d_i-1}e_j\lambda_j^\psi\geq\sum_{j=0}^{d_i-1}e_j\lambda_j^\phi.\label{e005}
\end{equation}
A close inspection of inequality (\ref{e005}) tells that both sides are nothing but the passive state energies of $\rho_{A_i}$ and $\sigma_{A_i}$ respectively, and hence $\Tr(\rho_{A_i}^pH_{A_i})\geq\Tr(\sigma_{A_i}^pH_{A_i})$. Since we considered an arbitrary $A_i$ for this, inequality (\ref{e005}) holds for all $i\in\{1,2,\cdots n\}$. Thus, adding the inequalities of all $n$ subsystems, we get
\begin{equation}
    \sum_{i=1}^n\Tr(\rho_{A_i}^pH_{A_i})\geq\sum_{i=1}^n\Tr(\sigma_{A_i}^pH_{A_i}).
\end{equation}
Comparing the terms in the above inequality with equation (\ref{e001}), it is immediate that $\Delta_{A_1|\cdots |A_n}^{(n)}(\ket{\psi})\geq\Delta_{A_1|\cdots |A_n}^{(n)}(\ket{\phi})$. Hence, the fully separable ergotropic gap $\Delta_{A_1|\cdots |A_n}^{(n)}$ is non-increasing under LOCC for $n$-partite pure quantum systems.
\end{proof}

\subsection{The simplest multipartite system \texorpdfstring{$\left(\mathbb{C}^2\right)^{\otimes 3}$}{}} Let us look at the simplest multipartite system -- the three-qubit system -- with ground and excited state energies, $0$ and $1$ respectively for all the parties. The most general three qubit pure state in the generalised Schmidt form is given by \cite{Acin2000},
\begin{equation}
    \ket{\psi}_{ABC}=\lambda_0\ket{000}+\lambda_1e^{\iota\varphi}\ket{100}+\lambda_2\ket{101}+\lambda_3\ket{110}+\lambda_4\ket{111},\qquad \lambda_i\geq 0,~\sum_i \lambda_i^2=1,~\&~ 0\leq \varphi\leq \pi. \label{e02}
\end{equation}
The marginals of each party are,
\begin{subequations}\label{e03}
\begin{eqnarray}
    \rho_A=&&
    \begin{pmatrix}
    \lambda_0^2 & \lambda_0\lambda_1e^{-\iota\varphi}\\
    \lambda_0\lambda_1e^{\iota\varphi} & (1-\lambda_0^2)
    \end{pmatrix},\label{e03a}\\
    \rho_{B}=&&
    \begin{pmatrix}
    (\lambda_0^2+\lambda_1^2+\lambda_2^2) & (\lambda_1\lambda_3e^{\iota\varphi}+\lambda_2\lambda_4)\\
    (\lambda_1\lambda_3e^{-\iota\varphi}+\lambda_2\lambda_4) & (\lambda_3^2+\lambda_4^2)
    \end{pmatrix},\label{e03b}\\
    \rho_{C}=&&
    \begin{pmatrix}
    (\lambda_0^2+\lambda_1^2+\lambda_3^2) & (\lambda_1\lambda_2e^{\iota\varphi}+\lambda_3\lambda_4)\\
    (\lambda_1\lambda_2e^{-\iota\varphi}+\lambda_3\lambda_4) & (\lambda_2^2+\lambda_4^2)
    \end{pmatrix}.\label{e03c}
\end{eqnarray}
\end{subequations}
By definition, the spectrum of each single marginal will be the same with the remaining two-party marginal. Since the Hamiltonian for the marginal systems are $H_A=H_B=H_C=\ket{1}\bra{1}$, passive state energy will be equal to the smallest eigenvalue. In terms of the generalised Schmidt coefficients, the passive state energies are,
\begin{subequations}\label{e06}
\begin{eqnarray}
    \Tr(\rho_A^pH_A)=&&\ \frac{1}{2}\left(1-\sqrt{1-4\lambda_0^2(1-(\lambda_0^2+\lambda_1^2))}\right)=\frac{\Delta^{(2)}_{A|BC}}{2},\label{e06a}\\
    \Tr(\rho_B^pH_B)=&&\ \frac{1}{2}\left(1-\sqrt{1-4\left(\lambda_0^2(\lambda_3^2+\lambda_4^2)+\alpha\right)}\right)=\frac{\Delta^{(2)}_{B|CA}}{2},\label{e06b}\\
    \Tr(\rho_C^pH_C)=&&\ \frac{1}{2}\left(1-\sqrt{1-4\left(\lambda_0^2(\lambda_2^2+\lambda_4^2)+\alpha\right)}\right)=\frac{\Delta^{(2)}_{C|AB}}{2},\label{e06c}
\end{eqnarray}
\end{subequations}
where $\alpha:=\ \big|(\lambda_1\lambda_4e^{\iota\varphi}-\lambda_2\lambda_3)\big|^2$ and $\Delta^{(2)}_{i|jk}$ is the bi-separable ergotropic gap. Thus the tripartite ergotropic gap for a three qubit pure state reads as,
\begin{align}
    \Delta_{A|B|C}^{(3)}&=\frac{1}{2}\left(3-\sqrt{1-4\lambda_0^2(1-(\lambda_0^2+\lambda_1^2))}-\sqrt{1-4\left(\lambda_0^2(\lambda_3^2+\lambda_4^2)+\alpha\right)}-\sqrt{1-4\left(\lambda_0^2(\lambda_2^2+\lambda_4^2)+\alpha\right)}\right)\nonumber\\
    &=\frac{\Delta^{(2)}_{A|BC}+\Delta^{(2)}_{B|CA}+\Delta^{(2)}_{C|AB}}{2}.\label{e07}
\end{align}
The state $\ket{\psi}_{ABC}$ in Eq.(\ref{e02}) is a pure product state whenever (i) $\alpha=\lambda_0=0$ or (ii) $\lambda_2=\lambda_3=\lambda_4=0$ \cite{Acin2000}, that immediately imply $\Delta^{(3)}_{A|B|C}=0$. On the other hand, the quantity reaches at optimal for the maximally $GHZ$ state ($\lambda_0=\lambda_4=\frac{1}{\sqrt{2}}$) with the value $\Delta^{(3)}_{A|B|C}=\frac{3}{2}$.   

\begin{remark}
 Fully separable ergotropic gap cannot capture genuine entanglement as it can take non-zero value for a non genuine entangled states. Furthermore, even though in three qubit case its value is optimal for the $\mathtt{ME}$-GHZ for more party system a non-genuine entangled state can also yield the optimal value. For instance, consider a four qubit system, each having the Hamiltonian $H_i=\ket{1}\bra{1}$. The four qubit $\mathtt{ME}$-GHZ $\ket{GHZ_4}_{ABCD}=(\ket{0000}+\ket{1111})/\sqrt{2}$  being a genuinely entangled state yields $\Delta_{A|B|C|D}^{(4)}=2$. However, the non-genuine state $\ket{\Phi^+}_{AB}\ket{\Phi^+}_{CD}=(\ket{0000}+\ket{0011}+\ket{1100}+\ket{1111})/2$ also yields $\Delta_{A|B|C|D}^{(4)}=2$. 
\end{remark}

\section{Genuine Measures from Ergotropic Gap: \texorpdfstring{$\left(\mathbb{C}^2\right)^{\otimes3}$}{} system}\label{appendixB}
The explicit expressions of the bi-separable ergotropic gaps for a generic three qubit state (\ref{e02}) are given in Eqs.(\ref{e06}). Setting different coefficients equal to zero, one can get different classes of three qubit states such as product, biseparable, genuine: generalised GHZ, tri-Bell, extended GHZ $etc$ \cite{Acin2000}. With those conditions, in Table \ref{table} we analyze bi-separable ergotropic gap for different classes of states. In the following our primary concern is to come up with genuine measures in-terms of bipartite ergotropic gap.
\begingroup
\renewcommand{\arraystretch}{1.2}
\begin{table*}[b]
\caption{Ergotropic gap of three qubit pure states with two or less parameters}
\begin{ruledtabular}
\begin{tabular}{ccccc}
    Type of State & Constraint & $\Delta_{A|BC}$ & $\Delta_{B|CA}$ & $\Delta_{C|AB}$ \\\hline
    \multirow{2}{*}{Product} & $\alpha=\lambda_0=0$ & $0$ & $0$ & $0$ \\
    &$\lambda_2=\lambda_3=\lambda_4=0$ & $0$ & $0$ & $0$ \\    \hline
    \multirow{3}{*}{Bi-separable} & $\alpha=0$ & $0$ & $\neq0$ & $\neq0$ \\
     & $\lambda_0\lambda_2=0$ & $\neq0$ & $0$ & $\neq0$ \\
     & $\lambda_0\lambda_3=0$ & $\neq0$ & $\neq0$ & $0$ \\\hline
    Generalised GHZ & $\lambda_0^2\leq0.5$ & $2\lambda_0^2$ & $2\lambda_0^2$ & $2\lambda_0^2$ \\
    ($\lambda_0\lambda_4\neq0=\lambda_i,\  i\in\{1,2,3\}$) & $\lambda_0^2\geq0.5$ & $2(1-\lambda_0^2)$ & $2(1-\lambda_0^2)$ & $2(1-\lambda_0^2)$ \\\hline
     & $\lambda_0^2\geq0.5$ & $2(1-\lambda_0^2)$ & $2\lambda_3^2$ & $2\lambda_2^2$ \\
    Tri-Bell & $\lambda_3^2\geq0.5$ & $2\lambda_0^2$ & $2(1-\lambda_3^2)$ & $2\lambda_2^2$ \\
    ($\lambda_4=\lambda_1=0$) & $\lambda_2^2\geq0.5$ & $2\lambda_0^2$ & $2\lambda_3^2$ & $2(1-\lambda_2^2)$ \\
     & $\lambda_0^2$, $\lambda_2^2$, $\lambda_3^2\leq0.5$ & $2\lambda_0^2$ & $2\lambda_3^2$ & $2\lambda_2^2$ \\\hline
     & $\lambda_1\neq0,\ \lambda_4^2\leq0.5$ & $1-\sqrt{1-4\lambda_0^2\lambda_4^2}$ & $2\lambda_4^2$ & $2\lambda_4^2$ \\
     & $\lambda_1\neq0,\ \lambda_4^2\geq0.5$ & $1-\sqrt{1-4\lambda_0^2\lambda_4^2}$ & $2(1-\lambda_4^2)$ & $2(1-\lambda_4^2)$ \\
    Extended GHZ & $\lambda_2\neq0,\ \lambda_0^2\leq0.5$ & $2\lambda_0^2$ & $1-\sqrt{1-4\lambda_0^2\lambda_4^2}$ & $2\lambda_0^2$ \\
    ($\lambda_i=\lambda_j=0\neq\lambda_k,\ i,j,k\in\{1,2,3\}$) & $\lambda_2\neq0,\ \lambda_0^2\geq0.5$ & $2(1-\lambda_0^2)$ & $1-\sqrt{1-4\lambda_0^2\lambda_4^2}$ & $2(1-\lambda_0^2)$ \\
     & $\lambda_3\neq0,\ \lambda_0^2\leq0.5$ & $2\lambda_0^2$ & $2\lambda_0^2$ & $1-\sqrt{1-4\lambda_0^2\lambda_4^2}$ \\
     & $\lambda_3\neq0,\ \lambda_0^2\geq0.5$ & $2(1-\lambda_0^2)$ & $2(1-\lambda_0^2)$ & $1-\sqrt{1-4\lambda_0^2\lambda_4^2}$ \\
\end{tabular}
\end{ruledtabular}
\label{table}
\end{table*}
\endgroup

\subsection{Minimum Ergotropic Gap}\label{a4}
It is defined as the minimum of the three bi-separable ergotropic gaps, {\it i.e.}
\begin{align*}
\Delta^G_{\min}(\ket{\psi}_{ABC}):= \min_{X\in\{A,B,C\}}\left\{\Delta^{(2)}_{X|X^{\mathsf{C}}}(\ket{\psi}_{ABC})\right\}    
\end{align*}
The measure is genuine in the sense that if the minimum is zero, then in that bi-partition, the state will be separable. Interestingly, this measure turns out to be equivalent to another known genuine measure called  Genuinely Multipartite Concurrence (GMC) \cite{Ma2011}.

{\bf Concurrence vs. Bi-separable Ergotropic Gap:} For a multipartite state concurrence across a particular bipartition gives us an idea of purity of the marginals in that bipartition. For pure states, this is an indicator whether the state is separable or not. The expression for bipartite concurrence of a pure quantum state $\ket{\psi}_{AB}$ is \cite{Rungta2001},
\begin{equation}
    \mathcal{C}_{A|B}=\sqrt{2\left(1-\Tr\left(\rho_A^2\right)\right)}.\label{e01}
\end{equation}
For a generic three qubit pure state (\ref{e02}) the concurrence in each bipartition becomes,
\begin{subequations}\label{e04}
\begin{eqnarray}
    \mathcal{C}_{A|BC}=&&\sqrt{4\lambda_0^2(1-(\lambda_0^2+\lambda_1^2))},\label{e04a}\\
    \mathcal{C}_{B|CA}=&&\sqrt{4\left(\lambda_0^2(\lambda_3^2+\lambda_4^2)+\alpha\right)},\label{e04b}\\
    \mathcal{C}_{B|CA}=&&\sqrt{4\left(\lambda_0^2(\lambda_2^2+\lambda_4^2)+\alpha\right)}.\label{e04c}
\end{eqnarray}
\end{subequations}
Comparing equations Eqs.(\ref{e04}) with Eqs. (\ref{e03}) we get the relation between ergotropic gap and bipartite concurrence for three qubit pure states as,
\begin{equation}
    \Delta^{(2)}_{i|jk}=1-\sqrt{1-\mathcal{C}_{i|jk}^2}\qquad \text{or}\qquad \mathcal{C}_{i|jk}=\sqrt{\Delta_{i|jk}(2-\Delta^{(2)}_{i|jk})},\qquad i,j,k\in\{A,B,C\}.\label{e05}
\end{equation}
At this point we observe that the normalised Schmidt weight proposed in Ref.\cite{Qian_2018} is nothing but the bipartite ergotropic gap, which in tern adds an operational meaning to the normalised Schmidt weight. Moreover, since normalised Schmidt weight is a strict monotone of $\mathcal{C}^2$, so is the biseparable ergotropic gap. In addition, values of $\Delta^{(2)}_{i|jk}$ ranges from $0$ to $1$ as $\mathcal{C}^2$ ranges from $0$ to $1$. Given that normalised Schmidt weight obeys the entanglement polygon inequality \cite{Qian_2018}, the biseparable ergotropic gap for three qubits also satisfies the triangle inequality. It is immediate that the minimum bipartite ergotropic gap as a genuine multipartite entanglement measure will give the same entanglement monotone as GMC for three qubit systems. 
\begin{observation}
For any $n$-qubit system the canonical GHZ state $\ket{GHZ_n}=(\ket{0}^{\otimes n}+\ket{1}^{\otimes n})/\sqrt{2}$ gives maximum value for $\Delta^G_{\min}$
\end{observation}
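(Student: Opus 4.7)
My plan has two parts: evaluate $\Delta^G_{\min}(\ket{GHZ_n})$ explicitly and show it is an upper bound for every pure state in $(\mathbb{C}^2)^{\otimes n}$, where $n\geq 2$ and each local Hamiltonian is $H_{A_i}=\ket{1}\bra{1}$ as in the three-qubit analysis above.

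For the evaluation, I would fix an arbitrary bipartition $X|X^\mathsf{C}$ and rewrite $\ket{GHZ_n}=\tfrac{1}{\sqrt{2}}\bigl(\ket{0}^{\otimes|X|}\ket{0}^{\otimes|X^\mathsf{C}|}+\ket{1}^{\otimes|X|}\ket{1}^{\otimes|X^\mathsf{C}|}\bigr)$, so that $\rho_X$ has spectrum $(1/2,1/2,0,\ldots,0)$ supported on $\{\ket{0}^{\otimes|X|},\ket{1}^{\otimes|X|}\}$. The two smallest energies of $H_X=\sum_{i\in X}\tilde H_{A_i}$ are $0$ (nondegenerate) and $1$ (with multiplicity $|X|$), so the passive-state rearrangement pairs one weight $1/2$ with energy $0$ and the other with energy $1$, giving $\Tr(\rho^p_X H_X)=1/2$. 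By symmetry the same holds for $X^\mathsf{C}$, hence $\Delta^{(2)}_{X|X^\mathsf{C}}(\ket{GHZ_n})=1$ for every bipartition, and in particular $\Delta^G_{\min}(\ket{GHZ_n})=1$.

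For the upper bound, I observe that $\Delta^G_{\min}$ is a minimum over all bipartitions and is therefore bounded above by $\Delta^{(2)}_{A_i|A_i^\mathsf{C}}$ at any single-qubit cut. Across such a cut the Schmidt decomposition of any pure $\ket{\psi}$ has rank at most two: $\rho_{A_i}$ and $\rho_{A_i^\mathsf{C}}$ share a nonzero spectrum $\mu_1\geq\mu_2$ with $\mu_1+\mu_2=1$, forcing $\mu_2\leq 1/2$. The qubit passive energy is $\mu_2$; on the $(n-1)$-qubit side the two strictly smallest Hamiltonian eigenvalues are $0$ and $1$, and the remaining marginal weights vanish, so the passive energy there is also $\mu_2$. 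Hence $\Delta^{(2)}_{A_i|A_i^\mathsf{C}}(\ket{\psi})=2\mu_2\leq 1$, and minimizing over bipartitions yields $\Delta^G_{\min}(\ket{\psi})\leq 1$, matching the GHZ value and completing the argument.

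The computations are routine once the rearrangement definition of the passive state is invoked, so no step is a serious obstacle. The only point worth explicit verification is that the two smallest energies of the $(n-1)$-qubit Hamiltonian are indeed $0$ (nondegenerate) and $1$ (with multiplicity $n-1$)—which uses $n\geq 2$—so that only the two nonzero Schmidt weights contribute to the passive energy on the larger side of the single-qubit cut.
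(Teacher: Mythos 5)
Your proposal is correct and follows essentially the same route as the paper: evaluate the GHZ value (the paper asserts $\Delta^G_{\min}(\ket{GHZ_n})=1$ from the $1$ vs $(n-1)$ cuts, while you verify all bipartitions explicitly) and then bound an arbitrary pure state via a single-qubit bipartition, where the rank-two Schmidt spectrum gives passive energy $\mu_2\leq 1/2$ on each side and hence $\Delta^{(2)}_{A_i|A_i^\mathsf{C}}=2\mu_2\leq 1$. No gaps; your extra care about the degeneracy structure of the $(n-1)$-qubit Hamiltonian is the same fact the paper uses implicitly.
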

\begin{proof}
We know that $\Delta^G_{\min}(\ket{GHZ_n})=1$, which is obtained by the $1$ vs $(n-1)$ bipartitions. It is thus sufficient to prove that $\Delta^G_{\min}(\ket{\psi})\leq1$ for all $\ket{\psi}\in(\mathbf{C}^2)^{\otimes n}$.\\
Consider an $n$-qubit state $\ket{\psi}$. A bipartition splits $n$ into $k$ and $n-k$. Let us consider the simplest scenario of $1$ vs $n-1$ bipartitions. Without loss of generality, let the bipartition be $A$ vs $A^{\mathsf{C}}$. The marginal of Alice is given by,
\begin{align*}
    \rho_A=\Tr_{A^{\mathsf{C}}}(\ketbra{\psi}{\psi})=p\ket{\phi}_A\bra{\phi}+(1-p)\ket{\phi^\perp}_A\bra{\phi^\perp}
\end{align*}
where $p\geq0.5$ without loss of generality. The passive state and passive state energy of Alice's side, given the local Hamiltonian $H_A=\ketbra{1}{1}$, are
\begin{align*}
    &\rho_A^p=p\ketbra{0}{0}+(1-p)\ketbra{1}{1}\\
    &\Tr(\rho_A^pH_A)=1-p\leq0.5
\end{align*}
Since we are looking at a bipartition, both marginals will have the same spectrum, resulting in the same type of spectral decomposition for $A^{\mathsf{C}}$ as well,
\begin{align*}
    \rho_{A^{\mathsf{C}}}=\Tr_A(\ketbra{\psi}{\psi})=p\ket{\chi}_{A^{\mathsf{C}}}\bra{\chi}+(1-p)\ket{\chi^\perp}_{A^{\mathsf{C}}}\bra{\chi^\perp}
\end{align*}
Thus, $A^{\mathsf{C}}$ will also yield the same passive state energy $\Tr(\rho_{A^{\mathsf{C}}}^pH_{A^{\mathsf{C}}})=1-p$. Thus the ergotropic gap in the bipartition $A|A^{\mathsf{C}}$ is,
\begin{align*}
    \Delta_{A|A^{\mathsf{C}}}^{(2)}(\ket{\psi})=\Tr(\rho_A^pH_A)+\Tr(\rho_{A^{\mathsf{C}}}^pH_{A^{\mathsf{C}}})=2(1-p)\leq1
\end{align*}
Now it is rather straightforward to conclude that $\Delta^G_{\min}(\ket{\psi})\leq1$, as whenever $\Delta_{A|A^{\mathsf{C}}}^{(2)}(\ket{\psi})$ is the smallest value among all $\Delta^{(2)}(\ket{\psi})$, $\Delta^G_{\min}(\ket{\psi})=\Delta_{A|A^{\mathsf{C}}}^{(2)}(\ket{\psi})\leq1$; and whenever the minimum is obtained for some other bipartition, $\Delta_{X|X^{\mathsf{C}}}^{(2)}(\ket{\psi})$, by definition, $\Delta^G_{\min}(\ket{\psi})=\Delta_{X|X^{\mathsf{C}}}^{(2)}(\ket{\psi})\leq\Delta_{A|A^{\mathsf{C}}}^{(2)}(\ket{\psi})\leq1$. Thus we have,
\begin{align*}
    \Delta^G_{\min}(\ket{\psi})\leq1=\Delta^G_{\min}(\ket{GHZ_n}).
\end{align*}
This completes the proof.
\end{proof}
\begin{remark}
Although minimum ergotropic gap is faithful, can distinguish between maximal GHZ and maximal W states, and is a LOCC monotone, it fails to distinguish states that have seemingly different entanglement. For instance, the states $\ket{\psi}=\frac{1}{\sqrt{2}}\ket{000}+\sqrt{\frac{9}{32}}\ket{110}+\sqrt{\frac{7}{32}}\ket{111}$ and $\ket{\phi}=\frac{1}{\sqrt{8}}\ket{000}+\sqrt{\frac{7}{8}}\ket{111}$ have the same $\Delta_{\min}^G=1/4$. However, $\ket{\psi}$ has ergotropic gap $1$ in both the other partitions, while $\ket{\phi}$ has only $1/4$ ergotropic gap in the other bipartitions, which shows limitation of the measure $\Delta_{\min}^G$.
\end{remark}

\subsection{Genuine Average Ergotropic Gap}
It is defined as the average bi-separable gap for genuinely entangled states and zero otherwise, {\it i.e.}
\begin{align*}
\Delta^G_{\mbox{\footnotesize{avg}}}(\ket{\psi}_{ABC}):= \frac{1}{3}\Theta\left(\prod_{X} \Delta^{(2)}_{X|X^\mathsf{C}}(\ket{\psi})\right)\sum_X \Delta^{(2)}_{X|X^\mathsf{C}}(\ket{\psi}),
\end{align*}
where $\Theta(Z)=0$ for $Z=0$ and $\Theta(Z)=1$ otherwise. This measure also turns out to be genuine, faithful, LOCC monotone, and able to distinguish GHZ and W; all of which can be verified straightforwardly. Being the average of all bi-separable ergotropic gaps, it also carries thermodynamical interpretation. When three parties extract work in one of the bipartitions chosen randomly, the difference in the extracted work from the global extracted work equals to $\Delta^G_{\mbox{\footnotesize{avg}}}$. Furthermore, in table \ref{table2} we have shown that $\Delta^G_{\mbox{\footnotesize{avg}}}$ and $\Delta_{\min}^G$ are  independent measures.
\begin{remark}
Like minimum ergotropic gap, the measure also fails to differentiate some genuine states that obviously possess different entanglement. For example, if we take a look at Table \ref{table}, $\Delta^G_{\mbox{\footnotesize{avg}}}$ fails to provide any form of ordering among the tri-Bell states whenever $\lambda_0^2, \lambda_2^2, \lambda_3^2\leq0.5$. The ad-hoc nature of the genuineness is also evident, as the measure is not continuous with respect to the bi-separable ergotropic gaps.
\end{remark}

\begin{figure}[h!]
\centering
\includegraphics[width=0.75\textwidth]{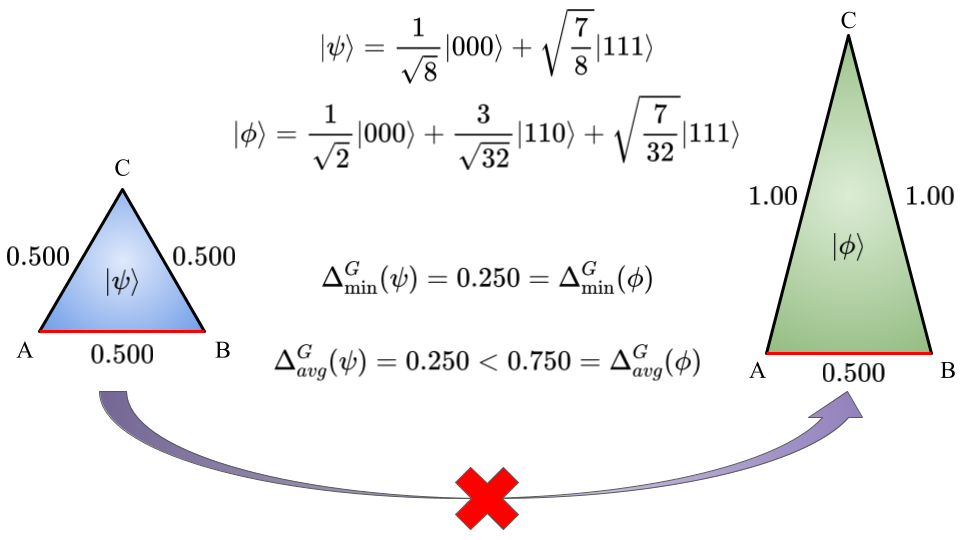}
\caption{(Color online) Utility of different measures. The triangles are formed with the sides $\sqrt{\Delta^{(2)}_{X|X^\mathsf{C}}}$, where $X\in\{A,B,C\}$. In this case $\Delta^G_{\min}$ takes same value for both the states and hence it remains silent to compare their entanglement as well as their deterministic inter-convertibility under LOCC. However, $\Delta^G_{\mbox{\footnotesize{avg}}}(\phi)$ being strictly larger than $\Delta^G_{\mbox{\footnotesize{avg}}}(\psi)$ makes $\ket{\phi}$ more genuinely entangled that $\ket{\psi}$ and prohibits deterministic transformation $\ket{\psi}\mapsto\ket{\phi}$ under any LOCC operation.}
\label{fig2a}
\end{figure}

\begin{figure}[h!]
\centering
\includegraphics[width=0.75\textwidth]{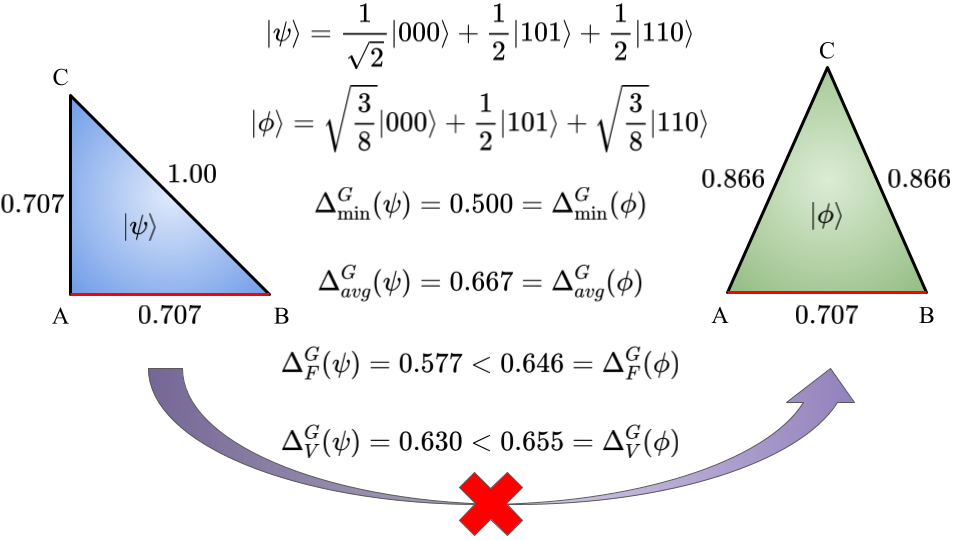}
\caption{(Color online) Utility of different measures. In this case $\Delta^G_{\min}$ and $\Delta^G_{\mbox{\footnotesize{avg}}}$ are same for both the states. However, $\Delta^G_F(\psi)<\Delta^G_F(\phi)$ as well as $\Delta^G_V(\psi)<\Delta^G_V(\phi)$. Thus $\ket{\phi}$ is more genuinely entangled that $\ket{\psi}$ and deterministic transformation $\ket{\psi}\mapsto\ket{\phi}$ under any LOCC operation is prohibited.}
\label{fig2b}
\end{figure}

\begin{figure}[h!]
\centering
\includegraphics[width=0.75\textwidth]{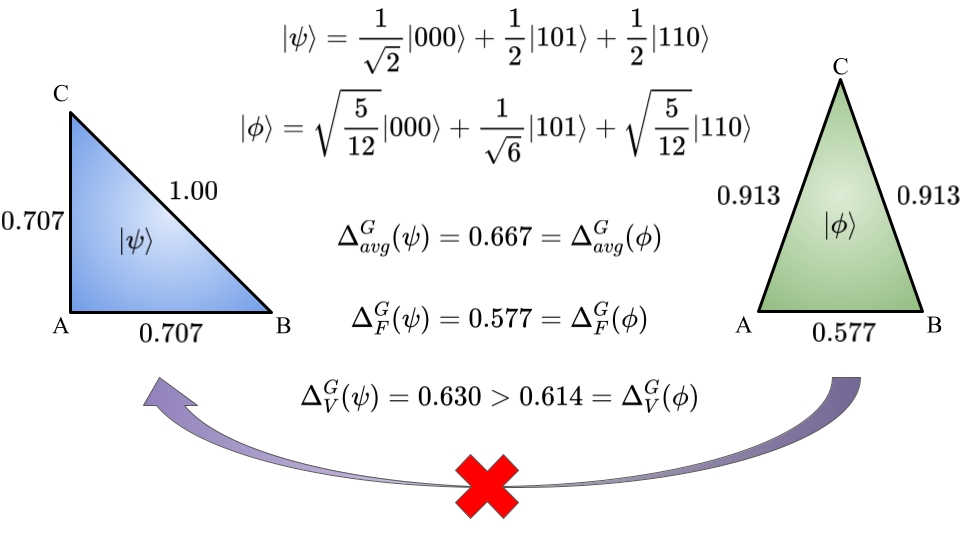}
\caption{(Color online) Utility of different measures. In this case $\Delta^G_{\mbox{\footnotesize{avg}}}$ and $\Delta^G_F$ are same for both the states. However, $\Delta^G_V(\phi)<\Delta^G_V(\psi)$. Thus $\ket{\psi}$ is more genuinely entangled that $\ket{\phi}$ and deterministic transformation $\ket{\phi}\mapsto\ket{\psi}$ under any LOCC operation is prohibited.} 
\label{fig2c}
\end{figure}

\subsection{Ergotropic Fill}
Triangle inequality of bi-separable ergotropic gaps carried over from the entanglement polygon inequality of normalised Schmidt weight implies,
\begin{align*}
\Delta_{X|YZ}^{(2)}\leq\Delta_{Y|ZX}^{(2)}+\Delta_{Z|XY}^{(2)},\qquad X,Y,Z\in\{A,B,C\}.
\end{align*}
Inspired by \cite{Xie2021}, we consider the ergotropic triangle with sides $\sqrt{\Delta_{A|BC}}$, $\sqrt{\Delta_{B|CA}}$, and $\sqrt{\Delta_{C|AB}}$. The square root was taken in order to avoid the equality from being satisfied and thus missing a few genuinely entangled states (such as $\ket{\psi}=\frac{1}{\sqrt{2}}\ket{000}+\frac{1}{2}\ket{101}+\frac{1}{2}\ket{110}$). The expression for ergotropic fill -- area of the ergotropic triangle -- is given by,
\begin{align*}
\Delta^G_F(\ket{\psi}_{ABC}):=\frac{1}{\sqrt{3}}\left[\left(\sum_X\Delta^{(2)}_{X|X^\mathsf{C}}\right)^2-2\left(\sum_X\Big(\Delta^{(2)}_{X|X^\mathsf{C}}\Big)^2\right)\right]^{\frac{1}{2}},\qquad X\in\{A,B,C\}.
\end{align*}
The measure is genuine (zero for all product states), is faithful (non-zero for genuinely entangled states), and does a fairly decent job in distinguishing states like GHZ and W. Moreover, it turns out to be inequivalent to the existing concurrence counterpart -- the concurrence fill, minimum ergotropic gap, and genuine ergotropic gap (see table \ref{table2}). Furthermore, unlike genuine average ergotropic gap, this measure is continuous with respect to biseparable ergotropic gap and yields distinct values for the tri bell states. For instance, for the tri bell states $\ket{\psi}=\frac{1}{\sqrt{3}}\ket{000}+\frac{1}{\sqrt{3}}\ket{101}+\frac{1}{\sqrt{3}}\ket{110}$ and $\ket{\phi}=\frac{1}{\sqrt{2}}\ket{000}+\frac{1}{2}\ket{101}+\frac{1}{2}\ket{110}$ we have $\Delta^G_{F}(\psi)=\frac{2}{3}=0.667$ and $\Delta^G_{F}(\phi)=\frac{1}{\sqrt{3}}=0.577$ , respectively.  
\begin{remark}
Unfortunately, it is unknown whether or not the measure is LOCC monotone, as the area of a triangle might increase even if the sides decrease (see Fig.\ref{fig3}). It is also rather difficult to provide extensions of this measure to higher dimensions.
\end{remark}
\begin{figure}[h!]
\centering
\includegraphics[width=0.75\textwidth]{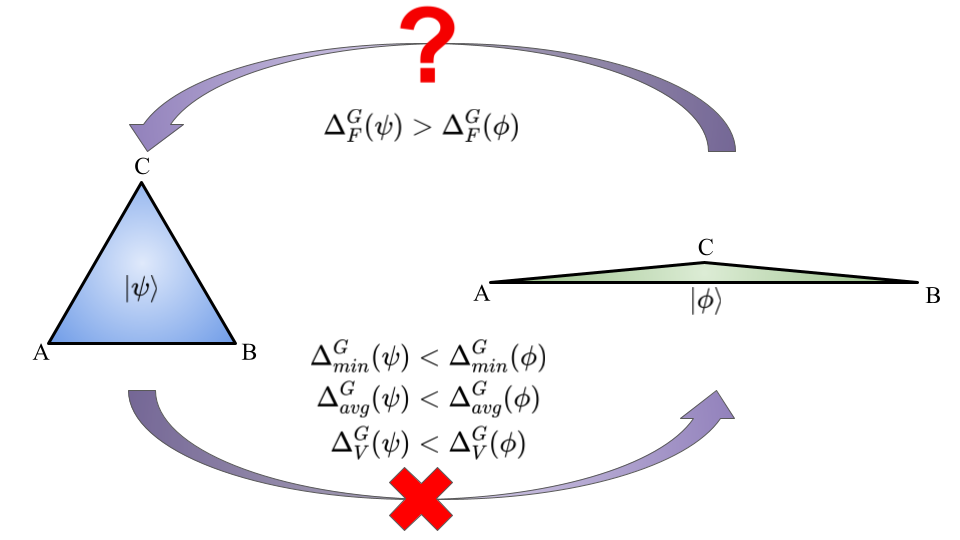}
\caption{(Color online) For the tripartite system ABC, ergotropic fill $\Delta^G_F$ is proportional to the area of the triangle formed with the sides $\sqrt{\Delta^{(2)}_{X|X^\mathsf{C}}}$, where $X\in\{A,B,C\}$. Consider two states $\ket{\psi}$ and $\ket{\phi}$ yielding the ergotropic triangle $\mathcal{T}_\psi$ and $\mathcal{T}_\phi$, respectively. Clearly all the sides of $\mathcal{T}_\phi$ is larger than all the sides of $\mathcal{T}_\psi$. Accordingly the bipartite entanglement across all possible cuts as well the multipartite entanglement (measured through $\Delta^G_{\min}$, $\Delta^G_{\mbox{\footnotesize{avg}}}$, and $\Delta^G_V$) of the state $\ket{\phi}$ is strictly greater than that of the state $\ket{\psi}$, which immediately discard the transformation $\ket{\psi}\to\ket{\phi}$ under LOCC. However, from this we can not make any comment regarding the possibility of LOCC transformation $\ket{\phi}\to\ket{\psi}$. In this particular geometric figure the area (and accordingly the ergotropic fill) of the triangle $\mathcal{T}_\psi$ is strictly greater than that of the the triangle $\mathcal{T}_\phi$. For such a pair of states two possibilities arise: either (i) $\Delta^G_F$ is a LOCC monotone (which is not proved yet) and therefore the states are LOCC incomparable, or (ii) one come up with such a pair of states with an explicit LOCC protocol $\ket{\phi}\to\ket{\psi}$, and in that case ergotropic fill is not LOCC monotone anymore and thus it looses the most basic criterion of an entanglement measure. At present we do not have any idea which of the options is correct.}
\label{fig3}
\end{figure}

\subsection{Ergotropic Volume for Three Qubits}
The ergotropic volume of a three qubit pure state is given by,
\begin{align*}
\Delta^G_V(\ket{\psi}_{ABC}):= \left(\prod_{X}\Delta^{(2)}_{X|X^\mathsf{C}}(\ket{\psi}_{ABC})\right)^{\frac{1}{3}},\qquad X\in\{A,B,C\}.
\end{align*}
The factor $\frac{1}{3}$ in power is considered so that this measure can be compared to the others. The measure is evidently genuine, faithful, and LOCC monotone, as all the bi-separable ergotropic gaps are LOCC monotone. The measure also has the geometric interpretation that it is the normalised volume of the cuboid with edges $\Delta_{X|YZ}^{(2)}$, $X\in\{A,B,C\}$. It also provides a lower bound of the average bi-separable ergotropic gap, which is nothing but the inequality of arithmetic and geometric means \cite{amgm}. More explicitly, for genuine entangled states
\begin{equation}
   \Delta^G_{\mbox{\footnotesize{avg}}}(\ket{\psi})= \frac{1}{3}\sum_X \Delta^{(2)}_{X|X^\mathsf{C}}(\ket{\psi}) \geq \Delta^G_V(\ket{\psi})= \left(\prod_{X}\Delta^{(2)}_{X|X^\mathsf{C}}(\ket{\psi})\right)^{\frac{1}{3}}.
\end{equation}
In fact the above relation holds even for the $n$-partite systems. Following interesting observations can be made. 
\begin{observation}\label{obs1}
For the tripartite system $\mathbb{C}^{d_1}\otimes\mathbb{C}^{d_2}\otimes\mathbb{C}^{d_3}$, states having same average ergotropic gap, yields maximum ergotropic volume for generalised GHZ State $\frac{1}{\sqrt{d}}\sum_{i=0}^{d-1}\ket{iii}$, where $d=\min\{d_1,d_2,d_3\}$.
\end{observation}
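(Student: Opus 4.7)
The plan is to exploit the AM--GM inequality, which has already been invoked in the main text to establish the bound $\Delta^G_V(\ket{\psi})\le \Delta^G_{\mbox{\footnotesize{avg}}}(\ket{\psi})$ for any tripartite pure state. Recall that for three non-negative reals $a,b,c$, $\sqrt[3]{abc}\le \tfrac{1}{3}(a+b+c)$ with equality iff $a=b=c$. Setting $a,b,c$ to be the three bipartite ergotropic gaps $\Delta^{(2)}_{X|X^{\mathsf{C}}}$, $X\in\{A,B,C\}$, this means that, among all tripartite pure states sharing a common value of $\Delta^G_{\mbox{\footnotesize{avg}}}$, the maximum of $\Delta^G_V$ is attained precisely by those states whose three bipartite ergotropic gaps are mutually equal. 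The observation therefore reduces to showing that the generalized GHZ state $\ket{GHZ_d}:=\tfrac{1}{\sqrt{d}}\sum_{i=0}^{d-1}\ket{iii}$ realizes this equality condition.

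To verify this, I would compute directly the single-party and two-party marginals of $\ket{GHZ_d}$. One finds $\rho_X=\tfrac{1}{d}\sum_{i=0}^{d-1}\ket{i}\bra{i}$ for every $X\in\{A,B,C\}$, and by Schmidt decomposition the complement marginals $\rho_{X^{\mathsf{C}}}$ have the identical non-zero spectrum $\{1/d,\ldots,1/d\}$ (with $d$ entries). Consequently, the passive state of each single-party marginal places weight $1/d$ on each of the $d$ lowest energy eigenstates of $H_X$, giving $\Tr(\rho^p_X H_X)=\tfrac{1}{d}\sum_{i=0}^{d-1}e_i$. Under the convention adopted in Section~\ref{appendixA} that all subsystem Hamiltonians share the same energy values $\{e_j\}$ (only the number of levels differing with the local dimension), this quantity is independent of $X$. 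Similarly, the passive state of $\rho_{X^{\mathsf{C}}}$ places weight $1/d$ on each of the $d$ lowest eigenstates of $H_{X^{\mathsf{C}}}=H_{X_1}+H_{X_2}$, and since the combined Hamiltonians for the three possible pair-partitions have the same eigenvalue structure in their low-lying spectrum, $\Tr(\rho^p_{X^{\mathsf{C}}}H_{X^{\mathsf{C}}})$ is also independent of the chosen bipartition.

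Combining these two observations via Eq.~(\ref{bseg}), the three bipartite ergotropic gaps $\Delta^{(2)}_{A|BC}$, $\Delta^{(2)}_{B|CA}$, $\Delta^{(2)}_{C|AB}$ coincide for $\ket{GHZ_d}$. Applying the AM--GM equality condition then yields $\Delta^G_V(\ket{GHZ_d})=\Delta^G_{\mbox{\footnotesize{avg}}}(\ket{GHZ_d})$, which by the inequality above is the maximum value of $\Delta^G_V$ achievable at that fixed average, completing the proof. The main obstacle I anticipate is the careful justification of the step comparing passive state energies across distinct bipartitions when $d_1,d_2,d_3$ are unequal: one must confirm that only the first $d$ levels of each local Hamiltonian enter the passive state computation for $\rho_{X^{\mathsf{C}}}$, so that the asymmetry in total dimensions does not spoil the equality of gaps. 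This is ultimately a combinatorial check on the ordering of sums $e_j+e_k$ restricted to indices below $d$, and hinges crucially on the uniformity of the local energy spectra assumed in the paper's setup.
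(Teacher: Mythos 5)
Your proposal is correct, and its core strategy coincides with the paper's: fix the arithmetic mean of the three bipartite ergotropic gaps and argue that the geometric mean is maximized exactly when the three gaps coincide, then identify the generalised GHZ state as a state realizing that equality. The difference is in execution. The paper carries out the optimization step by a Lagrange-multiplier computation on $f(x,y,z)=(xyz)^{1/3}$ subject to $x+y+z=3c$, whereas you invoke the AM--GM equality condition directly; this is more elementary and is in fact the same tool the paper already uses to get $\Delta^G_V\le\Delta^G_{\mbox{\footnotesize{avg}}}$, so nothing is lost. More substantively, you go one step beyond the paper: the paper simply asserts that the equal-gap maximizer is the generalised GHZ state, while you explicitly compute the marginals of $\frac{1}{\sqrt{d}}\sum_{i=0}^{d-1}\ket{iii}$ and check that all three gaps $\Delta^{(2)}_{X|X^{\mathsf{C}}}$ coincide even when $d_1,d_2,d_3$ are unequal. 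Your flagged ``combinatorial check'' does go through: since the local spectra are weakly increasing with $e_0=0$, every sum $e_j+e_k$ with an index $\ge d$ dominates each of the $d$ sums $e_0+e_k$, $k<d$, so the $d$ lowest two-party energy levels involve only indices below $d$ and the complement passive-state energy is the same for all three bipartitions. This added verification makes your argument somewhat more complete than the paper's, at the cost of relying on the paper's convention of identical local energy values across subsystems (which the paper itself assumes). One cosmetic remark: the maximizer at a given average need not be unique (any state with three equal gaps in that level set also attains the bound), but the statement only claims that the generalised GHZ attains it, which both arguments establish.
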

\begin{proof}
Consider a genuinely entangled state with $\Delta_{A|BC}^{(2)}:=x>0,~\Delta_{B|CA}^{(2)}:=y>0$, and $\Delta_{C|AB}^{(2)}:=z>0$;
and $\Delta^G_{\mbox{\footnotesize{avg}}}:=g(x,y,z)=\frac{x+y+z}{3}:=c>0$. The normalized ergotropic volume is therefore $\Delta_V^G:=f(x,y,z)=(xyz)^{\frac{1}{3}}$. We are interested in  maximum of the function $f(x,y,z)$ under the constraint $g(x,y,z)=c$. Using Lagrange multiplier method it turns out that maximum will be achieved when $xy=yz=zx$. Which is only if all three variables are equal {\it i.e.} $x=y=z=c$. Therefore, the state with the maximum $\Delta_V^G$ among the states having the same $\Delta^G_{\mbox{\footnotesize{avg}}}$ is the generalised GHZ state.  
\end{proof}
This observation can be generalized for more number of parties. 
\begin{observation}\label{obs2}
For the multipartite system $\bigotimes_{j=1}^n\mathbb{C}^{d_i}$, states having same average ergotropic gap, yields the maximum ergotropic volume for the state with equal entanglement $\left(\Delta^{(2)}_{X|X^\mathsf{C}}\right)$ across all possible bipartite cut.
\end{observation}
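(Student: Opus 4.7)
The plan is to reduce Observation \ref{obs2} to the multivariable arithmetic mean--geometric mean inequality, exactly as done for the tripartite case in Observation \ref{obs1}. Let $N := 2^{(n-1)}-1$ be the number of distinct bipartitions $\{X\,|\,X^\mathsf{C}\}$ of the $n$ parties, and abbreviate $x_j := \Delta^{(2)}_{X_j|X_j^\mathsf{C}}(\ket{\psi})$ for $j=1,\dots,N$. By definition, $\Delta^G_{\mbox{\footnotesize{avg}}}(\ket{\psi})=\frac{1}{N}\sum_j x_j$ (on genuine states) and $\Delta^G_V(\ket{\psi})=\bigl(\prod_j x_j\bigr)^{1/N}$.

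First, I would formulate the optimization purely as a statement about the $N$-tuples $(x_1,\dots,x_N)\in\mathbb{R}^N_{>0}$: given the constraint $\sum_j x_j = Nc$ for a fixed constant $c>0$, maximize $F(x_1,\dots,x_N):=\prod_j x_j$. Applying Lagrange multipliers (mirroring the tripartite argument in Observation \ref{obs1}), the stationary condition $\nabla F = \lambda\,\nabla(\sum_j x_j)$ yields $\prod_{k\neq j}x_k = \lambda$ for every $j$, which forces $x_1=x_2=\cdots=x_N=c$. Since $F$ vanishes on the boundary of the simplex (whenever any $x_j\to0$) while being strictly positive in the interior, this interior critical point is the unique global maximum. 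Equivalently, the conclusion is just the AM--GM inequality \cite{amgm}: $\bigl(\prod_j x_j\bigr)^{1/N}\leq \frac{1}{N}\sum_j x_j$ with equality iff all $x_j$ coincide.

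It then remains to verify that the optimum is actually attained by a physical state, namely one whose bi-separable ergotropic gaps are identical across every bipartition. For the system $\bigotimes_{j=1}^n\mathbb{C}^{d_j}$ with $d=\min_j d_j$, one can check that the generalised GHZ state $\frac{1}{\sqrt{d}}\sum_{i=0}^{d-1}\ket{i}^{\otimes n}$ has fully symmetric structure: tracing out any nonempty proper subset $X^\mathsf{C}$ of parties yields the maximally mixed state $\frac{1}{d}\sum_i\ketbra{i\cdots i}{i\cdots i}$ on $X$, whose passive energy depends only on the spectrum (which is $X$-independent), and similarly on $X^\mathsf{C}$. Hence $\Delta^{(2)}_{X|X^\mathsf{C}}$ takes the same value across every bipartition, so the inequality is saturated on this state.

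The main obstacle, if any, is the existence step rather than the inequality: one must exhibit, for every admissible value of $\Delta^G_{\mbox{\footnotesize{avg}}}$, at least one genuine multipartite state whose bi-separable ergotropic gaps are all equal to that value. For values attainable by generalised-GHZ-type states this is immediate from the symmetry argument above; for other values one can interpolate by mixing the GHZ state with a product reference to sweep out the permissible range while preserving the cut-symmetry of the gaps. Everything else is a direct transcription of the Lagrange-multiplier / AM--GM argument that already powered Observation \ref{obs1}.
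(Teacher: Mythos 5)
Your core argument is correct and is mathematically the same as the paper's proof: the paper establishes the claim by applying Jensen's inequality to $\log$, which is exactly the AM--GM inequality $\bigl(\prod_j x_j\bigr)^{1/N}\le\frac{1}{N}\sum_j x_j$ with equality iff all $x_j$ coincide, comparing a generic state against an equal-gap state of the same average; your Lagrange-multiplier formulation (mirroring Observation~1) together with the boundary argument is just another route to that same inequality, as you note yourself. The only substantive difference is the added ``attainability'' step, and that is both unnecessary and your weakest link: the observation, as stated and as proved in the paper, is a conditional comparison among states that already share the same average ergotropic gap, so one never has to exhibit a physical equal-gap state for every admissible value of $\Delta^G_{\mbox{\footnotesize{avg}}}$. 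Moreover, your claim that the generalised GHZ state $\frac{1}{\sqrt d}\sum_{i}\ket{i}^{\otimes n}$ has identical $\Delta^{(2)}_{X|X^{\mathsf{C}}}$ in every bipartition is not automatic: the gap equals $\Tr(\rho^p_X H_X)+\Tr(\rho^p_{X^{\mathsf{C}}}H_{X^{\mathsf{C}}})$, and although every marginal has the same flat spectrum on $d$ levels, the passive energy $\frac{1}{d}\sum_{j=0}^{d-1}E_j(H_X)$ depends on the $d$ lowest levels of the cut Hamiltonian $H_X$, which in general differ from cut to cut when local dimensions or level spacings are unequal (the symmetry does hold, e.g., for $n$ qubits with identical local Hamiltonians). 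So the AM--GM/Jensen part of your argument fully proves the statement; the interpolation/existence paragraph should be dropped or heavily qualified rather than relied upon.
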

\begin{proof}
This can be proved using Jensen's inequality. Let's consider two genuinely entangled states $\ket{\psi_{AS}},\ket{\psi_{S}}\in \bigotimes_{j=1}^n\mathbb{C}^{d_i}$. Bi-separable ergotropic gap across an arbitrary bi-partition is $\Delta^{(2)}_{X_i|X_i^{\mathsf{C}}}(\psi_{AS}):=x_i$ while  $\Delta^{(2)}_{X_i|X_i^{\mathsf{C}}}(\psi_S):=x$ for all bipartitions. Given that the average ergotropic gap is equal for the above two states we have
\begin{equation*}
\Delta^G_{\mbox{\footnotesize{avg}}}(\psi_{AS})=\frac{\sum_{i-1}^Nx_i}{N}=x=\Delta^G_{\mbox{\footnotesize{avg}}}(\psi_{S}),
\end{equation*}
where $N=2^{(n-1)}-1$ is the total number of bipartitions. Since $\log$ is a concave function Jensen's inequality yields,
\begin{align*}
\log(\Delta^G_{\mbox{\footnotesize{avg}}}(\psi_{S}))=\log(x)=&\log(\Delta^G_{\mbox{\footnotesize{avg}}}(\psi_{AS}))=\log\left(\sum_{i=1}^N\frac{x_i}{N}\right)    \geq\frac{1}{N}\sum_{i=1}^N\log(x_i)\\
    =&\frac{1}{N}\log\left(\prod_{i=1}^Nx_i\right)=\log\left(\Big(\prod_{i=1}^Nx_i\Big)^{\frac{1}{N}}\right)=\log\left(\Delta^G_{\mbox{\footnotesize{avg}}}\left(\psi_{AS}\right)\right).
\end{align*}
Taking exponent on both sides, we get,
\begin{align*}
    \Delta^G_{\mbox{\footnotesize{avg}}}(\psi_{S})\geq\Delta^G_{\mbox{\footnotesize{avg}}}(\psi_{AS}).
\end{align*}
This completes the proof.
\end{proof}
Coming back to three-qubit system, here we compare the ergotropic volume for the different types of states which have been mentioned in Table \ref{table}.
\begin{itemize}
\item[$1)$] {\it Tri-Bell states:} The tri-Bell states consists of the 2 parameter states in the W class. They are of the form,
\begin{equation}
\ket{tBell}_{ABC}=\lambda_0\ket{000}+\lambda_2\ket{101}+\lambda_3\ket{110}.\label{13}
\end{equation}
The ergotropic volume of tri-Bell states is given by,
\begin{align}
\Delta_V^G=\begin{cases}
2\times\sqrt[3]{(1-\lambda_i^2)\lambda_j^2\lambda_k^2}~,\quad\text{when}\ \lambda_i^2\geq\frac{1}{2},\\
2\times\sqrt[3]{\lambda_i^2\lambda_j^2\lambda_k^2}~,\qquad\;\;\quad\text{otherwise},
\end{cases}
\end{align}
where $i, j, k\in\{0,2,3\}$. The maximal W state belonging  to this class yields highest ergotropic fill, $\Delta_V^G(W)=\frac{8}{27}$.

\item[$2)$] {\it Generalised GHZ states:} These are the one parameter states of the form,
\begin{equation}
\ket{gGHZ}_{ABC}=\lambda_0\ket{000}+\lambda_4\ket{111}.\label{15}
\end{equation}
The ergotropic volume reads as,
\begin{align}\label{B8}
\Delta_V^G=\begin{cases}
2\times(1-\lambda_0^2)~,\quad~\text{when}~\lambda_0^2\geq\frac{1}{2},\\
2\times\lambda_0^2~,\qquad\qquad\text{when}~\lambda_0^2\leq\frac{1}{2}.
\end{cases}
\end{align}
For the maximal GHZ states $\lambda^2_0=\frac{1}{2}$ and it has the maximum ergotropic volume among all the three qubit pure states, $\Delta^G_{V}(GHZ)=1$. It is also interesting to note that the generalised GHZ states are always more entangled than any of the W states in terms of ergotropic volume when $\frac{1}{3}<\lambda_0^2<\frac{2}{3}$. In other words, the states outside this range are less entangled than the maximal W state.

\item[$3)$] {\it Extended GHZ states:} These are the two parameter states of the GHZ class and has three forms depending on which of the parameters are non-zero. They are of the form,
\begin{subequations}\label{17}
\begin{eqnarray}
    \ket{eGHZ_1}_{ABC}&&=\lambda_0\ket{000}+\lambda_1e^{\iota\varphi}\ket{100}+\lambda_4\ket{111},\label{17a}\\
    \ket{eGHZ_2}_{ABC}&&=\lambda_0\ket{000}+\lambda_2\ket{101}+\lambda_4\ket{111},\label{17b}\\
    \ket{eGHZ_3}_{ABC}&&=\lambda_0\ket{000}+\lambda_3\ket{110}+\lambda_4\ket{111}.\label{17c}
\end{eqnarray}
\end{subequations}
We will be discussing only the type (\ref{17a}) as analysis for the rests is analogous. Here the ergotropic volume reads as
\begin{align}\label{B10}
\Delta_V^G=\begin{cases}
\sqrt[3]{4\lambda_4^4\left(1-\sqrt{1-4\lambda_0^2\lambda_4^2}\right)}~, \quad\quad\qquad\text{when}~\lambda_4^2\leq\frac{1}{2},\\
\sqrt[3]{4(1-\lambda_4^2)^2\left(1-\sqrt{1-4\lambda_0^2\lambda_4^2}\right)}~,\quad\text{when}~\lambda_4^2\geq\frac{1}{2}.
\end{cases}
\end{align}
It can be seen that as $\lambda_1\to0$, Eq.(\ref{17a}) and Eq.(\ref{B10}) get closer to Eq.(\ref{15}) and Eq.(\ref{B8}),  respectively. As $\lambda_1$ increases, the maximum ergotropic volume obtained by the state by adjusting $\lambda_0$ and $\lambda_4$ decreases. After a certain value of $\lambda_1$, the ergotropic volume of $\ket{eGHZ}$ is always less than that of the maximal W state. Numerically, it can be observed that,
\begin{equation}
\Delta_V^G(eGHZ_1)\leq\Delta_V^G(W),\quad \forall\lambda_0,\lambda_4,\ \text{when } \lambda_1\geq0.4976~.\label{19}
\end{equation}
\end{itemize}.

\begin{remark}
Ergotropic volume has a downside that it is not a smooth function of the generalised Schmidt coefficients for three qubits. Nevertheless, the measure is relevant to the study of multipartite entanglement because of the fact that the ``sharpness" stems from the definition of ergotropic gap itself. Smoothness of a function is useful when the rate of change of that function with respect to a variable is of interest. Since ergotropic gap is something we can measure directly, it is not necessary to know whether it is smooth or not. Furthermore, although not smooth with respect to the generalised Schmidt coefficients, the ergotropic volume is a smooth function with respect to the bi-separable ergotropic gaps. Since ergotropic gap is physically measurable, it would be more interesting to study the rate of change of $\Delta^G_V$ with respect to $\Delta^{(2)}_{X|X^\mathsf{C}}$ than $\lambda_i$, which somewhat mitigates the above mentioned drawback of ergotropic volume.
\end{remark}

\section{In-equivalence of different genuine measures}\label{appendixC}
\begin{table}[t]
\caption{(Color online) In equivalence among the different genuine entanglement measures.  $\Delta^G$'s are the measures introduced in this work, $GMC$ is defined in \cite{Ma2011}, and $F$ is the concurrence fill introduced in\cite{Xie2021}. Two measures $E$ and $E^\prime$ are inequivalent if there exist states $\ket{\alpha}$ and $\ket{\beta}$ such that $E(\alpha)>E(\beta)$ whereas $E^\prime(\alpha)<E^\prime(\beta)$. Measures in different columns are inequivalent. $\Delta^G_{\min}$ and $GMC$ are kept in same column as they are equivalent (see Section \ref{appendixB}). Inequivalence of $F$ and $\Delta^G_V$ can be shown from the pair of states $\{\ket{\phi},\ket{\chi}\}$ -- $\Delta^G_V(\phi)>\Delta^G_V(\chi)$ whereas $F(\phi)<F(\chi)$ (see red colored cells). Similarly, from the blue coloured cells it is evident that $\Delta^G_{\min}$ and $\Delta^G_V$  are also inequivalent. In equivalency among other measures can also be observed from this table accordingly. }  \label{table2}
\begin{tabular}{|c||c|c|c|c|c|}
\hline
    State & ~~~~~~$\Delta^G_{\min}\big/GMC$~~~~~~ & ~~~~~~$\Delta^G_{\mbox{\footnotesize{avg}}}$~~~~~~ & ~~~~~~$\Delta_F^G$~~~~~~ & ~~~~~~$F$~~~~~~ & ~~~~~~$\Delta_V^G$~~~~~~ \\\hline\hline
   $\ket{\psi}_{ABC}=\frac{1}{\sqrt{3}}\ket{000}+\sqrt{\frac{2}{3}}\ket{111}$ & $0.667\big/0.943$ & $0.667$ & $0.667$ & $0.889$ & $0.667$ \\\hline
    $\ket{\phi}_{ABC}=\frac{1}{\sqrt{2}}\ket{000}+\sqrt{\frac{9}{32}}\ket{110}+\sqrt{\frac{7}{32}}\ket{111}$ & $0.250\big/0.661$ & $0.750$ & $0.559$ & \cellcolor{red!15}$0.702$ & \cellcolor{red!55}$0.630$ \\\hline
    $\ket{\chi}_{ABC}=\frac{1}{2}\ket{000}+\frac{\sqrt{3}}{2}\ket{111}$ & $0.500\big/0.866$ & $0.500$ & $0.500$ & \cellcolor{red!55}$0.750$ & \cellcolor{red!15}$0.500$ \\\hline
    $\ket{\zeta}_{ABC}=\sqrt{\frac{3}{8}}\ket{000}+\frac{1}{\sqrt{3}}\ket{110}+\sqrt{\frac{7}{24}}\ket{111}$ & \cellcolor{blue!15}$0.250\big/0.661$ & $0.583$ & $0.479$ & $0.679$ & \cellcolor{blue!45}$0.520$\\\hline
    $\ket{\eta}_{ABC}=\frac{3}{\sqrt{50}}\ket{000}+\sqrt{\frac{41}{50}}\ket{111}$ & \cellcolor{blue!45}$0.360\big/0.768$ & $0.360$ & $0.360$ & $0.590$ & \cellcolor{blue!15}$0.360$\\
   \hline
\end{tabular}
\end{table}

This section is devoted for showing the in-equivalency among the different genuine measures. Finding in-equivalent monotones is quite important as it helps us to comment on the state inter-convertibility of the different multipartite entangled states. As the entanglement measures are LOCC monotone, for an entangled pair $\ket{\psi}$ and $\ket{\phi}$; $E(\psi) > E(\phi)$ implies abandoned transformation $\ket{\phi} \mapsto \ket{\psi}$ under any LOCC. Two measures $E$ and $E^\prime$ are called inequivalent if there exist states $\ket{\alpha}$ and $\ket{\beta}$ such that $E(\alpha)>E(\beta)$ whereas $E^\prime(\alpha)<E^\prime(\beta)$ which immediately says neither one can be converted to each other. We have compared our proposed measure with the concurrence fill and Genuinely Multipartite Concurrence (GMC) \cite{Ma2011}, as most of the existing measures are either equivalent or related to GMC in some way. In the Section \ref{appendixB}, we have explicitly shown that genuine measure $GMC$ and $\Delta^G_{\min}$ are equivalent so they would always maintain the same order for any pair of states while other measures are independent with each other ( example of pair of states are shown in table \ref{table2}).  
\begin{itemize}
\item \{$\Delta^G_{\min}$/$GMC$, $\Delta^G_{\mbox{\footnotesize{avg}}}$\} show different order for the pair of states $\{\ket{\phi}_{ABC},\ket{\chi}_{ABC}\}$. 
\item \{$\Delta^G_{\min}$/$GMC$, $\Delta^G_{F}$\} show different order for the pair of states $\{\ket{\phi}_{ABC},\ket{\chi}_{ABC}\}$. 
\item \{$\Delta^G_{\min}$/$GMC$, $F$\} show different order for the pair of states $\{\ket{\phi}_{ABC},\ket{\eta}_{ABC}\}$. 
\item $\{\Delta^G_{\min}/GMC,\Delta^G_V\}$ show different order for the pair of states $\{\ket{\zeta}_{ABC},\ket{\eta}_{ABC}\}$. 
\item \{$ \Delta^G_{\mbox{\footnotesize{avg}}}, \Delta^G_F$\} show different order for the pair of states $\{\ket{\chi}_{ABC},\ket{\zeta}_{ABC}\}$. 
\item \{$\Delta^G_{\mbox{\footnotesize{avg}}}, F$\} show different order for the pair of states $\{\ket{\chi}_{ABC},\ket{\zeta}_{ABC}\}$.
\item \{$\Delta^G_{\mbox{\footnotesize{avg}}}, \Delta^G_V$\} show different order for the pair of states $\{\ket{\psi}_{ABC},\ket{\phi}_{ABC}\}$.
\item \{$\Delta^G_F,F$\} show different order for the pair of states $\{\ket{\phi}_{ABC},\ket{\chi}_{ABC}\}$.
\item \{$\Delta^G_F,\Delta^G_V$\} show different order for the pair of states $\{\ket{\chi}_{ABC},\ket{\zeta}_{ABC}\}$.
\item \{$F,\Delta^G_V$\} show different order for the pair of states $\{\ket{\phi}_{ABC},\ket{\chi}_{ABC}\}$.
\end{itemize}
Thus $\Delta_{\min}^G\big/$GMC, $\Delta^G_{\mbox{\footnotesize{avg}}}$, $F$,$\Delta_F^G$, and $\Delta_V^G$ all are inequivalent measures. 

\section{\textbf{\textit{k}}-Separable Ergotropic Gap}\label{appendixD}
For an $n$-partite pure quantum state, we can define a sequence of quantities called $k$-separable ergotropic gaps. It is the difference between the globally extractable work and the extractable work in a $k$-partition of the system {\it i.e}, for $2\leq k\leq n$,
\begin{align}
\Delta_{X_1|\cdots |X_k}^{(k)}&= W_e^g-W_e^{X_1|...|X_k}\nonumber\\
    &=\Tr\left(\ket{\psi}_{A_1\cdots A_n}\bra{\psi}H\right)-\left(\sum_{i=1}^k\Tr(\rho_{X_i}H_{X_i})-\Tr(\rho_{X_i}^pH_{X_i})\right)\nonumber\\
    &=\sum_{i=1}^k\Tr(\rho_{X_i}^pH_{X_i}),\label{keg}
\end{align}
where $X_i=A_1'\cdots A_{m_i}'$ are such that $X_i\cap X_{i'}=\emptyset,~\forall~i,i^\prime\in\{1,2\cdots k\}$; $\bigcup_{i=1}^kX_i=\{A_1,\cdots, A_n\}$, and $d_i=\prod_{l=1}^{m_i}d_l'$; $d_l'$ are the dimension of the subsystems $A_l'$ that are part of $X_i$. Here $\Tr\left(\ket{\psi}_{A_1\cdots A_n}\bra{\psi}H\right)=\sum_{i=1}^k\Tr(\rho_{X_i}H_{X_i})$, as the global Hamiltonian is interaction free. Note that, for $k=2$, the quantity simply becomes the bi-separable ergotropic gap, on the other extreme $k=n$, it is nothing but the fully separable ergotropic gap. Before commenting on the LOCC monotonicity of the quantities $\Delta_{X_1|\cdots |X_k}^{(k)}$, it should be noted that the Hamiltonian of a marginal $X_i$ for such system can have degenerate energy levels. We arrange the energy eigenvectors in such a way that the eigenvalues are in a weakly decreasing order. In other words, the Hamiltonian of $X_i$, written as $H_{X_i}=\sum_{j=0}^{d_i-1}e_j\ket{\epsilon_j}\bra{\epsilon_j}$, can be arranged in such a way that $e_{j+1}\geq e_j$ for all $j\in\{1,2\cdots (d_i-2)\}$. Note that in this case the ground state $\ket{e}_0=\ket{0}_{A'_1}\otimes \cdots \otimes \ket{0}_{A'_{m_i}}$. For instance, in a $\mathbf{C}^5\otimes(\mathbf{C}^2)^{\otimes3}$, the $(\mathbf{C}^2)^{\otimes3}$ subsystem can have 2 sets of 3 energy eigenstates with the same energy, the set of states $\{\ket{001}, \ket{010}, \ket{100}\}$ each having energy value $e_1$ and the set of states $\{\ket{011}, \ket{101}, \ket{110}\}$ each having energy value $2e_1$, if the marginals of the three qubits have the same Hamiltonian. We arrange the eigenstates in the order $\ket{000}, \ket{001}, \ket{010}, \ket{100}, \ket{011}, \ket{101}, \ket{110}, \ket{111}$ so that the energy eigenvalues are weakly increasing. We can now proceed to analyze the monotonicity of $\Delta_{X_1|\cdots |X_k}^{(k)}$ under LOCC.
\begin{theorem}\label{th2}
The $k$-separable ergotropic gap $\Delta_{X_1|\cdots |X_k}^{(k)}$ of a pure state $\ket{\psi}_{A_1\cdots A_n}$ is non-increasing under local operations and classical communications (LOCC).
\end{theorem}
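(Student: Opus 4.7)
The plan is to mimic the proof of Theorem~\ref{th1}, applying the majorization argument once per block in the $k$-partition rather than once per individual party. As in Theorem~\ref{th1}, it suffices to establish the inequality $\Delta_{X_1|\cdots|X_k}^{(k)}(\ket{\psi})\geq\Delta_{X_1|\cdots|X_k}^{(k)}(\ket{\phi})$ for every deterministic LOCC transformation $\ket{\psi}_{A_1\cdots A_n}\to\ket{\phi}_{A_1\cdots A_n}$, since the general (probabilistic/averaged) case and extensions to mixed states then follow by the usual convex-roof arguments.

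First I would observe that any LOCC protocol among the $n$ parties $A_1,\dots,A_n$ is in particular an LOCC protocol across the coarser bipartite cut $X_i \,|\, X_i^{\mathsf{C}}$, because operations within $X_i$ and within $X_i^{\mathsf{C}}$, together with classical communication between them, are a special case of local operations on $A_1,\dots,A_n$ with classical communication. Therefore Nielsen's theorem \cite{Nielsen1999} applies across every such cut, giving
\begin{equation*}
\sum_{j=0}^{k'}\lambda_j^{\psi,X_i}\leq\sum_{j=0}^{k'}\lambda_j^{\phi,X_i}\qquad\forall\ k'\in\{0,\dots,d_i-2\},
\end{equation*}
where $\{\lambda_j^{\psi,X_i}\}$ and $\{\lambda_j^{\phi,X_i}\}$ are the eigenvalues of $\rho_{X_i}$ and $\sigma_{X_i}$ arranged in decreasing order, and both distributions sum to unity.

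Next, for each fixed block $X_i$, I would repeat verbatim the Abel-summation manipulation used in Theorem~\ref{th1}: rewrite the majorization inequality as $\sum_{j=k'+1}^{d_i-1}\lambda_j^{\psi,X_i}\geq\sum_{j=k'+1}^{d_i-1}\lambda_j^{\phi,X_i}$, multiply the $k'$-th inequality by the nonnegative factor $(e_{k'+1}-e_{k'})\geq 0$, and sum over $k'$. This yields $\Tr(\rho_{X_i}^p H_{X_i})\geq\Tr(\sigma_{X_i}^p H_{X_i})$. The only point requiring care compared to Theorem~\ref{th1} is that $H_{X_i}$ may now be degenerate, since $X_i$ is a composite subsystem; however, as noted in the paragraph preceding the theorem, the eigenvalues can always be ordered so that $e_{j+1}\geq e_j$, and the telescoping argument only needs this weak inequality, so the conclusion goes through unchanged.

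Finally, summing $\Tr(\rho_{X_i}^p H_{X_i})\geq\Tr(\sigma_{X_i}^p H_{X_i})$ over $i\in\{1,\dots,k\}$ and invoking formula~(\ref{keg}) gives
\begin{equation*}
\Delta_{X_1|\cdots|X_k}^{(k)}(\ket{\psi})=\sum_{i=1}^{k}\Tr(\rho_{X_i}^p H_{X_i})\geq\sum_{i=1}^{k}\Tr(\sigma_{X_i}^p H_{X_i})=\Delta_{X_1|\cdots|X_k}^{(k)}(\ket{\phi}),
\end{equation*}
which completes the proof. The hardest conceptual step is the first one — justifying that an $n$-party LOCC protocol is admissible across each coarse-grained bipartite cut $X_i|X_i^{\mathsf{C}}$ so that Nielsen's theorem can be invoked block by block; the rest is essentially the Abel-summation calculation already carried out in Theorem~\ref{th1}, adapted to possibly degenerate block Hamiltonians.
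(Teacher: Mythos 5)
Your proposal is correct and follows essentially the same route as the paper: the paper also proves Theorem~\ref{th2} by repeating the Theorem~\ref{th1} argument with each party $A_i$ replaced by the block $X_i$ (so that Nielsen's majorization criterion is applied across each coarse-grained cut $X_i|X_i^{\mathsf{C}}$), noting that the block Hamiltonian's eigenvalues need only be weakly increasing for the telescoping step, and then summing the resulting passive-state-energy inequalities over the $k$ blocks. Your explicit remark that an $n$-party LOCC protocol is in particular an LOCC protocol across every coarser bipartition is a point the paper leaves implicit, but it is the same argument.
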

\begin{proof}
The proof is a generalisation of the proof for Theorem \ref{th1}. The same set of inequalities hold if we change $A_i$ into $X_i$ and redefine $d_i$ as $\prod_{l=1}^{m_i}d_l'$. Another difference is that the energy eigenvalues are only weakly increasing, {\it i.e,} $e_{j+1}\geq e_j$ for all $j\in\{1,2\cdots (d_i-2)\}$. The final expression will thus become,
\begin{equation}
    \sum_{j=0}^{d_i-1}e_j\lambda_j^\psi\geq\sum_{j=0}^{d_i-1}e_j\lambda_j^\phi,
\end{equation}
both sides are nothing but the passive state energies of the corresponding subsystems, $\rho_{X_i}$ and $\sigma_{X_i}$ respectively. The inequality holds for arbitrary $X_i$. Thus, adding passive state energies of all $X_i$ on both sides, we get
\begin{eqnarray}
    \sum_{i=1}^k\Tr(\rho_{X_i}^pH_{X_i})\geq&&\sum_{i=1}^k\Tr(\sigma_{X_i}^pH_{X_i})\\
    \qquad\text{or,}\qquad
    \Delta_{X_1|\cdots |X_k}^{(k)}(\ket{\psi})\geq&&\Delta_{X_1|\cdots |X_k}^{(k)}(\ket{\phi})
\end{eqnarray}
Therefore, $k$-separable ergotropic gap of a pure state is non-increasing under LOCC.
\end{proof}
Interestingly, $\Delta_{X_1|\cdots |X_k}^{(k)}$ vanishes for states that are separable in that particular $k$-partition $X_1|\cdots |X_k$, as the marginals of each $X_i$ will be pure and thus can be transformed into the lowest energetic state using a localised joint unitary between parties in $X_i$ i.e., $U_{X_1}\otimes\cdots \otimes U_{X_k}$, although it needs not to be zero in other $k$-partitions $X'_1|\cdots |X'_k$.  Using $\Delta_{X_1|\cdots |X_k}^{(k)}$'s, one can define suitable functions that can detect the $k$-nonseparability. In doing so, at first we will give the definition of complete entanglement measure and after that $k$-nonseparable measure and show that we can define some measure on the basis of ergotropic gap.

\subsection{Fully Ergotropic Gap as a Complete Entanglement Measure}
The criteria for a measure to be a complete multipartite entanglement measure as proposed in Ref.\cite{Guo2020} are,
\begin{enumerate}
\item[(C1)] $E^{(n)}(\rho)=0$ if $\rho$ is fully separable.
\item[(C2)] $E^{(n)}$ cannot increase under n-partite LOCC.
\item[(C3)] $E^{(n)}(\rho_{A_1A_2\cdots A_n})\geq E^{(k)}(\rho_{A_1'A_2'\cdots A_k'})$ for any $2\leq k\leq n$, and $E^{(n)}(\rho_{A_1A_2\cdots A_n})=E^{(n)}(\rho_{\pi(A_1A_2\cdots A_n)})$.
\item[(C4)] $E^{(n)}(\rho_{A_1A_2\cdots A_n})\geq E^{(k)}(\rho_{X_1X_2\cdots X_k})$ for any $2\leq k\leq n$.
\end{enumerate}
Here $\rho_{A_1'A_2'\cdots A_k'}$ is the marginal of any of the $k$-partite subsystem, $\pi(A_1A_2\cdots A_n)$ is any permutation of the subsystems, and $X_1X_2\cdots X_k$ is any $k$-partition of the system.
We take a look at fully separable ergotropic gap in this context for pure states.
\begin{theorem}
Fully ergotropic gap is a complete entanglement measure for pure quantum states.
\end{theorem}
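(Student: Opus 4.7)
The plan is to verify each of the four defining conditions (C1)--(C4) in turn, using the closed-form expression $\Delta^{(n)}_{A_1|\cdots|A_n}(|\psi\rangle)=\sum_{i=1}^{n}\Tr(\rho^{p}_{A_{i}}H_{A_{i}})$ together with the convex-roof extension to mixed marginals. Condition (C1) is immediate: for a pure fully product state each $\rho_{A_{i}}$ is pure, its passive state is $|\epsilon_{0}\rangle\!\langle\epsilon_{0}|$, and every term in the sum vanishes. Condition (C2) has already been established as Theorem~\ref{th1}, so nothing new is required.

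For the permutation part of (C3) I would simply observe that the expression for $\Delta^{(n)}$ is a symmetric sum over the parties, hence invariant under any relabelling. For the marginal inequality, let $\rho_{A_{1}'\cdots A_{k}'}=\Tr_{A_{k+1}'\cdots A_{n}'}(|\psi\rangle\!\langle\psi|)$, and let $\{q_{l},|\chi_{l}\rangle\}$ be any pure-state decomposition. Since $\rho_{A_{i}'}=\sum_{l}q_{l}(|\chi_{l}\rangle\!\langle\chi_{l}|)_{A_{i}'}$ and the passive-state energy $E_{p}(\sigma)=\min_{U}\Tr(U\sigma U^{\dagger}H)$ is a minimum of linear functionals in $\sigma$, it is concave, so $\Tr(\rho^{p}_{A_{i}'}H_{A_{i}'})\ge \sum_{l}q_{l}\Tr((|\chi_{l}\rangle\!\langle\chi_{l}|)^{p}_{A_{i}'}H_{A_{i}'})$. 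Summing over $i$ and then minimising over decompositions gives $\sum_{i=1}^{k}\Tr(\rho^{p}_{A_{i}'}H_{A_{i}'})\ge \Delta^{(k)}(\rho_{A_{1}'\cdots A_{k}'})$. On the other hand, the $A_{i}'$-marginal of $\rho_{A_{1}'\cdots A_{k}'}$ equals the $A_{i}'$-marginal of $|\psi\rangle$, so the left-hand side is bounded above by the full sum $\sum_{j=1}^{n}\Tr(\rho^{p}_{A_{j}}H_{A_{j}})=\Delta^{(n)}(|\psi\rangle)$ since all terms are non-negative. Chaining these two inequalities yields (C3).

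For (C4), fix any $k$-partition $X_{1}|\cdots|X_{k}$ of $\{A_{1},\ldots,A_{n}\}$ and view $|\psi\rangle$ as a $k$-partite pure state; then $\Delta^{(k)}_{X_{1}|\cdots|X_{k}}(|\psi\rangle)=\sum_{i=1}^{k}\Tr(\rho^{p}_{X_{i}}H_{X_{i}})$. The key observation is that product unitaries $\bigotimes_{A_{j}\in X_{i}}U_{A_{j}}$ form a strict subset of the joint unitaries on $X_{i}$, so minimising energy over the larger set can only give a smaller (or equal) value:
\begin{align*}
\Tr(\rho^{p}_{X_{i}}H_{X_{i}})
 \;=\; \min_{U_{X_{i}}}\Tr\!\bigl(U_{X_{i}}\rho_{X_{i}}U_{X_{i}}^{\dagger}H_{X_{i}}\bigr)
 \;\le\; \sum_{A_{j}\in X_{i}} \Tr(\rho^{p}_{A_{j}}H_{A_{j}}).
\end{align*}
Summing over $i$ collapses the right-hand side into the full $n$-party sum, giving $\Delta^{(k)}_{X_{1}|\cdots|X_{k}}(|\psi\rangle)\le \Delta^{(n)}(|\psi\rangle)$. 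The condition (C4) in the reference only asks for the $k$-partite fully-ergotropic-gap measure on the coarse-grained system, so this comparison suffices. The main conceptual obstacle I expect is (C3): one has to be careful that the convex-roof extension is genuinely controlled by the pure-state formula on the full system, and the concavity of passive-state energy is the right tool; the rest of the argument is essentially bookkeeping on sums of non-negative terms and on the inclusion of local unitaries inside joint ones.
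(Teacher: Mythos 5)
Your proposal is correct and follows essentially the same route as the paper: (C1) from purity of the marginals, (C2) by invoking Theorem~\ref{th1}, (C3) via the convex-roof definition plus non-negativity of the passive-state energies, and (C4) from the fact that product unitaries on each block $X_i$ are a subset of the joint unitaries (using the interaction-free form of $H_{X_i}$). The only difference is cosmetic: you make the concavity of the passive-state energy explicit in (C3), a step the paper's chain of inequalities uses implicitly.
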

\begin{proof}
From Eq.(\ref{e001}), it can be seen that fully ergotropic gap is zero for fully separable states [{\it i.e.} condition (C1) is satisfied], and that permutations of subsystems will not change the value of the measure. It has already been proved in Theorem \ref{th1} that $\Delta_{A_1|\cdots |A_n}^{(n)}$ is non-increasing under $n$-partite LOCC [{\it i.e.} condition (C2) is satisfied]. We are, therefore, left to prove that $\Delta^{(n)}_{A_1|A_2|\cdots |A_n}(\ket{\psi})\geq \Delta^{(k)}_{A_1'|A_2'|\cdots |A_k'}(\rho_{A_1'A_2'\cdots A_k'})$ and $\Delta^{(n)}_{A_1|A_2|\cdots |A_n}(\ket{\psi})\geq \Delta^{(k)}_{X_1|X_2|\cdots |X_k}(\ket{\psi})$for any $2\leq k\leq n$, {\it i.e.} conditions (C3) and (C4), respectively. From Eq.(\ref{keg}), and from the definition of $k$-separable ergotropic gap for mixed states using convex roof extension, we get,
\begin{eqnarray*}\label{d01}
\Delta^{(k)}_{A_1'|A_2'|\cdots |A_k'}(\rho_{A_1'A_2'\cdots A_k'})&&=\min_{\left\{p_l,\ket{\chi_l}\right\}}\left\{\sum_{l=1}^mp_l\Delta_{A_1'|\cdots|A_k'}^{(k)}(\ket{\chi_l}_{A_1'\cdots A_k'})\right\}
=\min_{\{p_l,\ket{\chi_l}\}}\left\{\sum_{l=1}^mp_l\left(\sum_{i=1}^k\Tr(\rho_{l,A'_i}^pH_{A'_i})\right)\right\}\\
&&=\min_{\{p_l,\ket{\chi_l}\}}\left\{\sum_{i=1}^k\left(\sum_{l=1}^mp_l\Tr(\rho_{l,A'_i}^pH_{A'_i})\right)\right\}
\leq\sum_{i=1}^k\Tr(\rho_{A'_i}^pH_{A'_i})
\leq\sum_{i=1}^n\Tr(\rho_{A'_i}^pH_{A'_i})\\
&&=\sum_{i=1}^n\Tr(\rho_{A_i}^pH_{A_i})=\Delta^{(n)}_{A_1|A_2|\cdots |A_n}(\ket{\psi}),
\end{eqnarray*}
where $\{p_l,\ket{\chi_l}\}$ are all possible decompositions of $\rho_{A_1'\cdots A_k'}$. Since we chose $k$ arbitrarily, $\Delta^{(n)}_{A_1|A_2|\cdots |A_n}(\ket{\psi})\geq \Delta^{(k)}_{A_1'|A_2'|\cdots |A_k'}(\rho_{A_1'A_2'\cdots A_k'})$ for any $2\leq k\leq n$. Now, recall the Eq. (\ref{keg}), the $k$-separable ergotropic gap of $\ket{\psi}$ is given by  $\sum_{i=1}^k\Tr(\rho_{X_i}^pH_{X_i})$. Manifestly it follows that for any of the $X_i=A_1'\cdots A_{m_i}'$, the passive state energy of $X_i$ is less than the sum of passive state energies of each $A_j'$ of that partition, as it might be possible to go to a lower energy state when all $A_j'$ in $X_i$ perform a joint unitary rather than completely local unitaries. In other words,
\begin{align*}
    \Tr(\rho_{X_i}^pH_{X_i})\leq\sum_{j=1}^{m_i}\Tr(\rho_{A'_j}^pH_{A'_j}),\qquad\text{for all }\;i\in\{1,2,\cdots,k\}.
\end{align*}
Since $X_i$ is a disjoint partition of $A_1\cdots A_n$ we have $\sum_{i=1}^km_i=n$. Thus, adding the passive state energies of all $X_i$, we get,
\begin{align*}
    \sum_{i=1}^k\Tr(\rho_{X_i}^pH_{X_i})&\leq\sum_{i=1}^k\left(\sum_{j=1}^{m_i}\Tr(\rho_{A'_j}^pH_{A'_j})\right)\\
    \Rightarrow\quad\Delta^{(k)}_{X_1|X_2|\cdots |X_k}(\ket{\psi})&\leq\sum_{i=1}^n\Tr(\rho_{A_i}^pH_{A_i})=\Delta^{(n)}_{A_1|A_2|\cdots |A_n}(\ket{\psi}).
\end{align*}
Since the $k$-partition was chosen arbitrarily, the above inequality holds for all $k$-partitions, and for all $2\leq k\leq n$. This completes the proof.
\end{proof}
It is also interesting to note that condition (C4) holds for any $(k+m)\leq n$ instead of $n$, where $m\geq 0$. Let us show this by simply proving that it holds for $m=1$, and the rest follows by induction.

Let $X_1|\cdots |X_k$ be a partition and suppose we partition an arbitrary $X_l$ further into $X_{l_1}$ and $X_{l_2}$. Respectively, the $k$-separable and $(k+1)$-separable ergotropic gaps are,
\begin{align*}
\Delta^{(k)}_{X_1|X_2|\cdots |X_k}(\ket{\psi})&=\sum_{i=1}^k\Tr(\rho_{X_i}^pH_{X_i}),\\
\Delta^{(k+1)}_{X_1|X_2|\cdots |X_{l_1}|X_{l_2}|\cdots |X_k}(\ket{\psi})&=\sum_{i=1}^{l-1}\Tr(\rho_{X_i}^pH_{X_i})+\sum_{i=l-1}^{k}\Tr(\rho_{X_i}^pH_{X_i})+\Tr(\rho_{X_{l_1}}^pH_{X_{l_1}})+\Tr(\rho_{X_{l_2}}^pH_{X_{l_2}}).
\end{align*}
Since all the others except the ones associated with $l$, $l_1$, and $l_2$ are the same, we get
\begin{align*}
\Delta^{(k+1)}_{X_1|X_2|\cdots |X_{l_1}|X_{l_2}|\cdots |X_k}(\ket{\psi})-\Delta^{(k)}_{X_1|X_2|\cdots |X_k}(\ket{\psi})=\Tr(\rho_{X_{l_1}}^pH_{X_{l_1}})+\Tr(\rho_{X_{l_2}}^pH_{X_{l_2}})-\Tr(\rho_{X_{l}}^pH_{X_{l}}).
\end{align*}
It is obvious that the passive state energy of $X_l$ is less than that of the sum of $X_{l_1}$ and $X_{l_2}$, as it is possible to reach a lower energy state using a joint unitary between $X_{l_1}$ and $X_{l_2}$. Thus,
\begin{align*}
\Delta^{(k+1)}_{X_1|X_2|\cdots |X_{l_1}|X_{l_2}|\cdots |X_k}(\ket{\psi})\geq\Delta^{(k)}_{X_1|X_2|\cdots |X_k}(\ket{\psi}).
\end{align*}
Since this holds for finer partitioning of an arbitrary $X_l$, this holds for all $X_i$ where $i\in\{1,\cdots k\}$. Also, since $k$-was arbitrarily chosen, this holds for all $k$. Therefore, we have a sequence of inequalities: $\Delta^{(k+1)}\geq\Delta^{(k)}$ for all $2\leq k\leq (n-1)$, where $\Delta^{(k+1)}$ is a finer partitioning of $\Delta^{(k)}$.

\subsection{Measures of \textbf{\textit{k}}-Nonseparability}
\begin{figure}[t]
\centering
\includegraphics[width=0.6\textwidth]{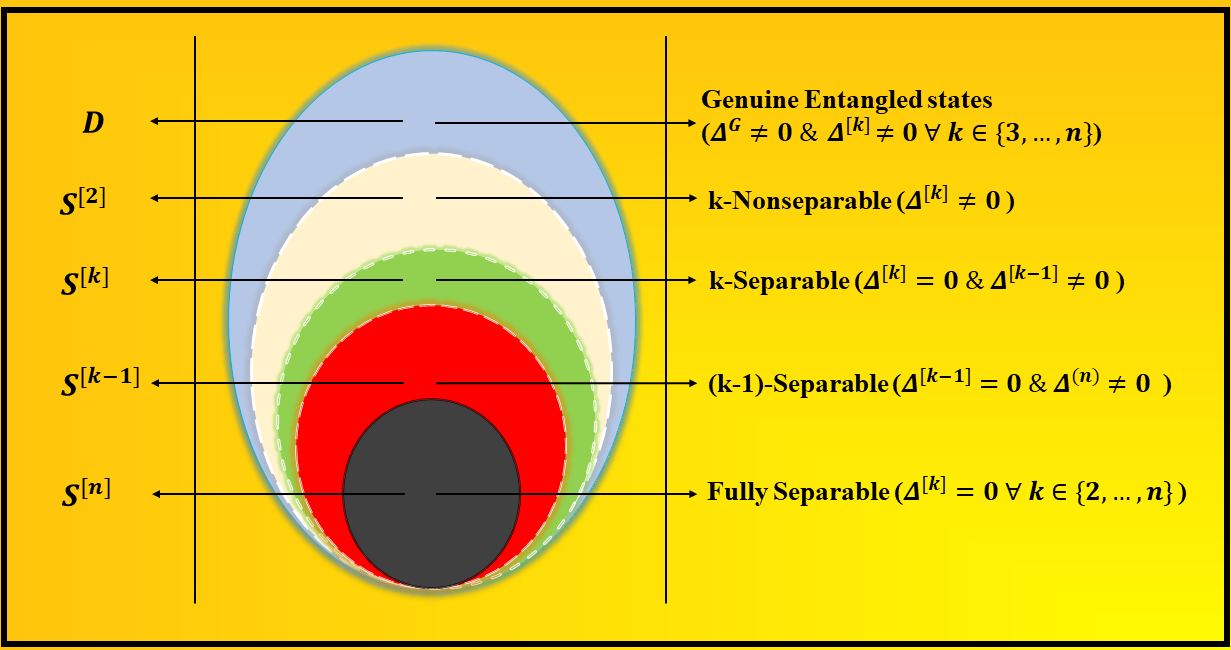}
\caption{This figure depicts the measure of $k$-nonseparability for $k$-nonseparable states. Here the largest {\it i.e.} blue ovoid $\mathcal{D}$ denotes the state space of an $n$-partite system, the smallest or black ovoid $\mathcal{S}^{[n]}$ is the state space of fully separable states and other intermediate state space, with the set inclusion relation $\mathcal{S}^{[n]}\subsetneq \mathcal{S}^{[k-1]}\subsetneq \mathcal{S}^{[k]}\subsetneq \mathcal{S}^{[2]}\subsetneq\mathcal{D}$, are depicted by their corresponding ovoids. Due to the fully-separability, any entanglement measure yields zero for the black ovoid. As the green ovoid $\mathcal{S}^{[k]}$ is the state space of $k$-separable states, a measure of $k$-separability $\Delta^{[k]}$ always yields zero for the states residing in that region. While the complementary state space $\mathcal{D}\setminus\mathcal{S}^{[k]}$ is the space of $k-$nonseparability which is captured through the non-zero values of $k-$nonseparable measure $\Delta^{[k]}$. Blue crescent caricatures the genuine state space, capture through the ergotropic based genuine measures $\Delta^G$ which provides non-zero for the states in that region. However as the genuine entangled states are $k-$nonseparable too, strictly $k-$separable measures $\Delta^{[k]}$ also yields zero $\forall~k \in \{3,\cdots, n\}$.}
\label{fig:egap}
\end{figure}
The idea of $k$-nonseparability is an extension of 2-nonseparability ($i.e,$ genuineness) when the state is non-separable across any of the $k$-partitions. A measure of $k$-nonseparability should satisfy the following conditions \cite{Guo2020,Guo2022},
\begin{enumerate}\label{conditions2}
\item[(K1)] $E^{[k]}(\rho)=0,~\forall~\rho\in\mathcal{S}^{[k]}$.
\item[(K2)] $E^{[k]}(\rho)>0,~\forall~\rho\in\mathcal{S}\setminus\mathcal{S}^{[k]}$.
\item[(K3)] $E^{[k]}(\sum_ip_i\rho_i)\le\sum_ip_iE^{[k]}(\rho_i),$ where $\{p_i\}$ be a probability distribution and $\rho_i\in\mathcal{S}^{[k]}$.
\item[(K4)] $E^{[k]}(\rho)\geq E^{[k]}(\sigma)$ whenever the state $\sigma$ can be obtained from the state $\rho$ under LOCC operation with all the subsystem being spatially separated.
\end{enumerate}
The latter two are already satisfied by $k$-separable ergotropic gap (point 3 using convex roof extension). But the first condition need not be satisfied by $k$-separable ergotropic gaps themselves. For instance, a tripartite state $\ket{\psi}_{ABC}=\ket{\Phi^+}_{AB}\ket{0}_C$ has $\Delta_{A|BC}>0$. However, the state is still bi-separable in $AB|C$ partition. Considering the above factors, we use $k$-separable ergotropic gaps to define several measures that can faithfully detect $k$-nonseparability of multipartite quantum states.

\begin{itemize}
\item[(KM1)] {\it Minimum \textbf{\textit{k}}-Separable Ergotropic Gap:} It is the minimum $k$-separable ergotropic gap of a state among all $k$-partitions.
\begin{align}
    \Delta^{[k]}_{\min}(\ket{\psi}):= \min\left\{\Delta^{(k)}_{X_1|\cdots |X_k}(\ket{\psi})\right\}.
\end{align}
Like the minimum ergotropic gap $\Delta^G_{\min}(\ket{\psi})$, it is a minimization over all $k$-partitions of the system. For states that are $k$-separable in some partition, the measure will be zero, and if a state is $k$-nonseparable in all partitions, the minimum $k$-separable ergotropic gap will also be non-zero. 

Minimum $k$-separable ergotropic gap has the same drawback as that of minimum ergotropic gap in that it gives the same minimum value for two states even if the non-minimum values are much higher for one of the states--an obvious indicator of superiority of one state.
\item[(KM2)] {\it \textbf{\textit{k}}-Separable Average Ergotropic Gap:} Similar to its genuine counterpart, it is the average of all the $k$-separable ergotropic gaps, given by the expression
\begin{align}
\Delta^{[k]}_{\mbox{\footnotesize{avg}}}(\ket{\psi}):= \frac{\Theta\left(\prod_{\mathcal{X}} \Delta^{(k)}_{X_1|\cdots |X_k}(\ket{\psi})\right)}{N(k)}\sum_{\mathcal{X}} \Delta^{(k)}_{X_1|\cdots |X_k}(\ket{\psi}),
\end{align}
where $\mathcal{X}=X_1|\cdots |X_k$ is a single $k$-partition, $N(k)$ is the number of $k$-partitions, and both the sum and the product are taken over all $k$-partitions. 

One of the drawbacks of genuine average ergotropic gap carries over to $k$-separable average ergotropic gap--it is a measure that is "made" genuine using the function $\Theta(Z)$, so it is discontinuous whenever one of the $\Delta^{(k)}_{X_1|\cdots |X_k}$ becomes zero.
\item[(KM3)] {\it \textbf{\textit{k}}-Ergotropic Volume:} Yet another generalisation of an existing genuine measure -- ergotropic volume --, defined as,
\begin{align}
\Delta^{[k]}_V(\ket{\psi}):= \left(\prod_{\mathcal{X}}\Delta^{(k)}_{X_1|\cdots X_k}(\ket{\psi})\right)^{\frac{1}{N(k)}},  
\end{align}
where $\mathcal{X}=X_1|\cdots |X_k$ is a single $k$-partition, $N(k)$ is the number of $k$-partitions, and the product is taken over all $k$-partitions. The factor $\frac{1}{N(k)}$ in power is considered so that this measure can be compared to the others. The $k$-ergotropic volume also gives a lower bound on the $k$-separable average ergotropic gap (just like ergotropic volume) which gives us some idea about the entanglement, while not being an ad-hoc measure made genuine like $k$-separable average ergotropic gap. It can be interpreted geometrically as an N(k)-sided hypercuboid, and it can distinguish between some states where $\Delta^{[k]}_{\mbox{\footnotesize{avg}}}(\ket{\psi})$ fails to do so.

Whether or not the measure has the "smoothness" issue is difficult to find out for $n>3$, as there is no generalised Schmidt decomposition for more than 3 party systems (even for tripartite systems, only qubits have a general form). However, $k$-ergotropic volume being a smooth function of $k$-separable ergotropic gaps is sufficient for us (which it certainly is), as ergotropic gap is something that can be measured in lab.
\end{itemize}
We can also define the convex roof extension for the above measures like to make them valid measures for mixed states as well. Since $k$-separable ergotropic gap was shown to be LOCC monotonic, so are the measures defined above (the latter two being monotonic functions of ergotropy, and the former being a minimum measure).

Importantly, the measures defined above carry physical interpretation and exhibit advantages of $k-$nonseparable states in the following operational task. \\\\
{\bf Task:} Consider a n-partite quantum system governed by some Hamiltonian $H_{A_i}=\sum_je^i_j\ket{\epsilon_j}\bra{\epsilon_j}$. Parties in the group $X_i$ can apply some unitary (generated from a cyclic potential alongside their joint Hamiltonian) on their respective parts to extract the maximum amount of work, called $k-$partite ergotropy defined by $W_e^{X_1|\cdots|X_k}$. Now, if all these parties are allowed to come together and collaborate, they can extract more work. This difference between the global ergotropy $(W_e^g)$ and $k-$partite ergotropy $(W_e^{X_1|\cdots|X_k})$ of quantum systems is called the $k-$ separable ergotropic gap $\Delta^{(k)}_{X_1|\cdots|X_k}$.  Here in the following we shortly discuss the physical meaning of the defined measures:
\begin{itemize}
\item[(KM1)] {\it Minimum \textbf{\textit{k}}-Separable Ergotropic Gap:} It is the minimum $k$-separable ergotropic gap of a state among all $k$-partitions.
\begin{align*}
\Delta^{[k]}_{\min}(\ket{\psi}):= \min\left\{\Delta^{(k)}_{X_1|\cdots |X_k}(\ket{\psi})\right\}.
\end{align*}
More specifically it says that at least $\Delta^{[k]}_{\min}(\ket{\psi})$ amount of extra work can be extracted globally compared to any $k$ partition (nonzero and disjoint) among the $n$ parties. This we may call collaborative advantage. This quantity will be nonzero if the state $\ket{\psi}$ is $k-$nonseparable, {\it i.e.} collaborative advantage is obtained due to $k-$nonseparability of the state. For $k=2$, this is the signature of genuineness, while the case $k=n-1$ corresponds multipartite entanglement.
\item[(KM2)] {\it \textbf{\textit{k}}-Separable Average Ergotropic Gap:} It is the average of all the $k$-separable ergotropic gaps, given by the expression
\begin{align*}
\Delta^{[k]}_{\mbox{\footnotesize{avg}}}(\ket{\psi}):= \frac{\Theta\left(\prod_{\mathcal{X}} \Delta^{(k)}_{X_1|\cdots |X_k}(\ket{\psi})\right)}{N(k)}\sum_{\mathcal{X}} \Delta^{(k)}_{X_1|\cdots |X_k}(\ket{\psi}),
\end{align*}
\item[(KM3)] {\it \textbf{\textit{k}}-Ergotropic Volume:} It is defined as,
\begin{align*}
\Delta^{[k]}_V(\ket{\psi}):= \left(\prod_{\mathcal{X}}\Delta^{(k)}_{X_1|\cdots X_k}(\ket{\psi})\right)^{\frac{1}{N(k)}}.  
\end{align*}
This quantity also carry some sort of physical meaning as it gives a lower bound on the $k$-separable average ergotropic gap.
\end{itemize}


\begin{thebibliography}{81}%
	\makeatletter
	\providecommand \@ifxundefined [1]{%
		\@ifx{#1\undefined}
	}%
	\providecommand \@ifnum [1]{%
		\ifnum #1\expandafter \@firstoftwo
		\else \expandafter \@secondoftwo
		\fi
	}%
	\providecommand \@ifx [1]{%
		\ifx #1\expandafter \@firstoftwo
		\else \expandafter \@secondoftwo
		\fi
	}%
	\providecommand \natexlab [1]{#1}%
	\providecommand \enquote  [1]{``#1''}%
	\providecommand \bibnamefont  [1]{#1}%
	\providecommand \bibfnamefont [1]{#1}%
	\providecommand \citenamefont [1]{#1}%
	\providecommand \href@noop [0]{\@secondoftwo}%
	\providecommand \href [0]{\begingroup \@sanitize@url \@href}%
	\providecommand \@href[1]{\@@startlink{#1}\@@href}%
	\providecommand \@@href[1]{\endgroup#1\@@endlink}%
	\providecommand \@sanitize@url [0]{\catcode `\\12\catcode `\$12\catcode
		`\&12\catcode `\#12\catcode `\^12\catcode `\_12\catcode `\%12\relax}%
	\providecommand \@@startlink[1]{}%
	\providecommand \@@endlink[0]{}%
	\providecommand \url  [0]{\begingroup\@sanitize@url \@url }%
	\providecommand \@url [1]{\endgroup\@href {#1}{\urlprefix }}%
	\providecommand \urlprefix  [0]{URL }%
	\providecommand \Eprint [0]{\href }%
	\providecommand \doibase [0]{https://doi.org/}%
	\providecommand \selectlanguage [0]{\@gobble}%
	\providecommand \bibinfo  [0]{\@secondoftwo}%
	\providecommand \bibfield  [0]{\@secondoftwo}%
	\providecommand \translation [1]{[#1]}%
	\providecommand \BibitemOpen [0]{}%
	\providecommand \bibitemStop [0]{}%
	\providecommand \bibitemNoStop [0]{.\EOS\space}%
	\providecommand \EOS [0]{\spacefactor3000\relax}%
	\providecommand \BibitemShut  [1]{\csname bibitem#1\endcsname}%
	\let\auto@bib@innerbib\@empty
	\bibitem [{\citenamefont {Carathéodory}(1907)}]{Caratheodory1907}%
	\BibitemOpen
	\bibfield  {author} {\bibinfo {author} {\bibfnamefont {C.}~\bibnamefont
			{Carathéodory}},\ }\bibfield  {title} {\bibinfo {title} {Über den
			variabilitätsbereich der koeffizienten von potenzreihen, die gegebene werte
			nicht annehmen},\ }\href {https://doi.org/10.1007/BF01449883} {\bibfield
		{journal} {\bibinfo  {journal} {Math. Ann.}\ }\textbf {\bibinfo {volume}
			{64}},\ \bibinfo {pages} {95} (\bibinfo {year} {1907})}\BibitemShut {NoStop}%
	\bibitem [{\citenamefont {Giles}(1964)}]{Giles1964}%
	\BibitemOpen
	\bibfield  {author} {\bibinfo {author} {\bibfnamefont {R.}~\bibnamefont
			{Giles}},\ }\bibfield  {title} {\bibinfo {title} {{\it Mathematical
				foundations of thermodynamics}, international series of monographs on pure
			and applied mathematics},\ }\href@noop {} {\bibfield  {journal} {\bibinfo
			{journal} {Pergamon Press, Oxford}\ } (\bibinfo {year} {1964})}\BibitemShut
	{NoStop}%
	\bibitem [{\citenamefont {Lieb}\ and\ \citenamefont
		{Yngvason}(1999)}]{Lieb1999}%
	\BibitemOpen
	\bibfield  {author} {\bibinfo {author} {\bibfnamefont {E.~H.}\ \bibnamefont
			{Lieb}}\ and\ \bibinfo {author} {\bibfnamefont {J.}~\bibnamefont
			{Yngvason}},\ }\bibfield  {title} {\bibinfo {title} {The physics and
			mathematics of the second law of thermodynamics},\ }\href
	{https://doi.org/10.1016/s0370-1573(98)00082-9} {\bibfield  {journal}
		{\bibinfo  {journal} {Phys. Rep.}\ }\textbf {\bibinfo {volume} {310}},\
		\bibinfo {pages} {1} (\bibinfo {year} {1999})}\BibitemShut {NoStop}%
	\bibitem [{\citenamefont {Lieb}\ and\ \citenamefont
		{Yngvason}(2013)}]{Lieb2013}%
	\BibitemOpen
	\bibfield  {author} {\bibinfo {author} {\bibfnamefont {E.~H.}\ \bibnamefont
			{Lieb}}\ and\ \bibinfo {author} {\bibfnamefont {J.}~\bibnamefont
			{Yngvason}},\ }\bibfield  {title} {\bibinfo {title} {The entropy concept for
			non-equilibrium states},\ }\href
	{https://doi.org/doi.org/10.1098/rspa.2013.0408} {\bibfield  {journal}
		{\bibinfo  {journal} {Proc. Math. Phys. Eng. Sci.}\ }\textbf {\bibinfo
			{volume} {469}},\ \bibinfo {pages} {20130408} (\bibinfo {year}
		{2013})}\BibitemShut {NoStop}%
	\bibitem [{\citenamefont {Lieb}\ and\ \citenamefont
		{Yngvason}(2014)}]{Lieb2014}%
	\BibitemOpen
	\bibfield  {author} {\bibinfo {author} {\bibfnamefont {E.~H.}\ \bibnamefont
			{Lieb}}\ and\ \bibinfo {author} {\bibfnamefont {J.}~\bibnamefont
			{Yngvason}},\ }\bibfield  {title} {\bibinfo {title} {Entropy meters and the
			entropy of non-extensive systems},\ }\href
	{https://doi.org/doi.org/10.1098/rspa.2014.0192} {\bibfield  {journal}
		{\bibinfo  {journal} {Proc. Math. Phys. Eng. Sci.}\ }\textbf {\bibinfo
			{volume} {470}},\ \bibinfo {pages} {20140192} (\bibinfo {year}
		{2014})}\BibitemShut {NoStop}%
	\bibitem [{\citenamefont {Landauer}(1961)}]{Landauer1961}%
	\BibitemOpen
	\bibfield  {author} {\bibinfo {author} {\bibfnamefont {R.}~\bibnamefont
			{Landauer}},\ }\bibfield  {title} {\bibinfo {title} {Irreversibility and heat
			generation in the computing process},\ }\href
	{https://doi.org/10.1147/rd.53.0183} {\bibfield  {journal} {\bibinfo
			{journal} {IBM J. Res. Dev.}\ }\textbf {\bibinfo {volume} {5}},\ \bibinfo
		{pages} {183} (\bibinfo {year} {1961})}\BibitemShut {NoStop}%
	\bibitem [{\citenamefont {Bennett}(1973)}]{Bennett1973}%
	\BibitemOpen
	\bibfield  {author} {\bibinfo {author} {\bibfnamefont {C.~H.}\ \bibnamefont
			{Bennett}},\ }\bibfield  {title} {\bibinfo {title} {Logical reversibility of
			computation},\ }\href {https://doi.org/10.1147/rd.176.0525} {\bibfield
		{journal} {\bibinfo  {journal} {IBM J. Res. Dev.}\ }\textbf {\bibinfo
			{volume} {17}},\ \bibinfo {pages} {525} (\bibinfo {year} {1973})}\BibitemShut
	{NoStop}%
	\bibitem [{\citenamefont {Landauer}(1988)}]{Landauer1988}%
	\BibitemOpen
	\bibfield  {author} {\bibinfo {author} {\bibfnamefont {R.}~\bibnamefont
			{Landauer}},\ }\bibfield  {title} {\bibinfo {title} {Dissipation and noise
			immunity in computation and communication},\ }\href
	{https://doi.org/10.1038/335779a0} {\bibfield  {journal} {\bibinfo  {journal}
			{Nature}\ }\textbf {\bibinfo {volume} {335}},\ \bibinfo {pages} {779}
		(\bibinfo {year} {1988})}\BibitemShut {NoStop}%
	\bibitem [{\citenamefont {Toyabe}\ \emph {et~al.}(2010)\citenamefont {Toyabe},
		\citenamefont {Sagawa}, \citenamefont {Ueda}, \citenamefont {Muneyuki},\ and\
		\citenamefont {Sano}}]{Toyabe2010}%
	\BibitemOpen
	\bibfield  {author} {\bibinfo {author} {\bibfnamefont {S.}~\bibnamefont
			{Toyabe}}, \bibinfo {author} {\bibfnamefont {T.}~\bibnamefont {Sagawa}},
		\bibinfo {author} {\bibfnamefont {M.}~\bibnamefont {Ueda}}, \bibinfo {author}
		{\bibfnamefont {E.}~\bibnamefont {Muneyuki}},\ and\ \bibinfo {author}
		{\bibfnamefont {M.}~\bibnamefont {Sano}},\ }\bibfield  {title} {\bibinfo
		{title} {Experimental demonstration of information-to-energy conversion and
			validation of the generalized jarzynski equality},\ }\href
	{https://doi.org/10.1038/nphys1821} {\bibfield  {journal} {\bibinfo
			{journal} {Nature Physics}\ }\textbf {\bibinfo {volume} {6}},\ \bibinfo
		{pages} {988} (\bibinfo {year} {2010})}\BibitemShut {NoStop}%
	\bibitem [{\citenamefont {B{\'{e}}rut}\ \emph {et~al.}(2012)\citenamefont
		{B{\'{e}}rut}, \citenamefont {Arakelyan}, \citenamefont {Petrosyan},
		\citenamefont {Ciliberto}, \citenamefont {Dillenschneider},\ and\
		\citenamefont {Lutz}}]{Brut2012}%
	\BibitemOpen
	\bibfield  {author} {\bibinfo {author} {\bibfnamefont {A.}~\bibnamefont
			{B{\'{e}}rut}}, \bibinfo {author} {\bibfnamefont {A.}~\bibnamefont
			{Arakelyan}}, \bibinfo {author} {\bibfnamefont {A.}~\bibnamefont
			{Petrosyan}}, \bibinfo {author} {\bibfnamefont {S.}~\bibnamefont
			{Ciliberto}}, \bibinfo {author} {\bibfnamefont {R.}~\bibnamefont
			{Dillenschneider}},\ and\ \bibinfo {author} {\bibfnamefont {E.}~\bibnamefont
			{Lutz}},\ }\bibfield  {title} {\bibinfo {title} {Experimental verification of
			landauer's principle linking information and thermodynamics},\ }\href
	{https://doi.org/10.1038/nature10872} {\bibfield  {journal} {\bibinfo
			{journal} {Nature}\ }\textbf {\bibinfo {volume} {483}},\ \bibinfo {pages}
		{187} (\bibinfo {year} {2012})}\BibitemShut {NoStop}%
	\bibitem [{\citenamefont {{\it et al.}}(2017)}]{Cottet2017}%
	\BibitemOpen
	\bibfield  {author} {\bibinfo {author} {\bibfnamefont {N.~C.}\ \bibnamefont
			{{\it et al.}}},\ }\bibfield  {title} {\bibinfo {title} {Observing a quantum
			maxwell demon at work},\ }\href {https://doi.org/10.1073/pnas.1704827114}
	{\bibfield  {journal} {\bibinfo  {journal} {PNAS}\ }\textbf {\bibinfo
			{volume} {114}},\ \bibinfo {pages} {7561} (\bibinfo {year}
		{2017})}\BibitemShut {NoStop}%
	\bibitem [{\citenamefont {Schr\"{o}dinger}(1935)}]{Schrdinger1935}%
	\BibitemOpen
	\bibfield  {author} {\bibinfo {author} {\bibfnamefont {E.}~\bibnamefont
			{Schr\"{o}dinger}},\ }\bibfield  {title} {\bibinfo {title} {Discussion of
			probability relations between separated systems},\ }\href
	{https://doi.org/10.1017/s0305004100013554} {\bibfield  {journal} {\bibinfo
			{journal} {Math. Proc. Camb. Philos. Soc.}\ }\textbf {\bibinfo {volume}
			{31}},\ \bibinfo {pages} {555} (\bibinfo {year} {1935})}\BibitemShut
	{NoStop}%
	\bibitem [{\citenamefont {Horodecki}\ \emph {et~al.}(2009)\citenamefont
		{Horodecki}, \citenamefont {Horodecki}, \citenamefont {Horodecki},\ and\
		\citenamefont {Horodecki}}]{Horodecki2009}%
	\BibitemOpen
	\bibfield  {author} {\bibinfo {author} {\bibfnamefont {R.}~\bibnamefont
			{Horodecki}}, \bibinfo {author} {\bibfnamefont {P.}~\bibnamefont
			{Horodecki}}, \bibinfo {author} {\bibfnamefont {M.}~\bibnamefont
			{Horodecki}},\ and\ \bibinfo {author} {\bibfnamefont {K.}~\bibnamefont
			{Horodecki}},\ }\bibfield  {title} {\bibinfo {title} {Quantum entanglement},\
	}\href {https://doi.org/10.1103/RevModPhys.81.865} {\bibfield  {journal}
		{\bibinfo  {journal} {Rev. Mod. Phys.}\ }\textbf {\bibinfo {volume} {81}},\
		\bibinfo {pages} {865} (\bibinfo {year} {2009})}\BibitemShut {NoStop}%
	\bibitem [{\citenamefont {G\"{u}hne}\ and\ \citenamefont
		{T{\'{o}}th}(2009)}]{Ghne2009}%
	\BibitemOpen
	\bibfield  {author} {\bibinfo {author} {\bibfnamefont {O.}~\bibnamefont
			{G\"{u}hne}}\ and\ \bibinfo {author} {\bibfnamefont {G.}~\bibnamefont
			{T{\'{o}}th}},\ }\bibfield  {title} {\bibinfo {title} {Entanglement
			detection},\ }\href {https://doi.org/10.1016/j.physrep.2009.02.004}
	{\bibfield  {journal} {\bibinfo  {journal} {Phys. Rep.}\ }\textbf {\bibinfo
			{volume} {474}},\ \bibinfo {pages} {1} (\bibinfo {year} {2009})}\BibitemShut
	{NoStop}%
	\bibitem [{\citenamefont {Bennett}\ \emph
		{et~al.}(1996{\natexlab{a}})\citenamefont {Bennett}, \citenamefont
		{Brassard}, \citenamefont {Popescu}, \citenamefont {Schumacher},
		\citenamefont {Smolin},\ and\ \citenamefont {Wootters}}]{Bennett1996(1)}%
	\BibitemOpen
	\bibfield  {author} {\bibinfo {author} {\bibfnamefont {C.~H.}\ \bibnamefont
			{Bennett}}, \bibinfo {author} {\bibfnamefont {G.}~\bibnamefont {Brassard}},
		\bibinfo {author} {\bibfnamefont {S.}~\bibnamefont {Popescu}}, \bibinfo
		{author} {\bibfnamefont {B.}~\bibnamefont {Schumacher}}, \bibinfo {author}
		{\bibfnamefont {J.~A.}\ \bibnamefont {Smolin}},\ and\ \bibinfo {author}
		{\bibfnamefont {W.~K.}\ \bibnamefont {Wootters}},\ }\bibfield  {title}
	{\bibinfo {title} {Purification of noisy entanglement and faithful
			teleportation via noisy channels},\ }\href
	{https://doi.org/10.1103/PhysRevLett.76.722} {\bibfield  {journal} {\bibinfo
			{journal} {Phys. Rev. Lett.}\ }\textbf {\bibinfo {volume} {76}},\ \bibinfo
		{pages} {722} (\bibinfo {year} {1996}{\natexlab{a}})}\BibitemShut {NoStop}%
	\bibitem [{\citenamefont {Bennett}\ \emph
		{et~al.}(1996{\natexlab{b}})\citenamefont {Bennett}, \citenamefont
		{Bernstein}, \citenamefont {Popescu},\ and\ \citenamefont
		{Schumacher}}]{Bennett1996(2)}%
	\BibitemOpen
	\bibfield  {author} {\bibinfo {author} {\bibfnamefont {C.~H.}\ \bibnamefont
			{Bennett}}, \bibinfo {author} {\bibfnamefont {H.~J.}\ \bibnamefont
			{Bernstein}}, \bibinfo {author} {\bibfnamefont {S.}~\bibnamefont {Popescu}},\
		and\ \bibinfo {author} {\bibfnamefont {B.}~\bibnamefont {Schumacher}},\
	}\bibfield  {title} {\bibinfo {title} {Concentrating partial entanglement by
			local operations},\ }\href {https://doi.org/10.1103/PhysRevA.53.2046}
	{\bibfield  {journal} {\bibinfo  {journal} {Phys. Rev. A}\ }\textbf {\bibinfo
			{volume} {53}},\ \bibinfo {pages} {2046} (\bibinfo {year}
		{1996}{\natexlab{b}})}\BibitemShut {NoStop}%
	\bibitem [{\citenamefont {Vedral}\ and\ \citenamefont
		{Kashefi}(2002)}]{Vedral2002(2)}%
	\BibitemOpen
	\bibfield  {author} {\bibinfo {author} {\bibfnamefont {V.}~\bibnamefont
			{Vedral}}\ and\ \bibinfo {author} {\bibfnamefont {E.}~\bibnamefont
			{Kashefi}},\ }\bibfield  {title} {\bibinfo {title} {Uniqueness of the
			entanglement measure for bipartite pure states and thermodynamics},\ }\href
	{https://doi.org/10.1103/PhysRevLett.89.037903} {\bibfield  {journal}
		{\bibinfo  {journal} {Phys. Rev. Lett.}\ }\textbf {\bibinfo {volume} {89}},\
		\bibinfo {pages} {037903} (\bibinfo {year} {2002})}\BibitemShut {NoStop}%
	\bibitem [{\citenamefont {Popescu}\ and\ \citenamefont
		{Rohrlich}(1997)}]{Popescu1997}%
	\BibitemOpen
	\bibfield  {author} {\bibinfo {author} {\bibfnamefont {S.}~\bibnamefont
			{Popescu}}\ and\ \bibinfo {author} {\bibfnamefont {D.}~\bibnamefont
			{Rohrlich}},\ }\bibfield  {title} {\bibinfo {title} {Thermodynamics and the
			measure of entanglement},\ }\href {https://doi.org/10.1103/PhysRevA.56.R3319}
	{\bibfield  {journal} {\bibinfo  {journal} {Phys. Rev. A}\ }\textbf {\bibinfo
			{volume} {56}},\ \bibinfo {pages} {R3319} (\bibinfo {year}
		{1997})}\BibitemShut {NoStop}%
	\bibitem [{\citenamefont {Horodecki}\ \emph
		{et~al.}(1998{\natexlab{a}})\citenamefont {Horodecki}, \citenamefont
		{Horodecki},\ and\ \citenamefont {Horodecki}}]{Horodecki1998}%
	\BibitemOpen
	\bibfield  {author} {\bibinfo {author} {\bibfnamefont {P.}~\bibnamefont
			{Horodecki}}, \bibinfo {author} {\bibfnamefont {R.}~\bibnamefont
			{Horodecki}},\ and\ \bibinfo {author} {\bibfnamefont {M.}~\bibnamefont
			{Horodecki}},\ }\bibfield  {title} {\bibinfo {title} {Entanglement and
			thermodynamical analogies},\ }\href {https://arxiv.org/abs/quant-ph/9805072}
	{\bibfield  {journal} {\bibinfo  {journal} {arXiv:quant-ph/9805072}\ }
		(\bibinfo {year} {1998}{\natexlab{a}})}\BibitemShut {NoStop}%
	\bibitem [{\citenamefont {Vedral}(2002)}]{Vedral2002}%
	\BibitemOpen
	\bibfield  {author} {\bibinfo {author} {\bibfnamefont {V.}~\bibnamefont
			{Vedral}},\ }\bibfield  {title} {\bibinfo {title} {A thermodynamical
			formulation of quantum information},\ }\href
	{https://doi.org/10.1063/1.1523779} {\bibfield  {journal} {\bibinfo
			{journal} {{AIP} Conf. Proc.}\ }\textbf {\bibinfo {volume} {643}},\ \bibinfo
		{pages} {41} (\bibinfo {year} {2002})}\BibitemShut {NoStop}%
	\bibitem [{\citenamefont {Weilemann}\ \emph {et~al.}(2016)\citenamefont
		{Weilemann}, \citenamefont {Kraemer}, \citenamefont {Faist},\ and\
		\citenamefont {Renner}}]{Weilenmann2016}%
	\BibitemOpen
	\bibfield  {author} {\bibinfo {author} {\bibfnamefont {M.}~\bibnamefont
			{Weilemann}}, \bibinfo {author} {\bibfnamefont {L.}~\bibnamefont {Kraemer}},
		\bibinfo {author} {\bibfnamefont {P.}~\bibnamefont {Faist}},\ and\ \bibinfo
		{author} {\bibfnamefont {R.}~\bibnamefont {Renner}},\ }\bibfield  {title}
	{\bibinfo {title} {Axiomatic relation between thermodynamic and
			information-theoretic entropies},\ }\href
	{https://doi.org/10.1103/PhysRevLett.117.260601} {\bibfield  {journal}
		{\bibinfo  {journal} {Phys. Rev. Lett.}\ }\textbf {\bibinfo {volume} {117}},\
		\bibinfo {pages} {260601} (\bibinfo {year} {2016})}\BibitemShut {NoStop}%
	\bibitem [{\citenamefont {Horodecki}\ \emph
		{et~al.}(1998{\natexlab{b}})\citenamefont {Horodecki}, \citenamefont
		{Horodecki},\ and\ \citenamefont {Horodecki}}]{Horodecki1998(2)}%
	\BibitemOpen
	\bibfield  {author} {\bibinfo {author} {\bibfnamefont {M.}~\bibnamefont
			{Horodecki}}, \bibinfo {author} {\bibfnamefont {P.}~\bibnamefont
			{Horodecki}},\ and\ \bibinfo {author} {\bibfnamefont {R.}~\bibnamefont
			{Horodecki}},\ }\bibfield  {title} {\bibinfo {title} {Mixed-state
			entanglement and distillation: Is there a ``bound'' entanglement in
			nature?},\ }\href {https://doi.org/10.1103/PhysRevLett.80.5239} {\bibfield
		{journal} {\bibinfo  {journal} {Phys. Rev. Lett.}\ }\textbf {\bibinfo
			{volume} {80}},\ \bibinfo {pages} {5239} (\bibinfo {year}
		{1998}{\natexlab{b}})}\BibitemShut {NoStop}%
	\bibitem [{\citenamefont {Horodecki}\ \emph {et~al.}(2002)\citenamefont
		{Horodecki}, \citenamefont {Oppenheim},\ and\ \citenamefont
		{Horodecki}}]{Horodecki2002}%
	\BibitemOpen
	\bibfield  {author} {\bibinfo {author} {\bibfnamefont {M.}~\bibnamefont
			{Horodecki}}, \bibinfo {author} {\bibfnamefont {J.}~\bibnamefont
			{Oppenheim}},\ and\ \bibinfo {author} {\bibfnamefont {R.}~\bibnamefont
			{Horodecki}},\ }\bibfield  {title} {\bibinfo {title} {Are the laws of
			entanglement theory thermodynamical?},\ }\href
	{https://doi.org/10.1103/PhysRevLett.89.240403} {\bibfield  {journal}
		{\bibinfo  {journal} {Phys. Rev. Lett.}\ }\textbf {\bibinfo {volume} {89}},\
		\bibinfo {pages} {240403} (\bibinfo {year} {2002})}\BibitemShut {NoStop}%
	\bibitem [{\citenamefont {Vidal}\ \emph {et~al.}(2002)\citenamefont {Vidal},
		\citenamefont {D\"ur},\ and\ \citenamefont {Cirac}}]{Vidal2002}%
	\BibitemOpen
	\bibfield  {author} {\bibinfo {author} {\bibfnamefont {G.}~\bibnamefont
			{Vidal}}, \bibinfo {author} {\bibfnamefont {W.}~\bibnamefont {D\"ur}},\ and\
		\bibinfo {author} {\bibfnamefont {J.~I.}\ \bibnamefont {Cirac}},\ }\bibfield
	{title} {\bibinfo {title} {Entanglement cost of bipartite mixed states},\
	}\href {https://doi.org/10.1103/PhysRevLett.89.027901} {\bibfield  {journal}
		{\bibinfo  {journal} {Phys. Rev. Lett.}\ }\textbf {\bibinfo {volume} {89}},\
		\bibinfo {pages} {027901} (\bibinfo {year} {2002})}\BibitemShut {NoStop}%
	\bibitem [{\citenamefont {Lami}\ and\ \citenamefont {Regula}(2021)}]{Lami2021}%
	\BibitemOpen
	\bibfield  {author} {\bibinfo {author} {\bibfnamefont {L.}~\bibnamefont
			{Lami}}\ and\ \bibinfo {author} {\bibfnamefont {B.}~\bibnamefont {Regula}},\
	}\bibfield  {title} {\bibinfo {title} {No second law of entanglement
			manipulation after all},\ }\href {https://doi.org/10.48550/arXiv.2111.02438}
	{\bibfield  {journal} {\bibinfo  {journal} {arXiv:2111.02438}\ } (\bibinfo
		{year} {2021})}\BibitemShut {NoStop}%
	\bibitem [{\citenamefont {Greenberger}\ \emph {et~al.}(1989)\citenamefont
		{Greenberger}, \citenamefont {Horne},\ and\ \citenamefont
		{Zeilinger}}]{Greenberger1989}%
	\BibitemOpen
	\bibfield  {author} {\bibinfo {author} {\bibfnamefont {D.~M.}\ \bibnamefont
			{Greenberger}}, \bibinfo {author} {\bibfnamefont {M.~A.}\ \bibnamefont
			{Horne}},\ and\ \bibinfo {author} {\bibfnamefont {A.}~\bibnamefont
			{Zeilinger}},\ }\bibfield  {title} {\bibinfo {title} {Going beyond bell’s
			theorem},\ }\href {https://doi.org/10.1007/978-94-017-0849-4} {\bibfield
		{journal} {\bibinfo  {journal} {Quantum Theory, and Conceptions of the
				Universe}\ ,\ \bibinfo {pages} {69}} (\bibinfo {year} {1989})}\BibitemShut
	{NoStop}%
	\bibitem [{\citenamefont {Greenberger}\ \emph {et~al.}(1990)\citenamefont
		{Greenberger}, \citenamefont {Horne}, \citenamefont {Shimony},\ and\
		\citenamefont {Zeilinger}}]{Greenberger1990}%
	\BibitemOpen
	\bibfield  {author} {\bibinfo {author} {\bibfnamefont {D.~M.}\ \bibnamefont
			{Greenberger}}, \bibinfo {author} {\bibfnamefont {M.~A.}\ \bibnamefont
			{Horne}}, \bibinfo {author} {\bibfnamefont {A.}~\bibnamefont {Shimony}},\
		and\ \bibinfo {author} {\bibfnamefont {A.}~\bibnamefont {Zeilinger}},\
	}\bibfield  {title} {\bibinfo {title} {Bell’s theorem without
			inequalities},\ }\href {https://doi.org/10.1119/1.16243} {\bibfield
		{journal} {\bibinfo  {journal} {Am. J. Phys.}\ }\textbf {\bibinfo {volume}
			{58}},\ \bibinfo {pages} {1131} (\bibinfo {year} {1990})}\BibitemShut
	{NoStop}%
	\bibitem [{\citenamefont {D\"ur}\ \emph {et~al.}(2000)\citenamefont {D\"ur},
		\citenamefont {Vidal},\ and\ \citenamefont {Cirac}}]{Dur2000}%
	\BibitemOpen
	\bibfield  {author} {\bibinfo {author} {\bibfnamefont {W.}~\bibnamefont
			{D\"ur}}, \bibinfo {author} {\bibfnamefont {G.}~\bibnamefont {Vidal}},\ and\
		\bibinfo {author} {\bibfnamefont {J.~I.}\ \bibnamefont {Cirac}},\ }\bibfield
	{title} {\bibinfo {title} {Three qubits can be entangled in two inequivalent
			ways},\ }\href {https://doi.org/10.1103/PhysRevA.62.062314} {\bibfield
		{journal} {\bibinfo  {journal} {Phys. Rev. A}\ }\textbf {\bibinfo {volume}
			{62}},\ \bibinfo {pages} {062314} (\bibinfo {year} {2000})}\BibitemShut
	{NoStop}%
	\bibitem [{\citenamefont {Ac\'{\i}n}\ \emph {et~al.}(2000)\citenamefont
		{Ac\'{\i}n}, \citenamefont {Andrianov}, \citenamefont {Costa}, \citenamefont
		{Jan\'e}, \citenamefont {Latorre},\ and\ \citenamefont {Tarrach}}]{Acin2000}%
	\BibitemOpen
	\bibfield  {author} {\bibinfo {author} {\bibfnamefont {A.}~\bibnamefont
			{Ac\'{\i}n}}, \bibinfo {author} {\bibfnamefont {A.}~\bibnamefont
			{Andrianov}}, \bibinfo {author} {\bibfnamefont {L.}~\bibnamefont {Costa}},
		\bibinfo {author} {\bibfnamefont {E.}~\bibnamefont {Jan\'e}}, \bibinfo
		{author} {\bibfnamefont {J.~I.}\ \bibnamefont {Latorre}},\ and\ \bibinfo
		{author} {\bibfnamefont {R.}~\bibnamefont {Tarrach}},\ }\bibfield  {title}
	{\bibinfo {title} {Generalized schmidt decomposition and classification of
			three-quantum-bit states},\ }\href
	{https://doi.org/10.1103/PhysRevLett.85.1560} {\bibfield  {journal} {\bibinfo
			{journal} {Phys. Rev. Lett.}\ }\textbf {\bibinfo {volume} {85}},\ \bibinfo
		{pages} {1560} (\bibinfo {year} {2000})}\BibitemShut {NoStop}%
	\bibitem [{\citenamefont {Verstraete}\ \emph {et~al.}(2002)\citenamefont
		{Verstraete}, \citenamefont {Dehaene}, \citenamefont {De~Moor},\ and\
		\citenamefont {Verschelde}}]{Verstraete2002}%
	\BibitemOpen
	\bibfield  {author} {\bibinfo {author} {\bibfnamefont {F.}~\bibnamefont
			{Verstraete}}, \bibinfo {author} {\bibfnamefont {J.}~\bibnamefont {Dehaene}},
		\bibinfo {author} {\bibfnamefont {B.}~\bibnamefont {De~Moor}},\ and\ \bibinfo
		{author} {\bibfnamefont {H.}~\bibnamefont {Verschelde}},\ }\bibfield  {title}
	{\bibinfo {title} {Four qubits can be entangled in nine different ways},\
	}\href {https://doi.org/10.1103/PhysRevA.65.052112} {\bibfield  {journal}
		{\bibinfo  {journal} {Phys. Rev. A}\ }\textbf {\bibinfo {volume} {65}},\
		\bibinfo {pages} {052112} (\bibinfo {year} {2002})}\BibitemShut {NoStop}%
	\bibitem [{\citenamefont {Hillery}\ \emph {et~al.}(1999)\citenamefont
		{Hillery}, \citenamefont {Bu\ifmmode~\check{z}\else \v{z}\fi{}ek},\ and\
		\citenamefont {Berthiaume}}]{Hillery1999}%
	\BibitemOpen
	\bibfield  {author} {\bibinfo {author} {\bibfnamefont {M.}~\bibnamefont
			{Hillery}}, \bibinfo {author} {\bibfnamefont {V.}~\bibnamefont
			{Bu\ifmmode~\check{z}\else \v{z}\fi{}ek}},\ and\ \bibinfo {author}
		{\bibfnamefont {A.}~\bibnamefont {Berthiaume}},\ }\bibfield  {title}
	{\bibinfo {title} {Quantum secret sharing},\ }\href
	{https://doi.org/10.1103/PhysRevA.59.1829} {\bibfield  {journal} {\bibinfo
			{journal} {Phys. Rev. A}\ }\textbf {\bibinfo {volume} {59}},\ \bibinfo
		{pages} {1829} (\bibinfo {year} {1999})}\BibitemShut {NoStop}%
	\bibitem [{\citenamefont {Leibfried}\ \emph {et~al.}(2004)\citenamefont
		{Leibfried}, \citenamefont {Barrett}, \citenamefont {Schaetz}, \citenamefont
		{Britton}, \citenamefont {Chiaverini}, \citenamefont {Itano}, \citenamefont
		{Jost}, \citenamefont {Langer},\ and\ \citenamefont
		{Wineland}}]{Leibfried2004}%
	\BibitemOpen
	\bibfield  {author} {\bibinfo {author} {\bibfnamefont {D.}~\bibnamefont
			{Leibfried}}, \bibinfo {author} {\bibfnamefont {M.~D.}\ \bibnamefont
			{Barrett}}, \bibinfo {author} {\bibfnamefont {T.}~\bibnamefont {Schaetz}},
		\bibinfo {author} {\bibfnamefont {J.}~\bibnamefont {Britton}}, \bibinfo
		{author} {\bibfnamefont {J.}~\bibnamefont {Chiaverini}}, \bibinfo {author}
		{\bibfnamefont {W.~M.}\ \bibnamefont {Itano}}, \bibinfo {author}
		{\bibfnamefont {J.~D.}\ \bibnamefont {Jost}}, \bibinfo {author}
		{\bibfnamefont {C.}~\bibnamefont {Langer}},\ and\ \bibinfo {author}
		{\bibfnamefont {D.~J.}\ \bibnamefont {Wineland}},\ }\bibfield  {title}
	{\bibinfo {title} {Toward heisenberg-limited spectroscopy with multiparticle
			entangled states},\ }\href {https://doi.org/10.1126/science.1097576}
	{\bibfield  {journal} {\bibinfo  {journal} {Science}\ }\textbf {\bibinfo
			{volume} {304}},\ \bibinfo {pages} {1476} (\bibinfo {year}
		{2004})}\BibitemShut {NoStop}%
	\bibitem [{\citenamefont {Giovannetti}\ \emph {et~al.}(2004)\citenamefont
		{Giovannetti}, \citenamefont {Lloyd},\ and\ \citenamefont
		{Maccone}}]{Giovannetti2004}%
	\BibitemOpen
	\bibfield  {author} {\bibinfo {author} {\bibfnamefont {V.}~\bibnamefont
			{Giovannetti}}, \bibinfo {author} {\bibfnamefont {S.}~\bibnamefont {Lloyd}},\
		and\ \bibinfo {author} {\bibfnamefont {L.}~\bibnamefont {Maccone}},\
	}\bibfield  {title} {\bibinfo {title} {Quantum-enhanced measurements: Beating
			the standard quantum limit},\ }\href
	{https://doi.org/10.1126/science.1104149} {\bibfield  {journal} {\bibinfo
			{journal} {Science}\ }\textbf {\bibinfo {volume} {306}},\ \bibinfo {pages}
		{1330} (\bibinfo {year} {2004})}\BibitemShut {NoStop}%
	\bibitem [{\citenamefont {Zhao}\ \emph {et~al.}(2004)\citenamefont {Zhao},
		\citenamefont {Chen}, \citenamefont {Zhang}, \citenamefont {Yang},
		\citenamefont {Briegel},\ and\ \citenamefont {Pan}}]{Zhao2004}%
	\BibitemOpen
	\bibfield  {author} {\bibinfo {author} {\bibfnamefont {Z.}~\bibnamefont
			{Zhao}}, \bibinfo {author} {\bibfnamefont {Y.-A.}\ \bibnamefont {Chen}},
		\bibinfo {author} {\bibfnamefont {A.-N.}\ \bibnamefont {Zhang}}, \bibinfo
		{author} {\bibfnamefont {T.}~\bibnamefont {Yang}}, \bibinfo {author}
		{\bibfnamefont {H.~J.}\ \bibnamefont {Briegel}},\ and\ \bibinfo {author}
		{\bibfnamefont {J.-W.}\ \bibnamefont {Pan}},\ }\bibfield  {title} {\bibinfo
		{title} {Experimental demonstration of five-photon entanglement and
			open-destination teleportation},\ }\href
	{https://doi.org/10.1038/nature02643} {\bibfield  {journal} {\bibinfo
			{journal} {Nature}\ }\textbf {\bibinfo {volume} {430}},\ \bibinfo {pages}
		{54} (\bibinfo {year} {2004})}\BibitemShut {NoStop}%
	\bibitem [{\citenamefont {Gottesman}\ and\ \citenamefont
		{Chuang}(1999)}]{Gottesman1999}%
	\BibitemOpen
	\bibfield  {author} {\bibinfo {author} {\bibfnamefont {D.}~\bibnamefont
			{Gottesman}}\ and\ \bibinfo {author} {\bibfnamefont {I.~L.}\ \bibnamefont
			{Chuang}},\ }\bibfield  {title} {\bibinfo {title} {Demonstrating the
			viability of universal quantum computation using teleportation and
			single-qubit operations},\ }\href {https://doi.org/10.1038/46503} {\bibfield
		{journal} {\bibinfo  {journal} {Nature}\ }\textbf {\bibinfo {volume} {402}},\
		\bibinfo {pages} {390} (\bibinfo {year} {1999})}\BibitemShut {NoStop}%
	\bibitem [{\citenamefont {Agrawal}\ \emph {et~al.}(2019)\citenamefont
		{Agrawal}, \citenamefont {Halder},\ and\ \citenamefont
		{Banik}}]{Agrawal2019}%
	\BibitemOpen
	\bibfield  {author} {\bibinfo {author} {\bibfnamefont {S.}~\bibnamefont
			{Agrawal}}, \bibinfo {author} {\bibfnamefont {S.}~\bibnamefont {Halder}},\
		and\ \bibinfo {author} {\bibfnamefont {M.}~\bibnamefont {Banik}},\ }\bibfield
	{title} {\bibinfo {title} {Genuinely entangled subspace with
			all-encompassing distillable entanglement across every bipartition},\ }\href
	{https://doi.org/10.1103/PhysRevA.99.032335} {\bibfield  {journal} {\bibinfo
			{journal} {Phys. Rev. A}\ }\textbf {\bibinfo {volume} {99}},\ \bibinfo
		{pages} {032335} (\bibinfo {year} {2019})}\BibitemShut {NoStop}%
	\bibitem [{\citenamefont {Rout}\ \emph {et~al.}(2019)\citenamefont {Rout},
		\citenamefont {Maity}, \citenamefont {Mukherjee}, \citenamefont {Halder},\
		and\ \citenamefont {Banik}}]{Rout2019}%
	\BibitemOpen
	\bibfield  {author} {\bibinfo {author} {\bibfnamefont {S.}~\bibnamefont
			{Rout}}, \bibinfo {author} {\bibfnamefont {A.~G.}\ \bibnamefont {Maity}},
		\bibinfo {author} {\bibfnamefont {A.}~\bibnamefont {Mukherjee}}, \bibinfo
		{author} {\bibfnamefont {S.}~\bibnamefont {Halder}},\ and\ \bibinfo {author}
		{\bibfnamefont {M.}~\bibnamefont {Banik}},\ }\bibfield  {title} {\bibinfo
		{title} {Genuinely nonlocal product bases: Classification and
			entanglement-assisted discrimination},\ }\href
	{https://doi.org/10.1103/PhysRevA.100.032321} {\bibfield  {journal} {\bibinfo
			{journal} {Phys. Rev. A}\ }\textbf {\bibinfo {volume} {100}},\ \bibinfo
		{pages} {032321} (\bibinfo {year} {2019})}\BibitemShut {NoStop}%
	\bibitem [{\citenamefont {Rout}\ \emph {et~al.}(2021)\citenamefont {Rout},
		\citenamefont {Maity}, \citenamefont {Mukherjee}, \citenamefont {Halder},\
		and\ \citenamefont {Banik}}]{Rout2021}%
	\BibitemOpen
	\bibfield  {author} {\bibinfo {author} {\bibfnamefont {S.}~\bibnamefont
			{Rout}}, \bibinfo {author} {\bibfnamefont {A.~G.}\ \bibnamefont {Maity}},
		\bibinfo {author} {\bibfnamefont {A.}~\bibnamefont {Mukherjee}}, \bibinfo
		{author} {\bibfnamefont {S.}~\bibnamefont {Halder}},\ and\ \bibinfo {author}
		{\bibfnamefont {M.}~\bibnamefont {Banik}},\ }\bibfield  {title} {\bibinfo
		{title} {Multiparty orthogonal product states with minimal genuine
			nonlocality},\ }\href {https://doi.org/10.1103/PhysRevA.104.052433}
	{\bibfield  {journal} {\bibinfo  {journal} {Phys. Rev. A}\ }\textbf {\bibinfo
			{volume} {104}},\ \bibinfo {pages} {052433} (\bibinfo {year}
		{2021})}\BibitemShut {NoStop}%
	\bibitem [{\citenamefont {Bhattacharya}\ \emph {et~al.}(2021)\citenamefont
		{Bhattacharya}, \citenamefont {Maity}, \citenamefont {Guha}, \citenamefont
		{Chiribella},\ and\ \citenamefont {Banik}}]{Bhattacharya2021}%
	\BibitemOpen
	\bibfield  {author} {\bibinfo {author} {\bibfnamefont {S.~S.}\ \bibnamefont
			{Bhattacharya}}, \bibinfo {author} {\bibfnamefont {A.~G.}\ \bibnamefont
			{Maity}}, \bibinfo {author} {\bibfnamefont {T.}~\bibnamefont {Guha}},
		\bibinfo {author} {\bibfnamefont {G.}~\bibnamefont {Chiribella}},\ and\
		\bibinfo {author} {\bibfnamefont {M.}~\bibnamefont {Banik}},\ }\bibfield
	{title} {\bibinfo {title} {Random-receiver quantum communication},\ }\href
	{https://doi.org/10.1103/PRXQuantum.2.020350} {\bibfield  {journal} {\bibinfo
			{journal} {PRX Quantum}\ }\textbf {\bibinfo {volume} {2}},\ \bibinfo {pages}
		{020350} (\bibinfo {year} {2021})}\BibitemShut {NoStop}%
	\bibitem [{\citenamefont {Carvalho}\ \emph {et~al.}(2004)\citenamefont
		{Carvalho}, \citenamefont {Mintert},\ and\ \citenamefont
		{Buchleitner}}]{Carvalho2004}%
	\BibitemOpen
	\bibfield  {author} {\bibinfo {author} {\bibfnamefont {A.~R.~R.}\
			\bibnamefont {Carvalho}}, \bibinfo {author} {\bibfnamefont {F.}~\bibnamefont
			{Mintert}},\ and\ \bibinfo {author} {\bibfnamefont {A.}~\bibnamefont
			{Buchleitner}},\ }\bibfield  {title} {\bibinfo {title} {Decoherence and
			multipartite entanglement},\ }\href
	{https://doi.org/10.1103/PhysRevLett.93.230501} {\bibfield  {journal}
		{\bibinfo  {journal} {Phys. Rev. Lett.}\ }\textbf {\bibinfo {volume} {93}},\
		\bibinfo {pages} {230501} (\bibinfo {year} {2004})}\BibitemShut {NoStop}%
	\bibitem [{\citenamefont {Coffman}\ \emph {et~al.}(2000)\citenamefont
		{Coffman}, \citenamefont {Kundu},\ and\ \citenamefont
		{Wootters}}]{Coffman2000}%
	\BibitemOpen
	\bibfield  {author} {\bibinfo {author} {\bibfnamefont {V.}~\bibnamefont
			{Coffman}}, \bibinfo {author} {\bibfnamefont {J.}~\bibnamefont {Kundu}},\
		and\ \bibinfo {author} {\bibfnamefont {W.~K.}\ \bibnamefont {Wootters}},\
	}\bibfield  {title} {\bibinfo {title} {Distributed entanglement},\ }\href
	{https://doi.org/10.1103/PhysRevA.61.052306} {\bibfield  {journal} {\bibinfo
			{journal} {Phys. Rev. A}\ }\textbf {\bibinfo {volume} {61}},\ \bibinfo
		{pages} {052306} (\bibinfo {year} {2000})}\BibitemShut {NoStop}%
	\bibitem [{\citenamefont {Miyake}(2003)}]{Miyake2003}%
	\BibitemOpen
	\bibfield  {author} {\bibinfo {author} {\bibfnamefont {A.}~\bibnamefont
			{Miyake}},\ }\bibfield  {title} {\bibinfo {title} {Classification of
			multipartite entangled states by multidimensional determinants},\ }\href
	{https://doi.org/10.1103/PhysRevA.67.012108} {\bibfield  {journal} {\bibinfo
			{journal} {Phys. Rev. A}\ }\textbf {\bibinfo {volume} {67}},\ \bibinfo
		{pages} {012108} (\bibinfo {year} {2003})}\BibitemShut {NoStop}%
	\bibitem [{\citenamefont {Osterloh}\ and\ \citenamefont
		{Siewert}(2005)}]{Osterloh2005}%
	\BibitemOpen
	\bibfield  {author} {\bibinfo {author} {\bibfnamefont {A.}~\bibnamefont
			{Osterloh}}\ and\ \bibinfo {author} {\bibfnamefont {J.}~\bibnamefont
			{Siewert}},\ }\bibfield  {title} {\bibinfo {title} {Constructing $n$-qubit
			entanglement monotones from antilinear operators},\ }\href
	{https://doi.org/10.1103/PhysRevA.72.012337} {\bibfield  {journal} {\bibinfo
			{journal} {Phys. Rev. A}\ }\textbf {\bibinfo {volume} {72}},\ \bibinfo
		{pages} {012337} (\bibinfo {year} {2005})}\BibitemShut {NoStop}%
	\bibitem [{\citenamefont {Sen(De)}\ and\ \citenamefont
		{Sen}(2010)}]{Sen(De)2010}%
	\BibitemOpen
	\bibfield  {author} {\bibinfo {author} {\bibfnamefont {A.}~\bibnamefont
			{Sen(De)}}\ and\ \bibinfo {author} {\bibfnamefont {U.}~\bibnamefont {Sen}},\
	}\bibfield  {title} {\bibinfo {title} {Channel capacities versus entanglement
			measures in multiparty quantum states},\ }\href
	{https://doi.org/10.1103/PhysRevA.81.012308} {\bibfield  {journal} {\bibinfo
			{journal} {Phys. Rev. A}\ }\textbf {\bibinfo {volume} {81}},\ \bibinfo
		{pages} {012308} (\bibinfo {year} {2010})}\BibitemShut {NoStop}%
	\bibitem [{\citenamefont {Jungnitsch}\ \emph {et~al.}(2011)\citenamefont
		{Jungnitsch}, \citenamefont {Moroder},\ and\ \citenamefont
		{G\"uhne}}]{Jungnitsch2011}%
	\BibitemOpen
	\bibfield  {author} {\bibinfo {author} {\bibfnamefont {B.}~\bibnamefont
			{Jungnitsch}}, \bibinfo {author} {\bibfnamefont {T.}~\bibnamefont
			{Moroder}},\ and\ \bibinfo {author} {\bibfnamefont {O.}~\bibnamefont
			{G\"uhne}},\ }\bibfield  {title} {\bibinfo {title} {Taming multiparticle
			entanglement},\ }\href {https://doi.org/10.1103/PhysRevLett.106.190502}
	{\bibfield  {journal} {\bibinfo  {journal} {Phys. Rev. Lett.}\ }\textbf
		{\bibinfo {volume} {106}},\ \bibinfo {pages} {190502} (\bibinfo {year}
		{2011})}\BibitemShut {NoStop}%
	\bibitem [{\citenamefont {Xie}\ and\ \citenamefont {Eberly}(2021)}]{Xie2021}%
	\BibitemOpen
	\bibfield  {author} {\bibinfo {author} {\bibfnamefont {S.}~\bibnamefont
			{Xie}}\ and\ \bibinfo {author} {\bibfnamefont {J.~H.}\ \bibnamefont
			{Eberly}},\ }\bibfield  {title} {\bibinfo {title} {Triangle measure of
			tripartite entanglement},\ }\href
	{https://doi.org/10.1103/PhysRevLett.127.040403} {\bibfield  {journal}
		{\bibinfo  {journal} {Phys. Rev. Lett.}\ }\textbf {\bibinfo {volume} {127}},\
		\bibinfo {pages} {040403} (\bibinfo {year} {2021})}\BibitemShut {NoStop}%
	\bibitem [{\citenamefont {Andolina}\ \emph {et~al.}(2019)\citenamefont
		{Andolina}, \citenamefont {Keck}, \citenamefont {Mari}, \citenamefont
		{Campisi}, \citenamefont {Giovannetti},\ and\ \citenamefont
		{Polini}}]{Andolina2019}%
	\BibitemOpen
	\bibfield  {author} {\bibinfo {author} {\bibfnamefont {G.~M.}\ \bibnamefont
			{Andolina}}, \bibinfo {author} {\bibfnamefont {M.}~\bibnamefont {Keck}},
		\bibinfo {author} {\bibfnamefont {A.}~\bibnamefont {Mari}}, \bibinfo {author}
		{\bibfnamefont {M.}~\bibnamefont {Campisi}}, \bibinfo {author} {\bibfnamefont
			{V.}~\bibnamefont {Giovannetti}},\ and\ \bibinfo {author} {\bibfnamefont
			{M.}~\bibnamefont {Polini}},\ }\bibfield  {title} {\bibinfo {title}
		{Extractable work, the role of correlations, and asymptotic freedom in
			quantum batteries},\ }\href {https://doi.org/10.1103/PhysRevLett.122.047702}
	{\bibfield  {journal} {\bibinfo  {journal} {Phys. Rev. Lett.}\ }\textbf
		{\bibinfo {volume} {122}},\ \bibinfo {pages} {047702} (\bibinfo {year}
		{2019})}\BibitemShut {NoStop}%
	\bibitem [{\citenamefont {Rossini}\ \emph {et~al.}(2020)\citenamefont
		{Rossini}, \citenamefont {Andolina}, \citenamefont {Rosa}, \citenamefont
		{Carrega},\ and\ \citenamefont {Polini}}]{Rossini2020}%
	\BibitemOpen
	\bibfield  {author} {\bibinfo {author} {\bibfnamefont {D.}~\bibnamefont
			{Rossini}}, \bibinfo {author} {\bibfnamefont {G.~M.}\ \bibnamefont
			{Andolina}}, \bibinfo {author} {\bibfnamefont {D.}~\bibnamefont {Rosa}},
		\bibinfo {author} {\bibfnamefont {M.}~\bibnamefont {Carrega}},\ and\ \bibinfo
		{author} {\bibfnamefont {M.}~\bibnamefont {Polini}},\ }\bibfield  {title}
	{\bibinfo {title} {Quantum advantage in the charging process of
			sachdev-ye-kitaev batteries},\ }\href
	{https://doi.org/10.1103/PhysRevLett.125.236402} {\bibfield  {journal}
		{\bibinfo  {journal} {Phys. Rev. Lett.}\ }\textbf {\bibinfo {volume} {125}},\
		\bibinfo {pages} {236402} (\bibinfo {year} {2020})}\BibitemShut {NoStop}%
	\bibitem [{\citenamefont {Monsel}\ \emph {et~al.}(2020)\citenamefont {Monsel},
		\citenamefont {Fellous-Asiani}, \citenamefont {Huard},\ and\ \citenamefont
		{Auff\`eves}}]{Monsel2020}%
	\BibitemOpen
	\bibfield  {author} {\bibinfo {author} {\bibfnamefont {J.}~\bibnamefont
			{Monsel}}, \bibinfo {author} {\bibfnamefont {M.}~\bibnamefont
			{Fellous-Asiani}}, \bibinfo {author} {\bibfnamefont {B.}~\bibnamefont
			{Huard}},\ and\ \bibinfo {author} {\bibfnamefont {A.}~\bibnamefont
			{Auff\`eves}},\ }\bibfield  {title} {\bibinfo {title} {The energetic cost of
			work extraction},\ }\href {https://doi.org/10.1103/PhysRevLett.124.130601}
	{\bibfield  {journal} {\bibinfo  {journal} {Phys. Rev. Lett.}\ }\textbf
		{\bibinfo {volume} {124}},\ \bibinfo {pages} {130601} (\bibinfo {year}
		{2020})}\BibitemShut {NoStop}%
	\bibitem [{\citenamefont {{\it et al.}}(2020)}]{Strambini2020}%
	\BibitemOpen
	\bibfield  {author} {\bibinfo {author} {\bibfnamefont {E.~S.}\ \bibnamefont
			{{\it et al.}}},\ }\bibfield  {title} {\bibinfo {title} {A josephson phase
			battery},\ }\href {https://doi.org/10.1038/s41565-020-0712-7} {\bibfield
		{journal} {\bibinfo  {journal} {Nature Nanotechnology}\ }\textbf {\bibinfo
			{volume} {15}},\ \bibinfo {pages} {656} (\bibinfo {year} {2020})}\BibitemShut
	{NoStop}%
	\bibitem [{\citenamefont {Opatrn\'y}\ \emph {et~al.}(2021)\citenamefont
		{Opatrn\'y}, \citenamefont {Misra},\ and\ \citenamefont
		{Kurizki}}]{Opatrny2021}%
	\BibitemOpen
	\bibfield  {author} {\bibinfo {author} {\bibfnamefont {T.}~\bibnamefont
			{Opatrn\'y}}, \bibinfo {author} {\bibfnamefont {A.}~\bibnamefont {Misra}},\
		and\ \bibinfo {author} {\bibfnamefont {G.}~\bibnamefont {Kurizki}},\
	}\bibfield  {title} {\bibinfo {title} {Work generation from thermal noise by
			quantum phase-sensitive observation},\ }\href
	{https://doi.org/10.1103/PhysRevLett.127.040602} {\bibfield  {journal}
		{\bibinfo  {journal} {Phys. Rev. Lett.}\ }\textbf {\bibinfo {volume} {127}},\
		\bibinfo {pages} {040602} (\bibinfo {year} {2021})}\BibitemShut {NoStop}%
	\bibitem [{\citenamefont {Joshi}\ and\ \citenamefont
		{Mahesh}(2021)}]{Joshi2021}%
	\BibitemOpen
	\bibfield  {author} {\bibinfo {author} {\bibfnamefont {J.}~\bibnamefont
			{Joshi}}\ and\ \bibinfo {author} {\bibfnamefont {T.~S.}\ \bibnamefont
			{Mahesh}},\ }\bibfield  {title} {\bibinfo {title} {Nmr investigations of
			quantum battery using star-topology spin systems},\ }\href
	{https://doi.org/10.48550/ARXIV.2112.15437} {\bibfield  {journal} {\bibinfo
			{journal} {arXiv:2112.15437}\ } (\bibinfo {year} {2021})}\BibitemShut
	{NoStop}%
	\bibitem [{\citenamefont {Cruz}\ \emph {et~al.}(2022)\citenamefont {Cruz},
		\citenamefont {Anka}, \citenamefont {Reis}, \citenamefont {Bachelard},\ and\
		\citenamefont {Santos}}]{Cruz2022}%
	\BibitemOpen
	\bibfield  {author} {\bibinfo {author} {\bibfnamefont {C.}~\bibnamefont
			{Cruz}}, \bibinfo {author} {\bibfnamefont {M.~F.}\ \bibnamefont {Anka}},
		\bibinfo {author} {\bibfnamefont {M.~S.}\ \bibnamefont {Reis}}, \bibinfo
		{author} {\bibfnamefont {R.}~\bibnamefont {Bachelard}},\ and\ \bibinfo
		{author} {\bibfnamefont {A.~C.}\ \bibnamefont {Santos}},\ }\bibfield  {title}
	{\bibinfo {title} {Quantum battery based on quantum discord at room
			temperature},\ }\href {https://doi.org/10.1088/2058-9565/ac57f3} {\bibfield
		{journal} {\bibinfo  {journal} {Quantum Sci. Technol.}\ }\textbf {\bibinfo
			{volume} {7}},\ \bibinfo {pages} {025020} (\bibinfo {year}
		{2022})}\BibitemShut {NoStop}%
	\bibitem [{\citenamefont {Vedral}\ \emph {et~al.}(1997)\citenamefont {Vedral},
		\citenamefont {Plenio}, \citenamefont {Rippin},\ and\ \citenamefont
		{Knight}}]{Vedral1997}%
	\BibitemOpen
	\bibfield  {author} {\bibinfo {author} {\bibfnamefont {V.}~\bibnamefont
			{Vedral}}, \bibinfo {author} {\bibfnamefont {M.~B.}\ \bibnamefont {Plenio}},
		\bibinfo {author} {\bibfnamefont {M.~A.}\ \bibnamefont {Rippin}},\ and\
		\bibinfo {author} {\bibfnamefont {P.~L.}\ \bibnamefont {Knight}},\ }\bibfield
	{title} {\bibinfo {title} {Quantifying entanglement},\ }\href
	{https://doi.org/10.1103/PhysRevLett.78.2275} {\bibfield  {journal} {\bibinfo
			{journal} {Phys. Rev. Lett.}\ }\textbf {\bibinfo {volume} {78}},\ \bibinfo
		{pages} {2275} (\bibinfo {year} {1997})}\BibitemShut {NoStop}%
	\bibitem [{\citenamefont {Ma}\ \emph {et~al.}(2011)\citenamefont {Ma},
		\citenamefont {Chen}, \citenamefont {Chen}, \citenamefont {Spengler},
		\citenamefont {Gabriel},\ and\ \citenamefont {Huber}}]{Ma2011}%
	\BibitemOpen
	\bibfield  {author} {\bibinfo {author} {\bibfnamefont {Z.-H.}\ \bibnamefont
			{Ma}}, \bibinfo {author} {\bibfnamefont {Z.-H.}\ \bibnamefont {Chen}},
		\bibinfo {author} {\bibfnamefont {J.-L.}\ \bibnamefont {Chen}}, \bibinfo
		{author} {\bibfnamefont {C.}~\bibnamefont {Spengler}}, \bibinfo {author}
		{\bibfnamefont {A.}~\bibnamefont {Gabriel}},\ and\ \bibinfo {author}
		{\bibfnamefont {M.}~\bibnamefont {Huber}},\ }\bibfield  {title} {\bibinfo
		{title} {Measure of genuine multipartite entanglement with computable lower
			bounds},\ }\href {https://doi.org/10.1103/PhysRevA.83.062325} {\bibfield
		{journal} {\bibinfo  {journal} {Phys. Rev. A}\ }\textbf {\bibinfo {volume}
			{83}},\ \bibinfo {pages} {062325} (\bibinfo {year} {2011})}\BibitemShut
	{NoStop}%
	\bibitem [{\citenamefont {Hong}\ \emph {et~al.}(2012)\citenamefont {Hong},
		\citenamefont {Gao},\ and\ \citenamefont {Yan}}]{Hong2012}%
	\BibitemOpen
	\bibfield  {author} {\bibinfo {author} {\bibfnamefont {Y.}~\bibnamefont
			{Hong}}, \bibinfo {author} {\bibfnamefont {T.}~\bibnamefont {Gao}},\ and\
		\bibinfo {author} {\bibfnamefont {F.}~\bibnamefont {Yan}},\ }\bibfield
	{title} {\bibinfo {title} {Measure of multipartite entanglement with
			computable lower bounds},\ }\href
	{https://doi.org/10.1103/PhysRevA.86.062323} {\bibfield  {journal} {\bibinfo
			{journal} {Phys. Rev. A}\ }\textbf {\bibinfo {volume} {86}},\ \bibinfo
		{pages} {062323} (\bibinfo {year} {2012})}\BibitemShut {NoStop}%
	\bibitem [{\citenamefont {Guo}\ and\ \citenamefont {Zhang}(2020)}]{Guo2020}%
	\BibitemOpen
	\bibfield  {author} {\bibinfo {author} {\bibfnamefont {Y.}~\bibnamefont
			{Guo}}\ and\ \bibinfo {author} {\bibfnamefont {L.}~\bibnamefont {Zhang}},\
	}\bibfield  {title} {\bibinfo {title} {Multipartite entanglement measure and
			complete monogamy relation},\ }\href
	{https://doi.org/10.1103/PhysRevA.101.032301} {\bibfield  {journal} {\bibinfo
			{journal} {Phys. Rev. A}\ }\textbf {\bibinfo {volume} {101}},\ \bibinfo
		{pages} {032301} (\bibinfo {year} {2020})}\BibitemShut {NoStop}%
	\bibitem [{\citenamefont {Guo}\ \emph {et~al.}(2022)\citenamefont {Guo},
		\citenamefont {Jia}, \citenamefont {Li},\ and\ \citenamefont
		{Huang}}]{Guo2022}%
	\BibitemOpen
	\bibfield  {author} {\bibinfo {author} {\bibfnamefont {Y.}~\bibnamefont
			{Guo}}, \bibinfo {author} {\bibfnamefont {Y.}~\bibnamefont {Jia}}, \bibinfo
		{author} {\bibfnamefont {X.}~\bibnamefont {Li}},\ and\ \bibinfo {author}
		{\bibfnamefont {L.}~\bibnamefont {Huang}},\ }\bibfield  {title} {\bibinfo
		{title} {Genuine multipartite entanglement measure},\ }\href
	{https://doi.org/10.1088/1751-8121/ac5649} {\bibfield  {journal} {\bibinfo
			{journal} {J. Phys. A Math. Theor.}\ }\textbf {\bibinfo {volume} {55}},\
		\bibinfo {pages} {145303} (\bibinfo {year} {2022})}\BibitemShut {NoStop}%
	\bibitem [{\citenamefont {Pusz}\ and\ \citenamefont
		{Woronowicz}(1978)}]{Pusz1978}%
	\BibitemOpen
	\bibfield  {author} {\bibinfo {author} {\bibfnamefont {W.}~\bibnamefont
			{Pusz}}\ and\ \bibinfo {author} {\bibfnamefont {S.~L.}\ \bibnamefont
			{Woronowicz}},\ }\bibfield  {title} {\bibinfo {title} {Passive states and kms
			states for general quantum systems},\ }\href
	{https://doi.org/10.1007/BF01614224} {\bibfield  {journal} {\bibinfo
			{journal} {Commun. Math. Phys}\ }\textbf {\bibinfo {volume} {58}},\ \bibinfo
		{pages} {273} (\bibinfo {year} {1978})}\BibitemShut {NoStop}%
	\bibitem [{\citenamefont {Lenard}(1978)}]{Lenard1978}%
	\BibitemOpen
	\bibfield  {author} {\bibinfo {author} {\bibfnamefont {A.}~\bibnamefont
			{Lenard}},\ }\bibfield  {title} {\bibinfo {title} {Thermodynamical proof of
			the gibbs formula for elementary quantum systems},\ }\href
	{https://doi.org/10.1007/BF01011769} {\bibfield  {journal} {\bibinfo
			{journal} {J. Stat. Phys}\ }\textbf {\bibinfo {volume} {19}},\ \bibinfo
		{pages} {575} (\bibinfo {year} {1978})}\BibitemShut {NoStop}%
	\bibitem [{\citenamefont {Allahverdyan}\ \emph {et~al.}(2004)\citenamefont
		{Allahverdyan}, \citenamefont {Balian},\ and\ \citenamefont
		{Nieuwenhuizen}}]{Allahverdyan2004}%
	\BibitemOpen
	\bibfield  {author} {\bibinfo {author} {\bibfnamefont {A.~E.}\ \bibnamefont
			{Allahverdyan}}, \bibinfo {author} {\bibfnamefont {R.}~\bibnamefont
			{Balian}},\ and\ \bibinfo {author} {\bibfnamefont {T.~M.}\ \bibnamefont
			{Nieuwenhuizen}},\ }\bibfield  {title} {\bibinfo {title} {Maximal work
			extraction from finite quantum systems},\ }\href
	{https://doi.org/10.1209/epl/i2004-10101-2} {\bibfield  {journal} {\bibinfo
			{journal} {EPL}\ }\textbf {\bibinfo {volume} {67}},\ \bibinfo {pages} {565}
		(\bibinfo {year} {2004})}\BibitemShut {NoStop}%
	\bibitem [{\citenamefont {Åberg}(2013)}]{Aberg2013}%
	\BibitemOpen
	\bibfield  {author} {\bibinfo {author} {\bibfnamefont {J.}~\bibnamefont
			{Åberg}},\ }\bibfield  {title} {\bibinfo {title} {Truly work-like work
			extraction via a single-shot analysis},\ }\href
	{https://doi.org/10.1038/ncomms2712} {\bibfield  {journal} {\bibinfo
			{journal} {Nat. Commun.}\ }\textbf {\bibinfo {volume} {4}},\ \bibinfo {pages}
		{1925} (\bibinfo {year} {2013})}\BibitemShut {NoStop}%
	\bibitem [{\citenamefont {Skrzypczyk}\ \emph {et~al.}(2014)\citenamefont
		{Skrzypczyk}, \citenamefont {Short},\ and\ \citenamefont
		{Popescu}}]{Skrzypczyk2014}%
	\BibitemOpen
	\bibfield  {author} {\bibinfo {author} {\bibfnamefont {P.}~\bibnamefont
			{Skrzypczyk}}, \bibinfo {author} {\bibfnamefont {A.}~\bibnamefont {Short}},\
		and\ \bibinfo {author} {\bibfnamefont {S.}~\bibnamefont {Popescu}},\
	}\bibfield  {title} {\bibinfo {title} {Work extraction and thermodynamics for
			individual quantum systems.},\ }\href {https://doi.org/10.1038/ncomms5185}
	{\bibfield  {journal} {\bibinfo  {journal} {Nat. Commun.}\ }\textbf {\bibinfo
			{volume} {5}},\ \bibinfo {pages} {4185} (\bibinfo {year} {2014})}\BibitemShut
	{NoStop}%
	\bibitem [{\citenamefont {Perarnau-Llobet}\ \emph {et~al.}(2015)\citenamefont
		{Perarnau-Llobet}, \citenamefont {Hovhannisyan}, \citenamefont {Huber},
		\citenamefont {Skrzypczyk}, \citenamefont {Brunner},\ and\ \citenamefont
		{Ac\'{\i}n}}]{Perarnau2015}%
	\BibitemOpen
	\bibfield  {author} {\bibinfo {author} {\bibfnamefont {M.}~\bibnamefont
			{Perarnau-Llobet}}, \bibinfo {author} {\bibfnamefont {K.~V.}\ \bibnamefont
			{Hovhannisyan}}, \bibinfo {author} {\bibfnamefont {M.}~\bibnamefont {Huber}},
		\bibinfo {author} {\bibfnamefont {P.}~\bibnamefont {Skrzypczyk}}, \bibinfo
		{author} {\bibfnamefont {N.}~\bibnamefont {Brunner}},\ and\ \bibinfo {author}
		{\bibfnamefont {A.}~\bibnamefont {Ac\'{\i}n}},\ }\bibfield  {title} {\bibinfo
		{title} {Extractable work from correlations},\ }\href
	{https://doi.org/10.1103/PhysRevX.5.041011} {\bibfield  {journal} {\bibinfo
			{journal} {Phys. Rev. X}\ }\textbf {\bibinfo {volume} {5}},\ \bibinfo {pages}
		{041011} (\bibinfo {year} {2015})}\BibitemShut {NoStop}%
	\bibitem [{\citenamefont {Skrzypczyk}\ \emph {et~al.}(2015)\citenamefont
		{Skrzypczyk}, \citenamefont {Silva},\ and\ \citenamefont
		{Brunner}}]{Skrzypczyk2015}%
	\BibitemOpen
	\bibfield  {author} {\bibinfo {author} {\bibfnamefont {P.}~\bibnamefont
			{Skrzypczyk}}, \bibinfo {author} {\bibfnamefont {R.}~\bibnamefont {Silva}},\
		and\ \bibinfo {author} {\bibfnamefont {N.}~\bibnamefont {Brunner}},\
	}\bibfield  {title} {\bibinfo {title} {Passivity, complete passivity, and
			virtual temperatures},\ }\href {https://doi.org/10.1103/PhysRevE.91.052133}
	{\bibfield  {journal} {\bibinfo  {journal} {Phys. Rev. E}\ }\textbf {\bibinfo
			{volume} {91}},\ \bibinfo {pages} {052133} (\bibinfo {year}
		{2015})}\BibitemShut {NoStop}%
	\bibitem [{\citenamefont {Mukherjee}\ \emph {et~al.}(2016)\citenamefont
		{Mukherjee}, \citenamefont {Roy}, \citenamefont {Bhattacharya},\ and\
		\citenamefont {Banik}}]{Mukherjee16}%
	\BibitemOpen
	\bibfield  {author} {\bibinfo {author} {\bibfnamefont {A.}~\bibnamefont
			{Mukherjee}}, \bibinfo {author} {\bibfnamefont {A.}~\bibnamefont {Roy}},
		\bibinfo {author} {\bibfnamefont {S.~S.}\ \bibnamefont {Bhattacharya}},\ and\
		\bibinfo {author} {\bibfnamefont {M.}~\bibnamefont {Banik}},\ }\bibfield
	{title} {\bibinfo {title} {Presence of quantum correlations results in a
			nonvanishing ergotropic gap},\ }\href
	{https://doi.org/10.1103/PhysRevE.93.052140} {\bibfield  {journal} {\bibinfo
			{journal} {Phys. Rev. E}\ }\textbf {\bibinfo {volume} {93}},\ \bibinfo
		{pages} {052140} (\bibinfo {year} {2016})}\BibitemShut {NoStop}%
	\bibitem [{\citenamefont {Francica}\ \emph {et~al.}(2017)\citenamefont
		{Francica}, \citenamefont {Goold}, \citenamefont {Plastina},\ and\
		\citenamefont {Paternostro}}]{Francica2017}%
	\BibitemOpen
	\bibfield  {author} {\bibinfo {author} {\bibfnamefont {G.}~\bibnamefont
			{Francica}}, \bibinfo {author} {\bibfnamefont {J.}~\bibnamefont {Goold}},
		\bibinfo {author} {\bibfnamefont {F.}~\bibnamefont {Plastina}},\ and\
		\bibinfo {author} {\bibfnamefont {M.}~\bibnamefont {Paternostro}},\
	}\bibfield  {title} {\bibinfo {title} {Daemonic ergotropy: enhanced work
			extraction from quantum correlations},\ }\href
	{https://doi.org/10.1038/s41534-017-0012-8} {\bibfield  {journal} {\bibinfo
			{journal} {NPJ Quantum Inf.}\ }\textbf {\bibinfo {volume} {3}},\ \bibinfo
		{pages} {1} (\bibinfo {year} {2017})}\BibitemShut {NoStop}%
	\bibitem [{\citenamefont {Alimuddin}\ \emph {et~al.}(2019)\citenamefont
		{Alimuddin}, \citenamefont {Guha},\ and\ \citenamefont
		{Parashar}}]{Alimuddin2019}%
	\BibitemOpen
	\bibfield  {author} {\bibinfo {author} {\bibfnamefont {M.}~\bibnamefont
			{Alimuddin}}, \bibinfo {author} {\bibfnamefont {T.}~\bibnamefont {Guha}},\
		and\ \bibinfo {author} {\bibfnamefont {P.}~\bibnamefont {Parashar}},\
	}\bibfield  {title} {\bibinfo {title} {Bound on ergotropic gap for bipartite
			separable states},\ }\href {https://doi.org/10.1103/PhysRevA.99.052320}
	{\bibfield  {journal} {\bibinfo  {journal} {Phys. Rev. A}\ }\textbf {\bibinfo
			{volume} {99}},\ \bibinfo {pages} {052320} (\bibinfo {year}
		{2019})}\BibitemShut {NoStop}%
	\bibitem [{\citenamefont {Bernards}\ \emph {et~al.}(2019)\citenamefont
		{Bernards}, \citenamefont {Kleinmann}, \citenamefont {Gühne},\ and\
		\citenamefont {Paternostro}}]{Bernards2019}%
	\BibitemOpen
	\bibfield  {author} {\bibinfo {author} {\bibfnamefont {F.}~\bibnamefont
			{Bernards}}, \bibinfo {author} {\bibfnamefont {M.}~\bibnamefont {Kleinmann}},
		\bibinfo {author} {\bibfnamefont {O.}~\bibnamefont {Gühne}},\ and\ \bibinfo
		{author} {\bibfnamefont {M.}~\bibnamefont {Paternostro}},\ }\bibfield
	{title} {\bibinfo {title} {Daemonic ergotropy: Generalised measurements and
			multipartite settings},\ }\href {https://doi.org/10.3390/e21080771}
	{\bibfield  {journal} {\bibinfo  {journal} {Entropy}\ }\textbf {\bibinfo
			{volume} {21}},\ \bibinfo {pages} {771} (\bibinfo {year} {2019})}\BibitemShut
	{NoStop}%
	\bibitem [{\citenamefont {Alimuddin}\ \emph
		{et~al.}(2020{\natexlab{a}})\citenamefont {Alimuddin}, \citenamefont {Guha},\
		and\ \citenamefont {Parashar}}]{Alimuddin2020}%
	\BibitemOpen
	\bibfield  {author} {\bibinfo {author} {\bibfnamefont {M.}~\bibnamefont
			{Alimuddin}}, \bibinfo {author} {\bibfnamefont {T.}~\bibnamefont {Guha}},\
		and\ \bibinfo {author} {\bibfnamefont {P.}~\bibnamefont {Parashar}},\
	}\bibfield  {title} {\bibinfo {title} {Structure of passive states and its
			implication in charging quantum batteries},\ }\href
	{https://doi.org/10.1103/PhysRevE.102.022106} {\bibfield  {journal} {\bibinfo
			{journal} {Phys. Rev. E}\ }\textbf {\bibinfo {volume} {102}},\ \bibinfo
		{pages} {022106} (\bibinfo {year} {2020}{\natexlab{a}})}\BibitemShut
	{NoStop}%
	\bibitem [{\citenamefont {Alimuddin}\ \emph
		{et~al.}(2020{\natexlab{b}})\citenamefont {Alimuddin}, \citenamefont {Guha},\
		and\ \citenamefont {Parashar}}]{Alimuddin2020(2)}%
	\BibitemOpen
	\bibfield  {author} {\bibinfo {author} {\bibfnamefont {M.}~\bibnamefont
			{Alimuddin}}, \bibinfo {author} {\bibfnamefont {T.}~\bibnamefont {Guha}},\
		and\ \bibinfo {author} {\bibfnamefont {P.}~\bibnamefont {Parashar}},\
	}\bibfield  {title} {\bibinfo {title} {Independence of work and entropy for
			equal-energetic finite quantum systems: Passive-state energy as an
			entanglement quantifier},\ }\href
	{https://doi.org/10.1103/PhysRevE.102.012145} {\bibfield  {journal} {\bibinfo
			{journal} {Phys. Rev. E}\ }\textbf {\bibinfo {volume} {102}},\ \bibinfo
		{pages} {012145} (\bibinfo {year} {2020}{\natexlab{b}})}\BibitemShut
	{NoStop}%
	\bibitem [{\citenamefont {Touil}\ \emph {et~al.}(2021)\citenamefont {Touil},
		\citenamefont {{\c{C}}akmak},\ and\ \citenamefont {Deffner}}]{Touil2021}%
	\BibitemOpen
	\bibfield  {author} {\bibinfo {author} {\bibfnamefont {A.}~\bibnamefont
			{Touil}}, \bibinfo {author} {\bibfnamefont {B.}~\bibnamefont
			{{\c{C}}akmak}},\ and\ \bibinfo {author} {\bibfnamefont {S.}~\bibnamefont
			{Deffner}},\ }\bibfield  {title} {\bibinfo {title} {Ergotropy from quantum
			and classical correlations},\ }\href
	{https://doi.org/10.1088/1751-8121/ac3eba} {\bibfield  {journal} {\bibinfo
			{journal} {J. Phys. A: Math. Theo.}\ }\textbf {\bibinfo {volume} {55}},\
		\bibinfo {pages} {025301} (\bibinfo {year} {2021})}\BibitemShut {NoStop}%
	\bibitem [{\citenamefont {Francica}(2022)}]{Francica2022}%
	\BibitemOpen
	\bibfield  {author} {\bibinfo {author} {\bibfnamefont {G.}~\bibnamefont
			{Francica}},\ }\bibfield  {title} {\bibinfo {title} {Quantum correlations and
			ergotropy},\ }\href {https://doi.org/10.1103/PhysRevE.105.L052101} {\bibfield
		{journal} {\bibinfo  {journal} {Phys. Rev. E}\ }\textbf {\bibinfo {volume}
			{105}},\ \bibinfo {pages} {L052101} (\bibinfo {year} {2022})}\BibitemShut
	{NoStop}%
	\bibitem [{Sup()}]{Supple}%
	\BibitemOpen
	\bibfield  {title} {\bibinfo {title} {See the supplemental file which
			contains detailed calculations and additional references
			\cite{Nielsen1999,Rungta2001,Qian_2018,amgm}},\ }\href@noop {} {\bibinfo
		{journal} {https://----/--------}\ }\BibitemShut {NoStop}%
	\bibitem [{\citenamefont {Kimble}(2008)}]{Kimble2008}%
	\BibitemOpen
	\bibfield  {journal} {  }\bibfield  {author} {\bibinfo {author} {\bibfnamefont
			{H.~J.}\ \bibnamefont {Kimble}},\ }\bibfield  {title} {\bibinfo {title} {The
			quantum internet},\ }\href {https://doi.org/10.1038/nature07127} {\bibfield
		{journal} {\bibinfo  {journal} {Nature}\ }\textbf {\bibinfo {volume} {453}},\
		\bibinfo {pages} {1023} (\bibinfo {year} {2008})}\BibitemShut {NoStop}%
	\bibitem [{\citenamefont {Wehner}\ \emph {et~al.}(2018)\citenamefont {Wehner},
		\citenamefont {Elkouss},\ and\ \citenamefont {Hanson}}]{Wehner2018}%
	\BibitemOpen
	\bibfield  {author} {\bibinfo {author} {\bibfnamefont {S.}~\bibnamefont
			{Wehner}}, \bibinfo {author} {\bibfnamefont {D.}~\bibnamefont {Elkouss}},\
		and\ \bibinfo {author} {\bibfnamefont {R.}~\bibnamefont {Hanson}},\
	}\bibfield  {title} {\bibinfo {title} {Quantum internet: A vision for the
			road ahead},\ }\href {https://doi.org/10.1126/science.aam9288} {\bibfield
		{journal} {\bibinfo  {journal} {Science}\ }\textbf {\bibinfo {volume}
			{362}},\ \bibinfo {pages} {303} (\bibinfo {year} {2018})}\BibitemShut
	{NoStop}%
	\bibitem [{\citenamefont {S\o{}rensen}\ and\ \citenamefont
		{M\o{}lmer}(2001)}]{Sorensen01}%
	\BibitemOpen
	\bibfield  {author} {\bibinfo {author} {\bibfnamefont {A.~S.}\ \bibnamefont
			{S\o{}rensen}}\ and\ \bibinfo {author} {\bibfnamefont {K.}~\bibnamefont
			{M\o{}lmer}},\ }\bibfield  {title} {\bibinfo {title} {Entanglement and
			extreme spin squeezing},\ }\href
	{https://doi.org/10.1103/PhysRevLett.86.4431} {\bibfield  {journal} {\bibinfo
			{journal} {Phys. Rev. Lett.}\ }\textbf {\bibinfo {volume} {86}},\ \bibinfo
		{pages} {4431} (\bibinfo {year} {2001})}\BibitemShut {NoStop}%
	\bibitem [{\citenamefont {Nielsen}(1999)}]{Nielsen1999}%
	\BibitemOpen
	\bibfield  {author} {\bibinfo {author} {\bibfnamefont {M.~A.}\ \bibnamefont
			{Nielsen}},\ }\bibfield  {title} {\bibinfo {title} {Conditions for a class of
			entanglement transformations},\ }\href
	{https://doi.org/10.1103/PhysRevLett.83.436} {\bibfield  {journal} {\bibinfo
			{journal} {Phys. Rev. Lett.}\ }\textbf {\bibinfo {volume} {83}},\ \bibinfo
		{pages} {436} (\bibinfo {year} {1999})}\BibitemShut {NoStop}%
	\bibitem [{\citenamefont {Rungta}\ \emph {et~al.}(2001)\citenamefont {Rungta},
		\citenamefont {Bu\ifmmode~\check{z}\else \v{z}\fi{}ek}, \citenamefont
		{Caves}, \citenamefont {Hillery},\ and\ \citenamefont
		{Milburn}}]{Rungta2001}%
	\BibitemOpen
	\bibfield  {author} {\bibinfo {author} {\bibfnamefont {P.}~\bibnamefont
			{Rungta}}, \bibinfo {author} {\bibfnamefont {V.}~\bibnamefont
			{Bu\ifmmode~\check{z}\else \v{z}\fi{}ek}}, \bibinfo {author} {\bibfnamefont
			{C.~M.}\ \bibnamefont {Caves}}, \bibinfo {author} {\bibfnamefont
			{M.}~\bibnamefont {Hillery}},\ and\ \bibinfo {author} {\bibfnamefont {G.~J.}\
			\bibnamefont {Milburn}},\ }\bibfield  {title} {\bibinfo {title} {Universal
			state inversion and concurrence in arbitrary dimensions},\ }\href
	{https://doi.org/10.1103/PhysRevA.64.042315} {\bibfield  {journal} {\bibinfo
			{journal} {Phys. Rev. A}\ }\textbf {\bibinfo {volume} {64}},\ \bibinfo
		{pages} {042315} (\bibinfo {year} {2001})}\BibitemShut {NoStop}%
	\bibitem [{\citenamefont {Qian}\ \emph {et~al.}(2018)\citenamefont {Qian},
		\citenamefont {Alonso},\ and\ \citenamefont {Eberly}}]{Qian_2018}%
	\BibitemOpen
	\bibfield  {author} {\bibinfo {author} {\bibfnamefont {X.-F.}\ \bibnamefont
			{Qian}}, \bibinfo {author} {\bibfnamefont {M.~A.}\ \bibnamefont {Alonso}},\
		and\ \bibinfo {author} {\bibfnamefont {J.~H.}\ \bibnamefont {Eberly}},\
	}\bibfield  {title} {\bibinfo {title} {Entanglement polygon inequality in
			qubit systems},\ }\href {https://doi.org/10.1088/1367-2630/aac3be} {\bibfield
		{journal} {\bibinfo  {journal} {New J. Phys.}\ }\textbf {\bibinfo {volume}
			{20}},\ \bibinfo {pages} {063012} (\bibinfo {year} {2018})}\BibitemShut
	{NoStop}%
	\bibitem [{\citenamefont {{Wikipedia contributors}}(2022)}]{amgm}%
	\BibitemOpen
	\bibfield  {author} {\bibinfo {author} {\bibnamefont {{Wikipedia
					contributors}}},\ }\href
	{https://en.wikipedia.org/w/index.php?title=Inequality_of_arithmetic_and_geometric_means&oldid=1079829113}
	{\bibinfo {title} {Inequality of arithmetic and geometric means ---
			{Wikipedia}{,} the free encyclopedia}} (\bibinfo {year} {2022}),\ \bibinfo
	{note} {[Online; accessed 19-April-2022]}\BibitemShut {NoStop}%
\end{thebibliography}
\end{document}